\long\def\@makecaption#1#2{%
  \vskip\abovecaptionskip\footnotesize
  \sbox\@tempboxa{#1. #2}%
  \ifdim \wd\@tempboxa >\hsize
    #1. #2\par
  \else
    \global \@minipagefalse
    \hb@xt@\hsize{\hfil\box\@tempboxa\hfil}%
  \fi
  \vskip\belowcaptionskip}
\newcommand{\todo}[1][\null]{\ensuremath{\clubsuit}}
\newcommand{\noprint}[1]{}
\newcommand{\checked}[1][\null]{\ensuremath{\boldsymbol{\surd}}}
\newtheorem{theorem}{Theorem}
\newtheorem{lemma}[theorem]{Lemma}
\newtheorem{corollary}[theorem]{Corollary}
\newtheorem{proposition}[theorem]{Proposition}
\newtheorem*{problem*}{Problem}
{\theoremstyle{definition}
\newtheorem*{notation*}{Notation}
\newtheorem{definition}[theorem]{Definition}
\newtheorem{example}[theorem]{Example}
\newtheorem{remark}[theorem]{Remark}
\newtheorem*{remark*}{Remark}
}
\newcommand{\p}{\partial}
\newcommand{\ord}{\mathop{\rm ord}\nolimits}
\newcommand{\cancel}{\lefteqn{\smash{\mbox{\large$\diagup$}}}}  
\begin{document}

\par\noindent {\LARGE\bf
Inverse problem on conservation laws
\par}

\vspace{4mm}\par\noindent{\large
Roman O.\ Popovych$^\dag$ and Alexander Bihlo$^\ddag$
}

\vspace{4mm}\par\noindent {\it
$^{\dag}$Wolfgang Pauli Institute, Oskar-Morgenstern-Platz 1, 1090 Vienna, Austria\\
$\phantom{^\ddag}$Institute of Mathematics of NAS of Ukraine, 3 Tereshchenkivska Str., 01601 Kyiv, Ukraine
}

\vspace{2mm}\par\noindent {\it
$^\ddag$Department of Mathematics and Statistics, Memorial University of Newfoundland,\\
$\phantom{^\ddag}$St.\ John's (NL) A1C 5S7, Canada
}

\vspace{4mm}\par\noindent
\textup{E-mail:} rop@imath.kiev.ua, abihlo@mun.ca

\vspace{6mm}\par\noindent\hspace*{5mm}\parbox{150mm}{\small
The explicit formulation of the general inverse problem on conservation laws is presented for the first time. In this problem one aims to derive the general form of systems of differential equations that admit a prescribed set of conservation laws. The particular cases of the inverse problem on first integrals of ordinary differential equations and on conservation laws for evolution equations are studied. We also solve the inverse problem on conservation laws for differential equations admitting an infinite dimensional space of zeroth-order conservation-law characteristics. This particular case is further studied in the context of conservative first-order parameterization schemes for the two-dimensional incompressible Euler equations. We exhaustively classify conservative first-order parameterization schemes for the eddy-vorticity flux that lead to a class of closed, averaged Euler equations possessing generalized circulation, generalized momentum and energy conservation.
\par}\vspace{4mm}

\noprint{
\noindent Keywords: conservation laws of differential equations; conservation-law characteristics; extended Kovalevskaya form; first integrals; conservative parameterization; the vorticity equation

Title    Inverse problem on conservation laws
Authors  Roman O. Popovych and Alexander Bihlo
MSC     35A30 (Primary) 37K05, 76M60, 86A10 (Secondary)          37K05 76M60 86A10

}

\section{Introduction}

Conservation laws play a distinguished role in mathematical physics. They have multiple practical applications in several areas related to differential equations, including integrability theory, asymptotic integrability and the construction of geometric numerical integration schemes.

There is a vast body of literature devoted to the so-called direct problem on conservation laws. Here one is given a system of differential equations and aims to find its space of conservation laws, or at least a subspace of this space singled out by additional constraints, such as a prescribed upper bound for the order of conservation laws to be considered.
Standard tools for the solution of the direct problem on conservation laws include Noether's theorem, different variations of the direct method and techniques based on co-symmetries, see~\cite{anco02Ay,anco02By,blum10Ay,boch99Ay,olve93Ay,popo2008a,vino84Ay,wolf02Ay} and references therein.
For a class of (systems of) differential equations,
i.e., for a family of systems of differential equations that is parameterized by a tuple of arbitrary functions or constant parameters, 
collectively referred to as arbitrary elements \cite{ovsi82Ay,popo10Ay}, 
one should tackle the direct problem on conservation laws as classification problem
since then the space of conservation laws in general depends on the arbitrary elements parameterizing systems of the class.

The direct classification problem on conservation laws is in many aspects similar to the direct (symmetry) group classification of differential equations.
Directly classifying Lie symmetries in a given class of differential equations,
one aims to find, up to equivalence, those systems that admit more symmetries than the most general system from the class \cite{bihl11Dy,ovsi82Ay,popo10Cy,popo10Ay}.
The associated inverse problem on group classification is well investigated too
\cite{Fushchych&Nikitin1994en,Fushchych&Shtelen&Serov1993en,olve93Ay,ovsi82Ay,popo10Cy}.
Here one is given a Lie group and finds those systems of differential equations admitting the selected group as a symmetry group.
This problem deserves attention due to the important role that symmetries play in the mathematical sciences; virtually all central models of modern physics are invariant under wide symmetry groups and hence the classification of systems of differential equations invariant under prescribed Lie groups is a significant direction of the study in the field of group analysis.
The inverse problem of group classification is solved by systematically computing differential invariants with infinitesimal methods~\cite{olve93Ay,ovsi82Ay} 
or with equivariant moving frames~\cite{cheh08Ay,fels98Ay,fels99Ay,olve09Ay}.
Recently the study of inverse symmetry problems was extended to generalized conditional symmetries of evolution systems in (1+1)-dimensions~\cite{serg02a,serg06a}.
Since conservation laws take up a distinguished place in physical theories as well, the inverse problem on conservation laws is also relevant from the physical point of view.

The inverse problem on conservation laws has received less attention so far although it has several important fields of applications as well. 
The intuitive formulation of the inverse problem is the following:

\begin{problem*}
Derive the general form of systems of differential equations with a prescribed set of conservation laws.
\end{problem*}

The inverse problem was considered in~\cite{gali79a} for the particular case of single evolution equations,
where both equations and densities of their conservation laws do not explicitly involve the corresponding independent variables.
As far as we know, this was the first work on the inverse problem on conservation laws in general,
and this is the only paper on the subject in the literature, which has no essential citations.

The inverse problem on conservation laws arises naturally in the construction of physical parameterization schemes (or closure models), e.g., in geophysical fluid mechanics. Parameterization refers to the procedure of including unresolved processes into numerical models of the Earth system. 
Constructing closure models for such processes is one of the main directions of present research in the geosciences~\cite{sten07Ay}. 
The problem is timely as even a continuous increase of resolution in new numerical models cannot resolve all dynamically active scales that govern the time evolution of the Earth system. The construction of sensible closure models for these unresolved scales is hence of major importance to continue improving numerical simulations for weather and climate processes.
In practice this is done by averaging the governing equations of hydro-thermodynamics in space and time, with the length of the averaging intervals being determined by the resolution of the resulting numerical model of these equations. Averaging nonlinear equations introduces unresolved terms that have to be expressed using the resolved values. This is the parameterization step, which by its nature is always an imperfect approximation. Physically, these unresolved terms correspond to processes that cannot be resolved at the particular resolution of the resulting numerical model for the governing equations of hydro-thermodynamics. Mathematically, parameterization is done by replacing an unclosed system of differential equations with a class of closed systems of differential equations. 
The inverse problem on conservation laws arises as the problem of choosing the arbitrary elements of the class in such a manner that the corresponding closed systems inherit, in a certain sense, some of the conservation laws of the original system. Note that this is not an artificial problem. Parameterization schemes in practice are routinely constructed so as to preserve some conservation laws of the initial system of differential equations. One prominent example is the construction of parameterization schemes for the two-dimensional Euler equations. Physical considerations show that the parameterized two-dimensional Euler equations should be energy preserving. Thus, the inverse problem on conservation laws for the two-dimensional Euler equations could be formulated as the problem of finding the class of closed systems arising from parameterizing the averaged two-dimensional Euler equations, such that all members of the class admit at least conservation of energy. We will address this problem in detail in Section~\ref{sec:ConservativeParameterizationVorticityEquation}. For a more detailed introduction to the problem of parameterization and its relation to classes of differential equation, we include a short, self-contained description in Section~\ref{sec:ConservativeParameterizationSchemes}.

The inverse problem on conservation laws is also relevant in the context of geometry-preserv\-ing discretization, i.e., if numerical discretizations are sought that should preserve conservation laws of the associated differential equations. This problem is of high practical relevance in fields that require long time integrations as the preservation of conservation laws is usually mandatory for such applications~\cite{hair06Ay}.

Solving the inverse problem on conservation laws for a class of differential equations may help in the solution of the direct problem for this class.
Methods related to the inverse problem on conservation laws were recently used to classify all conservation laws for the class of (1+1)-dimensional even-order evolution equations~\cite{popo2019a}.

The precise formulation of the inverse problem on conservation laws requires considerable work. In particular, conservation laws are defined as equivalence classes of conserved currents and correspond to equivalence classes of their characteristics. Hence one needs to determine what constitutes the appropriate data for this problem. Thus, the inverse problem on conservation laws is more involved than the related inverse problem of group classification.

The interpretation of the inverse problem on conservation laws may be extended considering not the precise form of individual conservation laws but their properties.

\begin{problem*}
Study properties of systems of differential equations that are implied by prescribed properties of conservation laws of these systems.
\end{problem*}

A number of known results on conservation laws of differential equations can be interpreted  
within the framework of the extended inverse problem on conservation laws, 
and some ``if and only if'' statements simultaneously solve a direct problem and its inversion. 
An example of intertwining the direct and inverse problem on conservation laws in the extended interpretation is given by 
the generalized version~\cite{popo2019b} of Noether's second theorem \cite[Theorem 5.66]{olve93Ay}. 
It states that a system of differential equations possesses 
conservation laws parameterized by an arbitrary smooth functions of all independent variables 
if (the direct problem) and only if (the inverse problem) this system is abnormal.%
\footnote{%
A system of differential equations is called \emph{abnormal} if it has an identically vanishing differential consequence
such that the associated tuple of differential operators acting on system's equations does not vanish on the manifold
defined by the system and its differential consequences in the corresponding infinite-order jet space.
Such systems are also called globally under-determined \cite[p.~170--171]{olve93Ay}.%
}
Abnormal systems are also characterized by the presence of trivial conserved currents associated with nontrivial characteristics.

The problem on reversing Noether's first theorem, 
which was posed by~\cite{take1977a} and later tackled in \cite{ande1994b,ande1995c}, 
can be considered as a joint inverse problem on conservation laws and Lie symmetries. 
Roughly speaking, the statement of Takens' problem is the following. 
Given a system of differential equations~$\mathcal L$ admitting a Lie algebra~$\mathfrak g$ of Lie symmetries 
simultaneously being conservation-law characteristic of~$\mathcal L$, 
which properties the system~$\mathcal L$ and the algebra~$\mathfrak g$ should have to guarantee 
that $\mathcal L$ is the system of Euler--Lagrange equation for a Lagrangian?
In~\cite{ande1994b,ande1995c,take1977a}, the generalized Noether's first theorem 
that includes both Noether's first theorem and its converse was proved 
for scalar second-order partial differential equations, for linear systems or, more generally, 
for systems of equations that are polynomial in the dependent variables and their derivatives
and whose total degree is not greater than the pointwise dimension
of the projection of the algebra~$\mathfrak g$ to the space of independent variables.
Results related to the converse to Noether's theorem were earlier obtained in classical field theories 
\cite{horn1974a,love1971a,love1974a}, 
where the corresponding Lie symmetry group is the infinite-dimensional group 
of all coordinate transformations on the underlying base manifold.
See also \cite{ande2012a,mann2008} and references therein for the further development of the subject.

The classification of conservation laws for a class of differential equations can often be reformulated 
as the solution of a series of related inverse and direct problems on conservation laws. 
Thus, in the course of classification of conservation laws 
of non-Goursat second-order parabolic partial differential equations 
for one function of two independent variables in~\cite{brya1995a}, 
such equations admitting at least one-, two- and three-dimensional spaces of conservation laws were described, 
and such equations with at least four-dimensional spaces of conservation laws were proved to be linearizable. 
In particular, only Monge--Amp\`ere parabolic equations among the above ones were shown to possess conservation laws. 
A similar study for second-order parabolic partial differential equation for one function of three independent variables
was carried out in \cite{clel1997a,clel1997b}. 
Conservation laws of (1+1)-dimensional second-order evolution equations that are invariant with respect to time translations
were classified up to contact transformations in~\cite{brya1995a} as well. 
This classification was extended in \cite{popo2008d} to the entire class of (1+1)-dimensional second-order evolution equations.  
Normal forms of such equations with one- and two-dimensional spaces of conservation laws were derived up to contact transformations therein, 
and equations with conservation-law spaces of greater dimensions were shown to be linearizable. 
Normal forms of a (1+1)-dimensional evolution equation of arbitrary order 
that possesses a conservation law of order not greater than one or a pair of linearly independent zeroth-order conservation laws 
were obtained up to contact equivalence in~\cite{svin1985a} and in~\cite{popo2010a}, respectively; 
see also~\cite{folt2002a} for particular cases of these forms.
An inverse problem on conservation laws naturally arises in the course of looking for integrable systems of evolution equations 
using the concept of formal symmetry~\cite{mikh1991a}.

The further organization of this paper is as follows: 
In Section~\ref{sec:ConservationLaws} we present some essential facts on conservation laws of differential equations.
Section~\ref{sec:InverseProblemOnCLs} is devoted to the correct formulation of the inverse problem on conservation laws.
We show that conservation-law characteristics are in general the more appropriate data for the inverse problem on conservation laws than the entire conserved currents.
We also state the well-posedness of the inverse problem on conservation laws for the wide class of systems of differential equations in the extended Kovalevskaya form,
where the prescribed data are reduced conservation-law characteristics or reduced conservation-law densities.
A particular case, which is the inverse problem on first integrals for single ordinary differential equations, is comprehensively studied in Section~\ref{sec:InverseProblemCLODE}.
In Section~\ref{sec:InverseProblemCLsEvolutionEquations} we review the inverse problem on conservation laws for single (1+1)-dimensional evolution equations. 
%
In Section~\ref{sec:ConservativeParameterizationWithDirectMethods}, for the case of single partial differential equations we solve the inverse problem on conservation laws with characteristics that are arbitrary smooth functions of a single variable (or several variables) or have a related simple structure. Conservation laws of this kind are typical, in particular, for systems of differential equations modeling flows of an incompressible fluid, such as the incompressible Euler equations, the incompressible Navier--Stokes equations and the vorticity equation.
Notions involved in the framework of conservative parameterization and the interpretation of conservative parameterization as an inverse problem on conservation laws are discussed in Section~\ref{sec:ConservativeParameterizationSchemes}. In Section~\ref{sec:ConservativeParameterizationVorticityEquation} we use the results proved in Section~\ref{sec:ConservativeParameterizationWithDirectMethods} to derive the general form of parameterization schemes for the eddy-vorticity flux in the Reynolds averaged two-dimensional incompressible Euler equations that preserve generalized circulation, generalized momenta in $x$- and $y$-directions as well as energy. The research perspective on further required investigations within the inverse problem on conservation laws and the conclusions are found in Section~\ref{sec:ConclusionCP}.

\section{Conservation laws of differential equations}\label{sec:ConservationLaws}

It is appropriate to collect here a few results related to conservation laws of differential equations and their relation to the parameterization problem. A more extensive account of this material can be found e.g.\ in~\cite{blum10Ay,olve93Ay,popo2008a}.

Here and in the following we denote by $\mathcal L$ a system of differential equations. The system $\mathcal L$ consists of $l$ equations of the form $L^\mu(x,u_{(r)})=0$, $\mu=1,\dots,l$, where $x=(x_1,\dots,x_n)$ are the $n$~independent variables, $u=(u^1,\dots,u^m)$ are the $m$~unknown functions (the dependent variables) and the symbol $u_{(r)}$ denotes all derivatives of the functions $u$ with respect to $x$ of order not greater than $r$. By definition $u$ is included in $u_{(r)}$ as derivatives of order zero.
Within the local approach, which is employed in the present paper,
differential equations can be interpreted as algebraic%
\footnote{%
Here the adjective ``algebraic'' is only used in the sense ``non-differential'';
it does not mean that equations are polynomial; cf.\ \cite[Section~2.1]{olve93Ay}.
}
equations in the jet space $\mathrm J^{\infty}(x|u)$,
where both the independent variables~$x$ and the derivatives of~$u$ with respect to~$x$ are assumed as usual variables.
A smooth function~$f$ depending on~$x$ and a finite number of derivatives of~$u$
(i.e., a smooth function on an open set of $\mathrm J^{\infty}(x|u)$ with finite number of arguments and with values in the underlying field)
is called a \emph{differential function} of~$u$, which is denoted by $f=f[u]$.
The order~$\ord f$ of a differential function~$f$ is the highest order of derivatives involved in~$f$,
and, if~$f$ does not depend on derivatives of~$u$, $\ord f=-\infty$.

\begin{definition}\label{def:DefinitionConservedVectorCP}
A \emph{conserved current} of the system $\mathcal L$ is an $n$-tuple of differential functions $F=(F^1[u],\dots,F^n[u])$
the total divergence of which vanishes on the solutions of~$\mathcal L$,
\begin{equation}\label{eq:ConsLawDefinition}
(\mathop{\rm Div}F)\big|_{\mathcal L}=0.
\end{equation}
\end{definition}

\begin{notation*}
In Definition~\ref{def:DefinitionConservedVectorCP} and in what follows,
the total divergence operator is defined by $\mathop{\rm Div}F=\mathrm D_iF^i$,
and $\mathrm D_i=\mathrm D_{x_i}$ denotes the operator of total derivative with respect to the variable~$x_i$.
In other words, $\mathrm D_i=\partial_i+u^a_{\alpha+\delta_i}\partial_{u^a_\alpha}$,
where $\alpha=(\alpha_1,\dots,\alpha_n)$ is an arbitrary multi-index, $\alpha_i\in\mathbb N_0=\mathbb N\cup\{0\}$,
the index~$i$ runs from 1 to~$n$, the index~$a$ runs from 1 to~$m$,
the variable~$u^a_{\alpha}$ of the jet space $\mathrm J^{\infty}(x|u)$ is identified with the derivative
$\partial^{|\alpha|}u^a/\partial x_1^{\alpha_1}\cdots \partial x_n^{\alpha_n}$,
$|\alpha|=\alpha_1+\cdots+\alpha_n$, and $\delta_i$ is the multi-index with zeros everywhere except on the $i$th entry, which equals $1$.
$\partial_i:=\partial/\partial x_i$ and $\partial_{u^a_\alpha}:=\partial/\partial u^a_\alpha$.
The summation convention over repeated indices is used.
With $(\dots)\big|_{\mathcal L}=0$ we mean that the corresponding expression only vanishes for solutions of the system~$\mathcal L$.
\end{notation*}

The validity of~\eqref{eq:ConsLawDefinition} on the solution set of~$\mathcal L$ is significant for relating the conserved current $F$ to~$\mathcal L$.
A conserved current $F$ is \emph{trivial} if it is represented as $F=\hat F+\check F$, where $\hat F$ and $\check F$ are $n$-tuples of differential functions such that the components of $\hat F$ vanish on the solutions of $\mathcal L$ and $\check F$ is a null divergence.
By null divergence it is meant that $\mathop{\rm Div}\check F=0$ holds unrestricted of the system $\mathcal L$.

Two conserved currents $F$ and $F'$ are called \emph{equivalent} if their difference $F-F'$ is a trivial conserved current.
It is obvious that for any system~$\mathcal L$ its set of conserved currents, denoted by $\mathrm{CC}{(\mathcal L)}$, is a linear space. Likewise, the subset of trivial conserved currents, denoted by $\mathrm{CC}_0{(\mathcal L)}$, is a linear subspace of $\mathrm{CC}{(\mathcal L)}$. The set of equivalence classes of $\mathrm{CC}{(\mathcal L)}$ with respect to the above equivalence relation on conserved currents is the quotient space $\mathrm{CC}{(\mathcal L)}/\mathrm{CC}_0{(\mathcal L)}$, which is denoted by~$\mathrm{CL}(\mathcal L)$.

\begin{definition}
 The linear space $\mathrm{CL}{(\mathcal L)}$ is called the \emph{space of (local) conservation laws} of the system~$\mathcal L$.
 Its elements are called \emph{(local) conservation laws} of the system~$\mathcal L$.
\end{definition}

In other words, equivalent conserved currents correspond to the same conservation law.

If the system $\mathcal L$ is totally nondegenerate~\cite{olve03Ay} or weakly totally nondegenerate~\cite{kunz08a,popo2019b},
then it is possible to use the Hadamard lemma and `integration by parts' to represent the definition of conserved current~\eqref{eq:ConsLawDefinition} in the form
\begin{equation}\label{eq:CharacteristicDefinition}
 \mathop{\rm Div} F=\lambda^1 L^1+\dots+\lambda^l L^l,
\end{equation}
where the initial conserved current~$F$ should be replaced by one differing from~$F$ in a trivial conserved current,
$F+\hat F\to F$, where the components of $\hat F$ vanish on the solutions of $\mathcal L$.

\begin{definition}
The $l$-tuple of differential functions $\lambda=(\lambda^1,\dots,\lambda^l)$ is called the \emph{characteristic} and Eq.~\eqref{eq:CharacteristicDefinition} is the \emph{characteristic form} of the conservation law corresponding to the conserved current $F$.
\end{definition}

 The \emph{Euler operator} $\mathsf E=(\mathsf E^1,\dots,\mathsf E^m)$ is the $m$-tuple of differential operators defined by
 \[
  \mathsf E^a=(-\mathrm D)^\alpha\partial_{u^a_\alpha},\quad a=1,\dots,m,
\quad\mbox{where}\quad (-\mathrm D)^\alpha= (-\mathrm D_1)^{\alpha_1}\cdots(-\mathrm D_n)^{\alpha_n}.
 \]
It is well known~\cite[Theorem 4.7]{olve93Ay} that a differential function $f$ is (locally) a total divergence, meaning that $f=\mathop{\rm Div}F$ for some $n$-tuple of differential functions~$F$, if and only if it is annihilated by the Euler operator, $\mathsf E^af=0$. In other words, $\mathop{\rm im}\mathop{\rm Div}=\mathrm{ker}\,\mathsf E$ (locally). Using this property of the Euler operator and applying it to the characteristic form of conservation laws~\eqref{eq:CharacteristicDefinition}, one obtains
\begin{equation}\label{eq:DeterminingEquationsCharacteristics}
   \mathsf E^a(\lambda^1 L^1+\dots+\lambda^l L^l)=0,
\end{equation}
which is both a necessary and sufficient condition for the tuple $\lambda$ to be a conservation-law characteristic of the system~$\mathcal L$. The characteristic approach to conservation laws is particularly suitable for the automatic computation of low-order conservation laws for systems of differential equations in the extended Kovalevskaya form using computer algebra systems, see e.g.~\cite{chev07Ay,wolf02Ay}.

 The notion of triviality extends to conservation-law characteristics as well. A characteristic~$\lambda$ is called trivial if it vanishes for all solutions of $\mathcal L$. The existence of trivial characteristics makes it necessary to introduce equivalent characteristics. If the difference $\tilde\lambda-\lambda$ of characteristics $\lambda$ and $\tilde\lambda$ is a trivial characteristic, then the characteristics~$\lambda$ and~$\tilde\lambda$ are called \emph{equivalent}. Similar as for conserved currents the set of characteristics, denoted by $\mathrm{Ch}(\mathcal L)$, is a linear space with the subset $\mathrm{Ch}_0(\mathcal L)$ of trivial characteristics being a linear subspace thereof. For a normal totally nondegenerate system~$\mathcal L$, the characteristic form of conservation laws~\eqref{eq:CharacteristicDefinition} then induces a one-to-one correspondence between the factor spaces $\mathrm{CC}{(\mathcal L)}/\mathrm{CC}_0{(\mathcal L)}$ and $\mathrm{Ch}(\mathcal L)/\mathrm{Ch}_0(\mathcal L)$. This correspondence forms the basis of both Noether's theorem and the direct method for construction of conservation laws as found in~\cite{anco02Ay,anco02By,blum10Ay,boch99Ay,olve93Ay,vino84Ay}.

\section{Statement of inverse problem on conservation laws}\label{sec:InverseProblemOnCLs}

Having introduced some of the necessary background on conservation laws we now proceed with the proper statement of the inverse problem on conservation laws. This statement appears considerably more difficult than the statement of the analogous inverse problem of group classification, which has been the subject of extensive investigations, see~\cite{bihl11Fy,olve93Ay,ovsi82Ay,popo10Cy} for discussions and some physical applications.

The first step for the formulation of the inverse problem on conservation laws is to determine which data of conservation laws should be used.
Here we justify why the appropriate choice is to invoke densities and characteristics but not entire conserved currents
and, moreover, why the starting point of the consideration rests, in most cases, on a generalization of the Kovalevskaya form for systems of differential equations.

\medskip\par\noindent
{\bf Conserved currents.}
As conservation laws are equivalence classes of conserved currents, it seems at first sight that a fixed conserved current might be chosen as the appropriate datum for the corresponding conservation law. There are, however, at least two counter-arguments against doing this.

The equivalence of conserved currents is of complex structure involving two kinds of trivial conserved currents.
In contrast to characteristics, it is not so obvious to determine whether a conserved current is trivial or not even for a fixed system of differential equations.
In fact this determination reduces, at least implicitly, to the consideration of related characteristics.
For classes of systems of differential equations, the situation is even more complicated.
Conserved currents may be trivial for some systems from the class and nontrivial for other systems from the same class.
In other words, it is then difficult to test whether such data are really independent to each other and thus whether the problem with such data is well posed.

The use of entire conserved currents in the discussed framework is also prevented by the fact
that not all $n$-tuples of differential functions are suitable candidates for conserved currents 
if the number of independent variables~$n$ is greater than~one.%
\footnote{%
For example, for a system of differential equations in the extended Kovalevskaya form~\eqref{Eq:ExtendedCauchyKovalevskayaForm} with $n>1$,
any \mbox{$n$-tuple} of differential functions with vanishing $n$th component is
either a trivial conserved current or not a conserved current at all.
}
In the case of multiple conservation laws for single differential equations
the conditions arising for equations from different candidates for conserved currents may be inconsistent.
Moreover, fixing conserved currents considerably restricts the form of the corresponding system, especially in the case of single equations, see Remark~\ref{rem:OnConservedCurrentsForEvolutionEquations}.
Even minor variations within the form of systems from a class lead to varying the form of conserved currents.

The above counter-arguments lead to the idea that a part of the components can be used in the statement of the inverse problem on conservation laws
instead of entire conserved currents.

\medskip\par\noindent
{\bf Characteristics.}
Another class of objects related to conservation laws and describing them consists of conservation-law characteristics.
There are several \emph{arguments} for working with characteristics  rather than entire conserved currents.

Thus, conservation-law characteristics are equivalent if and only if they coincide on solutions of the considered system of differential equations, which is much simpler than the equivalence relation for conserved currents. This is why the equivalence of characteristics can be easily verified, even for classes of differential equations.

Another argument is that characteristics are, roughly speaking, more stable under varying systems within a class of differential equations. While conserved currents change under any modification of the corresponding system of differential equations, there is a chance that some characteristics are not modified under such changes, thus being the more appropriate object for the study of the inverse problem.

Finally, for variational systems characteristics of variational symmetries are characteristics of conservation laws. Therefore, the inverse problem on conservation laws has an immediate connection with the inverse group classification problem for such systems, which is directly exploitable if the problem is formulated in terms of characteristics.

While there are also some \emph{obstacles} for working with characteristics, they are not too principal and can be controlled.

In particular, characteristics can be defined only for systems that are weakly totally nondegenerate~\cite{olve93Ay,popo2019b}, but this property is quite natural and the class of weakly totally nondegenerate systems is quite wide.

For characteristics to be perfect initial data for the inverse problem on conservation laws, the equivalences of conserved currents and characteristics should be compatible, i.e., for systems of differential equations under consideration we should have a one-to-one correspondence between conservation laws and equivalence classes of characteristics. However, this correspondence was only proven for normal totally nondegenerate systems~\cite[Theorem 4.26]{olve93Ay}, see also~\cite{mart79a} for the first formulation of this result. The above one-to-one correspondence does not exist, e.g.\ for abnormal systems. In view of the generalized second Noether theorem, each abnormal system admits nontrivial characteristics that correspond to trivial conserved currents~\cite{popo2019b}, see also \mbox{\cite[p.~345]{olve93Ay}} for the case of Euler--Lagrange equations. At the same time, such lack of one-to-one correspondence is again not principal for the purpose of the inverse problem on conservation laws. What is essential is that trivial characteristics should be associated with trivial conserved currents, or, in other words, equivalent characteristics correspond to the same conservation law. The existence of systems without this property is still an open problem. Such systems should be quite artificial and they do not arise in practical applications. Hence, the correspondence between characteristics and conservation laws makes no serious complication for the inverse problem on conservation laws.

The main downside of working with conservation-law characteristics rather than with conserved currents is that characteristics are not invariant under \emph{system equivalence}. Recall that two systems of differential equations are equivalent if they are defined on the same space of independent and dependent variables and can be obtained from each other using the following operations:
 \begin{itemize}\itemsep=0ex
  \item Recombining equations with coefficients that are differential functions and constitute a nondegenerate matrix (giving rise to \emph{linearly equivalent systems})
  \item Supplementing a system with its differential consequences or excluding equations that are differential consequences of other equations.
 \end{itemize}
It is immediately obvious that equivalent systems have the same solution set and hence the same set of conserved currents and the same sets of various kinds of symmetries, but this is not the case for both co-symmetries and conservation-law characteristics.
Fortunately, the system equivalence is not too relevant for the inverse problem on conservation laws. If a proper class of systems of differential equations without gauge equivalence is chosen, each set of equivalent systems has exactly one representative in the class~\cite{lisl92Ay,popo10Ay}. Even if the gauge equivalence is nontrivial, it can be made inessential since in any case the number of system equations as well as the number of independent and dependent variables are fixed within the class. This is why characteristics are in any case the appropriate initial data for the inverse problem on conservation~laws.

One more obstacle for working with conservation-law characteristics is created by the process of confining characteristics to solutions of the considered systems. In contrast to symmetries, co-symmetries and conserved currents, if a tuple of differential functions coincides with a conservation-law characteristic for all solutions of a system~$\mathcal L$, it does not mean that this tuple itself is a conservation-law characteristic of~$\mathcal L$. In particular, the exclusion of principal derivatives of~$\mathcal L$ from a conservation-law characteristic of~$\mathcal L$ may give a co-symmetry that is not a conservation-law characteristic of~$\mathcal L$.

\medskip\par\noindent
{\bf Form for systems.}
The starting point for posing a particular inverse problem on conservation laws is to specify the number of independent and dependent variables and the general form of the system including the number of equations to be considered. In order to overcome the aforementioned obstacles with conserved currents and characteristics, a good (and, in the most general settings, unique) choice for the general form is a Kovalevskaya form. At the same time, requiring systems of differential equations to be of the Kovalevskaya form \cite[pp.~162--163]{olve93Ay} is too restrictive for the inverse problem on conservation laws. The following more general form first introduced in~\cite{chev17a} can be used in this framework.

\begin{definition}
 A system of partial differential equations $\mathcal L$ is of \emph{extended Kovalevskaya form} if its equations can be written as
 \begin{equation}\label{Eq:ExtendedCauchyKovalevskayaForm}
   u^a_{r_a\delta_n}:=\frac{\p^{r_a}u^a}{\p x^{r_a}_n}=H^j(x,\widetilde{u_{(r)}}),\quad a=1,\dots,m.
 \end{equation}
 where $0\leqslant r_a\leqslant r$ and $\widetilde{u_{(r)}}$ denotes all derivatives of the functions $u$ with respect to $x$ up to order~$r$, where each $u^b$ is differentiated with respect to~$x_n$ at most $r_b-1$ times, $b=1,\dots,m$.
\end{definition}

In terms of Riquier's compatibility theory, all derivatives appearing on the left-hand side in~\eqref{Eq:ExtendedCauchyKovalevskayaForm} and their differential consequences are called {\it principal derivatives} for~$\mathcal L$. The other derivatives are called {\it parametric derivatives} of~$\mathcal L$.

The extension of the Kovalevskaya form is necessary for a few reasons.%
\footnote{%
The extended Kovalevskaya form is much less restrictive than the usual Kovalevskaya form, since two conditions are weakened. Firstly, zeroth-order derivatives may appear in the left-hand side of equations. Secondly, there are no restrictions on the order of derivatives with respect to $x_1,\dots,x_{n-1}$ depending on $r_a$'s.
Systems of the form~\eqref{Eq:ExtendedCauchyKovalevskayaForm} with positive $r_a$'s are called \emph{normal systems} in~\cite{mart79a}
and \emph{Cauchy--Kowalevsky systems in a weak sense} (resp., \emph{pseudo CK systems} in short) in~\cite{tsuj82a}.
}
Physically relevant equations such as the (1+2)-dimensional linear heat equation $u_t=u_{xx}+u_{yy}$ are not conveniently represented in the standard Kovalevskaya form as the representation $u_{xx}=u_t-u_{yy}$ is not natural from the physical point of view since the density should then be associated with~$x$ instead of~$t$. Also, the extended Kovalevskaya form is a natural representation for potential systems corresponding to conservation laws constructed with conservation laws of (1+1)-dimensional differential equations.

Below we collect all facts on that the extended Kovalevskaya form allows one to select the data for the inverse problem on conservation laws in a proper way.

Using a part of conserved current components as initial data for the inverse problem on conservation laws instead of entire conserved currents naturally singles out the independent variables associated with these components and, in turn, the derivatives with respect to these variables in the corresponding system. This aligns well with the extended Kovalevskaya form~\eqref{Eq:ExtendedCauchyKovalevskayaForm}. The distinguished independent variable is~$x_n$.
Each conserved current of~\eqref{Eq:ExtendedCauchyKovalevskayaForm} is naturally split into the density (which is the component associated with the variable $x_n$) and the flux (consisting of components associated with the variables $x_1$, \dots, $x_{n-1}$), and the density completely defines the corresponding conservation law. This means that only densities are the proper parts of conserved currents as initial data. If one tries to use more components of a conserved current simultaneously then the problem may become inconsistent.

The notion of characteristics can be introduced only for systems of special kind, i.e., for weakly totally nondegenerate systems~\cite{popo2019b}, and totally locally solvable systems of extended Kovalevskaya form are definitely of this kind. The extended Kovalevskaya form is particularly good for confining conserved currents and conservation-law characteristics to the solution set of the system under consideration. Given a system in extended Kovalevskaya form, it is easy to derive expressions for all principal derivatives in terms of parametric derivatives since no nontrivial differential consequences arise. Conserved currents and characteristics depending only on the independent variables and parametric derivatives are called \emph{reduced conserved currents} and \emph{reduced characteristics}, respectively. It is obvious that for a general system of differential equations~$\mathcal L$, each tuple of differential functions that coincides with a conserved current of~$\mathcal L$ on solutions of~$\mathcal L$ is also a conserved current of~$\mathcal L$. Therefore, confining conserved currents to the solution set of the corresponding system is trivial. This is, however, not the case for conservation-law characteristics of general systems. At the same time, according to Lemma~3 in~\cite{mart79a} and the extended result in~\cite{popo2019b}, for a system in extended Kovalevskaya form, each conservation law admits a reduced characteristic, which allows one to neglect the triviality of conservation-law characteristics and gives a simple criterion on the triviality of conserved currents: A conserved current is trivial if and only if the corresponding reduced characteristic vanishes.
Moreover, given a system of partial differential equations $\mathcal L$ in the extended Kovalevskaya form~\eqref{Eq:ExtendedCauchyKovalevskayaForm}
and a reduced conserved current~$F=(F^1,\dots,F^n)$ of this system (and thus the corresponding reduced density is~$F^n$),
the components of the associated reduced characteristic are defined by (\cite{popo2019b}, see also~\cite[Proposition 7.41]{tsuj82a})
\[
\lambda^a=0\ \ \mbox{if}\ \ r_a=0 \quad\mbox{and}\quad
\lambda^a=\sum_{\beta\colon\beta_n=r_a-1}(-D)^{\beta-(r_a-1)\delta_n}\frac{\p F^n}{\p u^a_\beta}
=\mathsf E^{a,(r_a-1)\delta_n}F^n\ \ \mbox{if}\ \ r_a\geqslant1.
\]
Here $\mathsf E^{a,\alpha}$ is the higher-order Euler operator that corresponds to the derivative~$u^a_\alpha$
and which acts on an arbitrary differential function $P[u]$ according to
\[
\mathsf E^{a,\alpha}P=\sum_{\beta\geqslant\alpha}\frac{\beta!}{\alpha!(\beta-\alpha)!}(-D)^{\beta-\alpha}\frac{\p P}{\p u^a_\beta}.
\]
Recall also that the condition $\beta\geqslant\alpha$ for the multi-indices $\alpha=(\alpha_1,\dots,\alpha_n)$ and $\beta=(\beta_1,\dots,\beta_n)$
means that $\beta_1\geqslant\alpha_1$, \dots, $\beta_n\geqslant\alpha_n$,
and $\alpha!:=\alpha_1!\cdots\alpha_n!$ for any multi-index $\alpha$.
The existence of the explicit relation between reduced densities of a conservation law and its reduced characteristic
allows one to reformulate the above criterion on the triviality of conserved currents in terms of densities.

\medskip\par\noindent
{\bf Statement of the problem.}
In view of the above discussion, conservation-law characteristics are the better data compared to conserved currents. This is why, the empiric formulation of the inverse problem on conservation laws, which is given in the introduction, can be improved to:

\begin{problem*}
Derive the form of systems of differential equations with a prescribed set of con\-ser\-va\-tion-law characteristics.
\end{problem*}

In the most general setting, this formulation might still not be well posed and therefore it is not absolutely rigorous. This is why additional restrictions for the objects involved are necessary, e.g., on the form of systems, on the characteristic order or on the number of independent variables.

In particular, if we consider conservation-law characteristics of any order, then we should restrict the allowed form of systems in the statement of the problem by considering only systems in the extended Kovalevskaya form. In this case, one more proper kind of data, namely densities, exists. The corresponding specific formulation of the inverse problem on conservation laws is the following:

\begin{problem*}
Given a class of systems of differential equations in the extended Kovalevskaya form, find its subclass of systems admitting a prescribed set of reduced conservation-law characteristics (resp.\ reduced densities).
\end{problem*}

It often happens in practical applications, that the most important conservation laws possess low-order, or even zeroth-order, characteristics. For such characteristics the confining to the solution sets of the corresponding systems is trivial. Hence, it becomes inessential whether systems are of the extended Kovalevskaya form, cf.\ Section~\ref{sec:ConservativeParameterizationVorticityEquation}. We then have the following particular formulation of the inverse problem on conservation laws:

\begin{problem*}
Derive the form of systems of differential equations with prescribed set of low-order (in particular zeroth-order) conservation-law characteristics.
\end{problem*}

One more specific case is associated with the restriction of the number of independent variables to one, i.e., with ordinary differential equations. The corresponding inverse problem is presented in Section~\ref{sec:InverseProblemCLODE}.

The above particular rigorous formulations cannot cover all situations that arise in applications. For example, a class may consist of abnormal systems such as Maxwell's equations or Einstein field equations, which do not possess a representation in the extended Kovalevskaya form. Sometimes higher-order characteristics may arise and thus should be required to be admitted by differential equations. The formulation of the inverse problem on conservation laws in such situations needs more accurate investigations in order to guarantee that it is well posed.

\medskip\par\noindent
{\bf Tools for solution.}
Basic tools in the study of conservation laws (both in the formulation of the theory as well as in practical computations) are the `integration by parts'~\cite[p.~266]{olve93Ay}, which is just a specific version of the product rule, and its extension to the Lagrange identity, see e.g.~\cite[p.~67]{hart02a} for the case of linear ordinary differential operators. In particular, these tools are used in the definition of conservation-law characteristics~\cite[p.~266]{olve93Ay} and the proof of the fundamental theorem on correspondence between characteristics and conserved currents~\cite[Theorem 4.26]{olve93Ay}. This is why it is natural that these tools should also be applied for solving particular inverse problems on conservation laws.

For convenience we present here the form of the Lagrange identity for differential operators in total derivatives. Let $\mathsf P$ be a linear differential operator in total derivatives, $\mathsf P=\psi^\alpha[u]\mathrm D^\alpha$, where $\mathrm D^\alpha=\mathrm D^{\alpha_1}\cdots\mathrm D^{\alpha_n}$ and only a finite number of the coefficients~$\psi^\alpha$ are nonzero. Denote by~$\mathsf P^\dag$ its formally adjoint operator, $\mathsf P^\dag f=(-\mathrm D)^\alpha (\psi^\alpha f)$ for any differential function $f=f[u]$. The \emph{Lagrange identity} (also called \emph{generalized Green's formula} \cite[Section~12]{zhar92a})
implies that for any differential functions~$f$ and~$g$ of $u$,
 \begin{equation}\label{eq:LagrangeIdentity}
  f\,\mathsf Pg-g\,\mathsf P^\dag f\in\mathop{\rm im}\mathop{\rm Div},
 \end{equation}
where $\mathop{\rm im}\mathop{\rm Div}$ denotes the image of the total divergence operator.
In other words, there is a tuple of differential functions~$F=(F^1,\dots, F^n)$ such that $f\,\mathsf Pg-g\,\mathsf P^\dag f=\mathop{\rm Div} F$.
Moreover, the tuple~$F$ can be explicitly represented in terms of total derivatives of $f$, $g$ and $\psi^\alpha$ \cite[Proposition~A.4]{zhar92a}.

Another tool for the solution of the inverse problem on conservation laws is provided by equation~\eqref{eq:DeterminingEquationsCharacteristics}. If $\lambda$ is a prescribed characteristic, then equation~\eqref{eq:DeterminingEquationsCharacteristics} implies a system of defining equations for~$L$. This tool is efficient when the number of prescribed characteristics is sufficiently large, cf.\ Section~\ref{sec:ConservativeParameterizationWithDirectMethods}.

\section[The inverse problem on first integrals for ordinary differential equations]
{The inverse problem on first integrals\\ for ordinary differential equations}\label{sec:InverseProblemCLODE}

\looseness=-1
The case of ordinary differential equations, where the number of independent variables equals one, is quite specific from various points of view including conservation laws. This is why particular terminology is used for related notions. Thus, conserved currents which are singletons and analogous to densities in this case, are called constants of the motion or first integrals and the associated characteristics are called integrating factors.
Given a system~$\mathcal L$ of ordinary differential equations that admits a representation in the Kovalevskaya form,%
\footnote{
For systems of ordinary differential equations, the Kovalevskaya form is called the \emph{canonical form}.
The particular case of the canonical form, where all equations are of the first order, is called the normal Cauchy form or, shortly, the normal form.
}
it is natural to assume that its first integrals involve only derivatives of orders less than orders of the corresponding leading derivatives
and, if nontrivial, necessarily depend on subleading derivatives.
Since in the case of one independent variable null divergences are exhausted by constants, the equivalence of conserved currents then degenerates for the system~$\mathcal L$ to adding arbitrary constants to first integrals and becomes quite inessential.
As a result, the inverse problem on conservation laws in the class of systems of ordinary differential equations reducing to the canonical form can be interpreted as the inverse problem on first integrals,
which is stated in the above particular terminology as follows:

\begin{problem*}
Find the general form of systems of ordinary differential equations that admit a prescribed set of integrating factors.
\end{problem*}

Here we consider the simplest case of a single ordinary differential equation for a single unknown function
since in this case the solution of the inverse problem on first integrals possesses an especially nice representation.
Let
\[
 L[u]=0
\]
be a single ordinary differential equation in the independent variable~$x_1=:t$ and the single dependent variable~$u$. 
We aim to find the functional form of~$L$ that admits~$p$ first integrals $I^1$,~\dots,~$I^p$, 
where the associated integrating factors $\lambda^1$,~\dots,~$\lambda^p$ are totally linearly independent.

The case of $p=1$, i.e., choosing a single integrating factor, is trivial. From the characteristic form $\mathrm D_t I^1=\lambda^1 L$ for the first integral $I^1$ corresponding to the integrating factor~$\lambda^1$, we can resolve $L=\mathrm D_tI^1/\lambda^1$ provided that $\lambda^1\ne0$. This formula, where $I^1$ runs through the set of differential functions, gives the general form of the left hand sides of single ordinary differential equations that admit $\lambda^1$ as an integrating factor.

For more than one first integral, deriving corresponding formulas is much more involved.
In particular, we need to extend the notion of Darboux transformation~\cite[p.~9]{matv91a} to differential functions.
By $\mathrm W(f^1,\dots,f^k)$ we denote the Wronskian of differential functions~$f^1[u]$, \dots, $f^k[u]$ in the total derivative~$\mathrm D_t$, i.e.,
$\mathrm W(f^1,\dots,f^k)=\det(\mathrm D_t^{l'-1}f^l)_{l,l'=1,\dots,k}$.

\begin{definition}
Given totally linearly independent differential functions~$f^1[u]$, \dots, $f^k[u]$, 
whose Wronskian does not vanish, the differential function
 \[
  \mathrm{DT}[f^1,\dots,f^k]G=\frac{\mathrm W(f^1,\dots,f^k,G)}{\mathrm W(f^1,\dots,f^k)},
 \]
is called the \emph{Darboux transformation} of a differential function~$G[u]$ with respect to~$f^1$, \dots, $f^k$.
\end{definition}

The Darboux transformation $\mathrm{DT}[f^1,\dots,f^k]G$ is the result of the action of a differential operator in the total derivative~$\mathrm D_t$ on the differential function~$G$.
We call this operator the \emph{Darboux operator} with respect to~$f^1$, \dots, $f^k$ and denote it by~$\mathrm{DT}[f^1,\dots,f^k]$. Its formally adjoint operator is denoted by~$\mathrm{DT}[f^1,\dots,f^k]^\dag$.

The following theorem describes the form of a single ordinary differential equation admitting $p$ integrating factors with nonvanishing Wronskian.

\begin{theorem}\label{thm:OnFormOfODE}
A single ordinary differential equation~$L[u]=0$ admits 
$p$ integrating factors $\lambda^1$,~\dots,~$\lambda^p$ with $\mathrm W(\lambda^1,\dots,\lambda^p)\ne0$
if and only if the left hand side~$L$ is of the form
\begin{equation}\label{eq:ODEFormInverseCL}
L=\mathrm{DT}[\lambda^1,\dots,\lambda^p]^\dag H,
\end{equation}
where~$H$ is a differential function of $u$.
\end{theorem}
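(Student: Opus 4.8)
The plan is to recast Theorem~\ref{thm:OnFormOfODE} as an equivalence about the differential operator $\mathsf P:=\mathrm{DT}[\lambda^1,\dots,\lambda^p]$ in the total derivative~$\mathrm D_t$ and then to prove the nontrivial direction by induction on~$p$, the only analytic tools being the Lagrange identity~\eqref{eq:LagrangeIdentity} (with $n=1$, so that $\mathop{\rm Div}=\mathrm D_t$) and the standard Wronskian identities for operators in~$\mathrm D_t$. I would first note that for a single ODE the characteristic form $\mathrm D_tI^s=\lambda^sL$ holds identically as an equality of differential functions, so that ``$L$ admits~$\lambda^s$ as an integrating factor'' means precisely $\lambda^sL\in\mathop{\rm im}\mathrm D_t$, and confining to the solution set plays no role. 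Because the ring of differential functions of~$u$ in~$t$ is an integral domain whose $\mathrm D_t$-constants are exactly the base field, for $L\ne0$ the first integrals~$I^s$ attached to totally linearly independent~$\lambda^s$ are automatically nontrivial with linearly independent classes in~$\mathrm{CL}$, and conversely totally linearly independent~$I^s$ force the~$\lambda^s$ to be totally linearly independent, which is anyway the condition under which~$\mathsf P$ is defined. Recording also that~$\mathsf P$ is the unique monic order-$p$ operator in~$\mathrm D_t$ with $\mathsf P\lambda^s=0$ for all~$s$ (a repeated row annihilates its defining Wronskian), the theorem reduces to the claim that, for totally linearly independent $\lambda^1,\dots,\lambda^p$ and nonzero~$L$,
\[
\lambda^sL\in\mathop{\rm im}\mathrm D_t\quad(s=1,\dots,p)\qquad\Longleftrightarrow\qquad L=\mathsf P^\dag H
\]
for some differential function~$H$---this last being exactly~\eqref{eq:ODEFormInverseCL}.

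The implication ``$\Leftarrow$'' is a one-liner: applying~\eqref{eq:LagrangeIdentity} with the operator~$\mathsf P$, $f=H$ and $g=\lambda^s$ gives $H\,\mathsf P\lambda^s-\lambda^s\,\mathsf P^\dag H\in\mathop{\rm im}\mathrm D_t$, and since $\mathsf P\lambda^s=0$ we get $\lambda^sL=\lambda^s\mathsf P^\dag H\in\mathop{\rm im}\mathrm D_t$; any differential function~$I^s$ with $\mathrm D_tI^s=\lambda^sL$ is then a first integral, being constant on the solution set of~$L[u]=0$. For ``$\Rightarrow$'' I would induct on~$p$; the base case $p=0$ is vacuous since then $\mathsf P=\mathrm{id}$. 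For the step, set $\mathsf P':=\mathrm{DT}[\lambda^1,\dots,\lambda^{p-1}]$ and $\mu:=\mathsf P'\lambda^p=\mathrm W(\lambda^1,\dots,\lambda^p)/\mathrm W(\lambda^1,\dots,\lambda^{p-1})\ne0$. The structural identity I rely on is the one-step Crum--Darboux factorization $\mathsf P=\mathrm{DT}[\mu]\circ\mathsf P'$: the right-hand side is monic of order~$p$ and kills $\lambda^1,\dots,\lambda^{p-1}$ through~$\mathsf P'$ and kills~$\lambda^p$ because $\mathrm{DT}[\mu]\mu=0$, hence equals~$\mathsf P$ by the uniqueness just noted.

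Granting the factorization, the induction closes quickly. The inductive hypothesis applied to $\lambda^1,\dots,\lambda^{p-1}$ gives $L=(\mathsf P')^\dag H'$ for some~$H'$; feeding this into~\eqref{eq:LagrangeIdentity} for~$\mathsf P'$ with $f=H'$, $g=\lambda^p$ yields $\mu H'-\lambda^pL\in\mathop{\rm im}\mathrm D_t$, and combining with $\lambda^pL\in\mathop{\rm im}\mathrm D_t$ gives $\mu H'\in\mathop{\rm im}\mathrm D_t$, say $\mathrm D_tJ=\mu H'$. Taking adjoints of the factorization, $\mathsf P^\dag=(\mathsf P')^\dag\circ\mathrm{DT}[\mu]^\dag$ with $\mathrm{DT}[\mu]^\dag G=-\mu^{-1}\mathrm D_t(\mu G)$ (the adjoint of~$\mathrm D_t$ conjugated by multiplication by~$\mu$), so with $H:=-J/\mu$ one gets $\mathsf P^\dag H=(\mathsf P')^\dag\bigl(-\mu^{-1}\mathrm D_t(\mu H)\bigr)=(\mathsf P')^\dag\bigl(\mu^{-1}\mathrm D_tJ\bigr)=(\mathsf P')^\dag H'=L$, completing the induction.

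I expect the one genuinely delicate point to be the factorization $\mathsf P=\mathrm{DT}[\mu]\circ\mathsf P'$, i.e.\ carrying over the classical fact that a monic linear ODE operator is determined by its kernel to operators whose coefficients are differential functions. This is legitimate precisely because the $\mathrm D_t$-constants of that differential ring reduce to the base field, so the usual Wronskian and kernel-dimension arguments apply verbatim; everything else is the product rule in the guise of the Lagrange identity plus routine adjoint bookkeeping.
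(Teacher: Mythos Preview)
Your argument is correct, but it proceeds along a genuinely different route from the paper's. For the ``if'' direction you both do the same thing: multiply by~$\lambda^s$, use $\mathsf P\lambda^s=0$, and invoke the Lagrange identity. For the ``only if'' direction, the paper does not induct. Instead it introduces explicitly the adjoint tuple $(\varphi^1,\dots,\varphi^p)$ to $(\lambda^1,\dots,\lambda^p)$ via Wronskian cofactors, \emph{defines} $H:=-\varphi^sI^s$, and then, by repeatedly applying~$\mathrm D_t$ and the orthogonality relations $(\mathrm D_t^k\varphi^s)\lambda^s=0$ for $k<p-1$, assembles a linear system in the~$I^s$ and~$L$ whose solution by Cramer's rule is exactly $L=(-1)^p\mathrm{DT}[\varphi^1,\dots,\varphi^p]H$; the identity $\mathrm{DT}[\varphi^1,\dots,\varphi^p]=(-1)^p\mathrm{DT}[\lambda^1,\dots,\lambda^p]^\dag$ then gives~\eqref{eq:ODEFormInverseCL}. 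Your proof replaces this closed-form construction with the one-step Crum--Darboux factorization $\mathsf P=\mathrm{DT}[\mu]\circ\mathsf P'$ and a short induction; the adjoint computation $\mathrm{DT}[\mu]^\dag=-\mu^{-1}\mathrm D_t\,\mu$ and the choice $H=-J/\mu$ are correct, and your uniqueness argument for the factorization (monic, order~$p$, annihilating $p$ totally linearly independent functions, with $\ker\mathrm D_t$ reduced to constants) is sound.

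What each buys: the paper's approach is constructive---it produces the explicit value of~$H$ in one stroke and, via the same Cramer step, the closed formulas for the~$I^s$ that feed directly into the subsequent Corollary and the order estimates (Lemma~\ref{lem:OnOrderOfRepresentation}). Your inductive route is cleaner and more modular, makes the role of the Darboux factorization transparent, and avoids the adjoint-tuple machinery; the price is that~$H$ is only obtained recursively (as $-J/\mu$ at each step), so the explicit formula $H=-\varphi^sI^s$ would need a separate derivation if one wants the corollary and the order bounds in the form the paper states them.
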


\begin{proof}
Consider an arbitrary but fixed ordinary differential equation~$L=0$ with integrating factors~$\lambda^1$, \dots, $\lambda^p$. 
Denote by~$I^s$ a first integral corresponding to $\lambda^s$, $s=1,\dots,p$. 
We introduce the tuple of differential functions $(\varphi^1,\dots,\varphi^p)$ 
that is adjoint to $(\lambda^1,\dots,\lambda^p)$ and whose components are defined by
\begin{equation}\label{eq:DefinitionAdjointFunctions}
 \varphi^s=(-1)^{p-s}\frac{\mathrm W(\lambda^1,\dots,\cancel{\lambda^s},\dots,\lambda^p)}{\mathrm W(\lambda^1,\dots,\lambda^p)},
\end{equation}
where $(\lambda^1,\dots,\cancel{\lambda^s},\dots,\lambda^p)$ for a fixed~$s$ denotes the tuple obtained from the tuple~$(\lambda^1,\dots,\lambda^p)$
by excluding~$\lambda^s$.
Conversely, for $(\lambda^1,\dots,\lambda^p)$ we can write in terms of $(\varphi^1,\dots,\varphi^p)$,
\[
 \lambda^s=(-1)^{s-1}\frac{\mathrm W(\varphi^1,\dots,\cancel{\varphi^s},\dots,\varphi^p)}{\mathrm W(\varphi^1,\dots,\varphi^p)}.
\]
Note that we have
\begin{equation}\label{eq:DerivativePropertiesAdjointFunctions}
 (\mathrm D_t^k\varphi^s)\lambda^s=\left\{\begin{array}{cl} 0 &\mbox{if}\quad 0\leqslant k<p-1,\\ (-1)^{p-1} &\mbox{if}\quad k=p-1.\end{array}\right.
\end{equation}
We set $H:=-\varphi^sI^s$.
Totally differentiating this equality and using $\mathrm D_tI^s=\lambda^sL$ and~\eqref{eq:DerivativePropertiesAdjointFunctions}, we get
\begin{gather*}
-\mathrm D_tH=(\mathrm D_t\varphi^s)I^s+\varphi^s\lambda^sL=(\mathrm D_t\varphi^s)I^s,\\
-\mathrm D_t^2H=(\mathrm D_t^2\varphi^s)I^s+(D_t\varphi^s)\lambda^sL=(\mathrm D_t^2\varphi^s)I^s,\quad \dots\ .
\end{gather*}
Iterating this procedure, we derive overall
\begin{align}\label{eq:SystemForFandL}
\begin{split}
 &-\mathrm D_t^kH=(\mathrm D_t^k\varphi^s)I^s,\quad k=0,\dots,p-1,\\
 &-\mathrm D_t^pH=(\mathrm D_t^p\varphi^s)I^s+(-1)^{p-1}L.
\end{split}
\end{align}
We can interpret the above system as a system of linear algebraic equations with respect to~$I^s$ and~$L$. Employing Cramer's rule we can solve the above linear system to obtain
\[
 L=(-1)^p\frac{\mathrm W(\varphi^1,\dots,\varphi^p,H)}{\mathrm W(\varphi^1,\dots,\varphi^p)}= (-1)^p\mathrm{DT}[\varphi^1,\dots,\varphi^p]H,
\]
which, in view of the property
$ \mathrm{DT}[\varphi^1,\dots,\varphi^p]=(-1)^p\mathrm{DT}[\lambda^1,\dots,\lambda^p]^\dag$,
gives the form~\eqref{eq:ODEFormInverseCL}.

In turn, if the representation~\eqref{eq:ODEFormInverseCL} holds then we find from multiplying~\eqref{eq:ODEFormInverseCL} with the integrating factor~$\lambda^s$ that
 \begin{equation}\label{eq:ProofODEFormInverseCL}
 \lambda^s L=\lambda^s\mathrm{DT}[\lambda^1,\dots,\lambda^p]^\dag H=\lambda^s\mathrm{DT}[\lambda^1,\dots,\lambda^p]^\dag H-H\mathrm{DT}[\lambda^1,\dots,\lambda^p]\lambda^s,
 \end{equation}
where we subtracted zero in the last expression since $\mathrm{DT}[\lambda^1,\dots,\lambda^p]\lambda^s=0$. In view of the Lagrange identity~\eqref{eq:LagrangeIdentity} the right hand side of~\eqref{eq:ProofODEFormInverseCL} is the total derivative of some differential function.
This means that  for each~$s$ the differential function~$\lambda^s$ is an integrating factor of the equation $L=0$.
\end{proof}

\begin{remark}\label{rem:HInFormOfODE}
Since $\ker\mathrm{DT}[\lambda^1,\dots,\lambda^p]^\dag=\mathrm{DT}[\varphi^1,\dots,\varphi^p]=\langle\varphi^1,\dots,\varphi^p\rangle$, 
it follows from the proof of Theorem~\ref{thm:OnFormOfODE} 
that the differential function~$H$ in the representation~\eqref{eq:ODEFormInverseCL} for a given differential function~$L$ is necessarily of the form 
\[
H=-\varphi^sI^s+c_s\varphi^s,
\]
where the tuple of differential functions $(\varphi^1,\dots,\varphi^p)$ is adjoint to~$(\lambda^1,\dots,\lambda^p)$ 
with components $\varphi^s$ defined by~\eqref{eq:DefinitionAdjointFunctions},
$I^s$ is a first integral of the equation $L=0$ 
that are associated with the integrating factor~$\lambda^s$, $\mathrm D_tI^s=\lambda^sL$, $s=1,\dots,p$, 
and $c_1$, \dots, $c_p$ are arbitrary constants. 
The indeterminacy in~$H$ is induced by the indeterminacy in defining first integrals by associated integrating factors, 
$\mathrm D_t\tilde I^s=\lambda^sL$ as well if and only if $\tilde I^s=I^s+c_s$ for a constant~$c_s$.
\end{remark}

\begin{corollary}
If a differential function~$L$ admits the representation~\eqref{eq:ODEFormInverseCL},
then the ordinary differential equation~$L=0$ possesses~$p$ totally linearly independent first integrals
\[
I^s=(-1)^{p-s+1}\frac{\mathrm W(\varphi^1,\dots,\cancel{\varphi^s},\dots,\varphi^p,H)}{\mathrm W(\varphi^1,\dots,\varphi^p)}
\]
respectively associated with the integrating factors~$\lambda^s$,
where the tuple of differential functions $(\varphi^1,\dots,\varphi^p)$ is adjoint to~$(\lambda^1,\dots,\lambda^p)$ 
with components $\varphi^s$ defined by~\eqref{eq:DefinitionAdjointFunctions}.
\end{corollary}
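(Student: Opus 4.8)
The plan is to reverse the encoding carried out in the proof of Theorem~\ref{thm:OnFormOfODE}. There an equation $L=0$ with first integrals $I^s$ was compressed, via the adjoint functions $\varphi^s$ of the $\lambda^s$, into the single differential function $H=-\varphi^sI^s$ together with the linear system~\eqref{eq:SystemForFandL}; here $H$, the $\lambda^s$, and hence the $\varphi^s$, are given, and the $I^s$ are to be recovered. First I would read the first $p$ equations of~\eqref{eq:SystemForFandL}, that is $(\mathrm D_t^k\varphi^s)I^s=-\mathrm D_t^kH$ for $k=0,\dots,p-1$, as a linear algebraic system in the unknowns $I^1,\dots,I^p$. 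Its coefficient determinant is $\mathrm W(\varphi^1,\dots,\varphi^p)$, which is nonvanishing because the $\varphi^s$ are totally linearly independent; this follows from the total linear independence of the $\lambda^s$ together with the reciprocal expression of the $\lambda^s$ through the $\varphi^s$ recorded in the proof of Theorem~\ref{thm:OnFormOfODE}. Solving by Cramer's rule, pulling the overall $-1$ out of the right-hand-side column $(-\mathrm D_t^kH)_k$ and moving that column from the last slot into the $s$th slot (which contributes the factor $(-1)^{p-s}$), then produces exactly the stated formula for $I^s$.

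Next I would check that the remaining, $(p+1)$st equation of~\eqref{eq:SystemForFandL}, namely $-\mathrm D_t^pH=(\mathrm D_t^p\varphi^s)I^s+(-1)^{p-1}L$, holds automatically under the hypothesis that $L$ has the form~\eqref{eq:ODEFormInverseCL}. Expanding the $(p+1)\times(p+1)$ Wronskian $\mathrm W(\varphi^1,\dots,\varphi^p,H)$ along its bottom row and inserting the formula for $I^s$ just obtained gives $\mathrm{DT}[\varphi^1,\dots,\varphi^p]H=\mathrm D_t^pH+(\mathrm D_t^p\varphi^s)I^s$, so the equation in question is equivalent to $L=(-1)^p\mathrm{DT}[\varphi^1,\dots,\varphi^p]H$. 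By the identity $\mathrm{DT}[\varphi^1,\dots,\varphi^p]=(-1)^p\mathrm{DT}[\lambda^1,\dots,\lambda^p]^\dag$ used in the proof of Theorem~\ref{thm:OnFormOfODE}, the latter is precisely the assumed representation~\eqref{eq:ODEFormInverseCL}. Hence, under~\eqref{eq:ODEFormInverseCL}, the $I^s$ defined above satisfy the full system~\eqref{eq:SystemForFandL}.

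Finally I would derive the characteristic relations $\mathrm D_tI^s=\lambda^sL$ from~\eqref{eq:SystemForFandL}. Differentiating the equation of~\eqref{eq:SystemForFandL} indexed by $k$ and subtracting the one indexed by $k+1$ cancels the terms $(\mathrm D_t^{k+1}\varphi^s)I^s$ and leaves $(\mathrm D_t^k\varphi^s)\,\mathrm D_tI^s=0$ for $k=0,\dots,p-2$; differentiating the equation indexed by $k=p-1$ and subtracting the last equation of~\eqref{eq:SystemForFandL} leaves $(\mathrm D_t^{p-1}\varphi^s)\,\mathrm D_tI^s=(-1)^{p-1}L$. This is again a Cramer system with coefficient determinant $\mathrm W(\varphi^1,\dots,\varphi^p)$, and solving it for $\mathrm D_tI^s$ gives $\mathrm D_tI^s=(-1)^{s-1}\big(\mathrm W(\varphi^1,\dots,\cancel{\varphi^s},\dots,\varphi^p)/\mathrm W(\varphi^1,\dots,\varphi^p)\big)L=\lambda^sL$, again by the reciprocal formula for $\lambda^s$. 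Thus each $I^s$ is a first integral of $L=0$ with integrating factor $\lambda^s$, and total linear independence of $I^1,\dots,I^p$ follows from that of the $\lambda^s$ through the relations $\mathrm D_tI^s=\lambda^sL$.

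The part demanding the most care is the sign bookkeeping — matching each Cramer determinant (with the column of total derivatives of $H$, respectively the column carrying $L$, located in various positions) to the normalized Wronskian ratios appearing in the statement, and keeping the adjoint/reciprocal relations between the $\lambda^s$ and the $\varphi^s$ straight throughout. Everything else reuses the linear-algebra skeleton already assembled in the proof of Theorem~\ref{thm:OnFormOfODE}, merely traversed in the opposite direction; the only genuinely new ingredient is the explicit inversion by Cramer's rule.
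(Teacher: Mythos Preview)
Your proposal is correct and follows the same approach as the paper, whose proof consists of the single sentence that the expression for~$I^s$ is obtained from the system~\eqref{eq:SystemForFandL} using Cramer's rule. Your version is considerably more thorough---explicitly verifying that the $(p{+}1)$st equation of~\eqref{eq:SystemForFandL} is automatically satisfied under the hypothesis~\eqref{eq:ODEFormInverseCL} and then deriving the characteristic relations $\mathrm D_tI^s=\lambda^sL$ from the system---steps that the paper leaves implicit in its appeal to the machinery already set up in the proof of Theorem~\ref{thm:OnFormOfODE}.
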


\begin{proof}
The expression for~$I^s$ is obtained from the system~\eqref{eq:SystemForFandL} using Cramer's rule.
\end{proof}

We now study a few properties of the representation~\eqref{eq:ODEFormInverseCL}.

\begin{proposition}\label{pro:RepresentationODESpanBasis}
 The representation~\eqref{eq:ODEFormInverseCL} depends rather on the linear span $\langle\lambda^1,\dots,\lambda^p\rangle$, 
 not on the fixed basis $(\lambda^1,\dots,\lambda^p)$ itself.
\end{proposition}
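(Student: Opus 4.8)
The plan is to prove the stronger statement that the Darboux operator $\mathrm{DT}[\lambda^1,\dots,\lambda^p]$ itself, and hence its formal adjoint occurring in~\eqref{eq:ODEFormInverseCL}, is literally the \emph{same} differential operator in $\mathrm D_t$ for any basis of the linear span $\langle\lambda^1,\dots,\lambda^p\rangle$ over the field of constants. Passing from the basis $(\lambda^1,\dots,\lambda^p)$ to another basis $(\tilde\lambda^1,\dots,\tilde\lambda^p)$ of the same span amounts to $\tilde\lambda^i=\sum_{j=1}^pA_{ij}\lambda^j$, $i=1,\dots,p$, with a nonsingular \emph{constant} matrix $A=(A_{ij})$; the constancy of $A$ is what makes the argument work, so I would also point out that the assertion is sharp in this respect.

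The key step is a transformation rule for the Wronskian in $\mathrm D_t$ under such a substitution. Because $\mathrm D_t$ is linear over constants, one has $\mathrm D_t^{l'-1}\tilde\lambda^i=\sum_jA_{ij}\,\mathrm D_t^{l'-1}\lambda^j$, so, reading $\mathrm W(\cdot)$ as the determinant whose rows are indexed by the functions and whose columns are indexed by the order of differentiation, the Wronskian matrix of $(\tilde\lambda^1,\dots,\tilde\lambda^p)$ is exactly $A$ times the Wronskian matrix of $(\lambda^1,\dots,\lambda^p)$. Consequently
\[
\mathrm W(\tilde\lambda^1,\dots,\tilde\lambda^p)=\det(A)\,\mathrm W(\lambda^1,\dots,\lambda^p).
\]
For the numerator of the Darboux transformation one appends a further differential function $G$, and the same reasoning shows that the $(p+1)\times(p+1)$ Wronskian matrix of $(\tilde\lambda^1,\dots,\tilde\lambda^p,G)$ is obtained from that of $(\lambda^1,\dots,\lambda^p,G)$ by left multiplication with the block-diagonal matrix $\mathrm{diag}(A,1)$, whence
\[
\mathrm W(\tilde\lambda^1,\dots,\tilde\lambda^p,G)=\det(A)\,\mathrm W(\lambda^1,\dots,\lambda^p,G)
\]
for every $G$.

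Dividing the two displays, the nonzero factor $\det(A)$ cancels --- the division being legitimate since total linear independence of the $\lambda^s$ gives $\mathrm W(\lambda^1,\dots,\lambda^p)\ne0$ --- so $\mathrm{DT}[\tilde\lambda^1,\dots,\tilde\lambda^p]G=\mathrm{DT}[\lambda^1,\dots,\lambda^p]G$ for all differential functions $G$. Hence the two Darboux operators coincide, and so do their formal adjoints, which is precisely the claim that the right-hand side of~\eqref{eq:ODEFormInverseCL} is unaffected by the choice of basis within $\langle\lambda^1,\dots,\lambda^p\rangle$. I do not expect a genuine obstacle here: the only delicate points are the determinant bookkeeping and the remark that this breaks down for nonconstant coefficients, since then differentiating $\tilde\lambda^i$ produces extra product-rule terms and $\mathrm{DT}[\tilde\lambda^1,\dots,\tilde\lambda^p]$ genuinely differs from $\mathrm{DT}[\lambda^1,\dots,\lambda^p]$ --- so the formulation of the proposition in terms of the linear span (over constants) is exactly the right level of generality.
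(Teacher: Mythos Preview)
Your proof is correct and follows essentially the same approach as the paper: both argue that under a constant nonsingular change of basis $\tilde\lambda^s=\sum_{s'}a_{s's}\lambda^{s'}$ the numerator and denominator Wronskians each pick up the common factor $\det A$, so the Darboux operator (and hence its adjoint) is unchanged. Your version is simply more explicit about the block-diagonal bookkeeping and adds the useful remark that constancy of~$A$ is essential.
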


\begin{proof}
Consider $\tilde\lambda^s=\lambda^{s'}a_{s's}$, where $A=(a_{s's})$ is a nondegenerate constant matrix. Since $\mathrm W(\tilde\lambda^1,\dots,\tilde\lambda^p)=\mathrm W(\lambda^1,\dots,\lambda^p)\det A$ and $\mathrm W(\tilde\lambda^1,\dots,\tilde\lambda^p,H)=\mathrm W(\lambda^1,\dots,\lambda^p,H)\det A$, it follows from the definition of Darboux operator
that $\mathrm{DT}(\tilde\lambda^1,\dots,\tilde\lambda^p)= \mathrm{DT}(\lambda^1,\dots,\lambda^p)$. 
The linear combining of integrating factors $\lambda^1$,~\dots,~$\lambda^p$ 
is underlaid by the analogous linear combining of the associated first integrals $I^1$,~\dots,~$I^p$, 
$\tilde I^s=I^{s'}a_{s's}$.
The components of the tuple $(\tilde\varphi^1,\dots,\tilde\varphi^p)$ adjoint to~$(\tilde\lambda^1,\dots,\tilde\lambda^p)$ 
are given by 
\[
\tilde\varphi^s=(-1)^{p-s}\sum_{s'=1}^p\frac{\mathrm W(\sum_{\sigma'\ne s'}\lambda^{\sigma'}a_{\sigma'\sigma},\sigma\ne s)}{\mathrm W(\lambda^1,\dots,\lambda^p)\det A}
=\frac{(-1)^{s+s'}}{\det A}\sum_{s'=1}^p\varphi^{s'}A_{s's},
\]
where $A_{s's}$ is the $(s',s)$ minor of~$A$, and $\sigma$ and $\sigma'$ run from 1 to~$p$.
Therefore, the differential function $H:=-\varphi^sI^s$ is preserved in the course of this linear combining, 
$H=\tilde H:=-\tilde\varphi^s\tilde I^s$.
\end{proof}

\begin{remark}\label{rem:FormOfODEGaugeTrans}
Linear combinations of first integrals of an ordinary differential equation 
present the simplest particular case of general functional combinations of such integrals. 
The latter combinations are first integrals of the same equation as well. 
Given an ordinary differential equation $L[u]=0$ and its $p$ first integrals $I^1$,~\dots,~$I^p$ such that 
the Wronskian of the corresponding integrating factors $\lambda^1$,~\dots,~$\lambda^p$ does not vanish, 
consider functional combinations $\tilde I^s:=F^s(I^1,\dots,I^p)$ with nonzero Jacobian $\mathrm J:=\det(F^s_{I^{s'}})$. 
The integrating factor for the first integral $\tilde I^s$ is $\tilde\lambda^s=F^s_{I^{s'}}\lambda^{s'}$, 
and on solutions of the equation $L[u]=0$ we get 
$\mathrm W(\tilde\lambda^1,\dots,\tilde\lambda^p)=\mathrm W(\lambda^1,\dots,\lambda^p)\mathrm J\ne0$. 
Setting $\tilde H:=-\tilde\varphi^s\tilde I^s$, 
where $(\tilde\varphi^1,\dots,\tilde\varphi^p)$ is the tuple adjoint to~$(\tilde\lambda^1,\dots,\tilde\lambda^p)$, 
we derive the representation~\eqref{eq:ODEFormInverseCL} 
in terms of the transformed integrating factors $\tilde\lambda^1$,~\dots,~$\tilde\lambda^p$. 
In other words, the representation~\eqref{eq:ODEFormInverseCL} is invariant with respect to 
the prolongations of gauge transformations of functional re-combining first integrals to integrating factors.
\end{remark}

\begin{proposition}\label{pro:RepresentationODEUndeterminedness}
 For fixed~$L$ and fixed $\langle\lambda^1,\dots,\lambda^p\rangle$, the general form of~$H$ is $H=H^0+c_s\varphi^s$,
 where $H^0$ is a particular value of~$H$, and $c_1$, \dots, $c_p$ are arbitrary constants.
\end{proposition}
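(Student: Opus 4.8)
The plan is to identify the undeterminedness of $H$ in the representation~\eqref{eq:ODEFormInverseCL} with the kernel of the differential operator $\mathrm{DT}[\lambda^1,\dots,\lambda^p]^\dag$ acting on differential functions of $u$, and to show that this kernel is precisely the linear span of the adjoint functions $\varphi^1,\dots,\varphi^p$ over the field of constants. Fix $L$ and $\langle\lambda^1,\dots,\lambda^p\rangle$, let $H^0$ be one differential function with $L=\mathrm{DT}[\lambda^1,\dots,\lambda^p]^\dag H^0$, and let $H$ be any other. Then $\mathrm{DT}[\lambda^1,\dots,\lambda^p]^\dag(H-H^0)=0$, and, by the identity $\mathrm{DT}[\varphi^1,\dots,\varphi^p]=(-1)^p\mathrm{DT}[\lambda^1,\dots,\lambda^p]^\dag$ already established in the proof of Theorem~\ref{thm:OnFormOfODE}, this is the same as $\mathrm{DT}[\varphi^1,\dots,\varphi^p](H-H^0)=0$, i.e.\ $\mathrm W(\varphi^1,\dots,\varphi^p,H-H^0)=0$.

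Next I would use that $\varphi^1,\dots,\varphi^p$ are totally linearly independent; indeed, the inverse formulas expressing the $\lambda^s$ through the $\varphi^s$ in the proof of Theorem~\ref{thm:OnFormOfODE} force $\mathrm W(\varphi^1,\dots,\varphi^p)\ne0$. For differential functions in the single total derivative $\mathrm D_t$ the classical Wronskian criterion applies: if $\varphi^1,\dots,\varphi^p$ are totally linearly independent and $\mathrm W(\varphi^1,\dots,\varphi^p,G)=0$, then $G=c_s\varphi^s$ for some constants $c_1,\dots,c_p$. Applying this with $G=H-H^0$ yields $H=H^0+c_s\varphi^s$.

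For the converse one checks that every $\varphi^s$ lies in the kernel of $\mathrm{DT}[\lambda^1,\dots,\lambda^p]^\dag$: up to the factor $(-1)^p$ this operator equals $\mathrm{DT}[\varphi^1,\dots,\varphi^p]$, and $\mathrm{DT}[\varphi^1,\dots,\varphi^p]\varphi^s=\mathrm W(\varphi^1,\dots,\varphi^p,\varphi^s)/\mathrm W(\varphi^1,\dots,\varphi^p)=0$ since the Wronskian has a repeated column. Hence $\mathrm{DT}[\lambda^1,\dots,\lambda^p]^\dag(H^0+c_s\varphi^s)=\mathrm{DT}[\lambda^1,\dots,\lambda^p]^\dag H^0=L$ for arbitrary constants $c_s$, so every such $H$ is admissible, and the general form of $H$ is exactly $H^0+c_s\varphi^s$.

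I expect the only real obstacle to be the justification of the Wronskian linear-dependence criterion in the category of differential functions of $u$: one must confirm that $\mathrm D_t$, though a total derivative on the jet space rather than an ordinary derivative, acts as a derivation whose constants are precisely the underlying field elements (any $f[u]$ with $\mathrm D_tf=0$ is constant), so that the standard Abel/Liouville-type reduction of $\mathrm W(\varphi^1,\dots,\varphi^p,G)=0$ to a relation for the last entry of a fundamental system goes through and produces genuinely constant coefficients $c_s$ rather than merely $\mathrm D_t$-closed differential functions. Care is also needed to check that "totally linearly independent" is exactly the hypothesis under which this criterion holds, so that it matches the standing assumption on the $\lambda^s$ and hence on the $\varphi^s$.
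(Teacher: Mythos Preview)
Your proposal is correct and follows exactly the paper's approach: the paper's proof is a single sentence stating that $\langle\varphi^1,\dots,\varphi^p\rangle$ coincides with the kernel of $\mathrm{DT}[\varphi^1,\dots,\varphi^p]=(-1)^p\mathrm{DT}[\lambda^1,\dots,\lambda^p]^\dag$, which is precisely the fact you unpack. Your version is more explicit in that you verify both inclusions and flag the Wronskian linear-dependence criterion for differential functions (with $\mathrm D_t$ in place of an ordinary derivative) as the only point requiring care, whereas the paper leaves this implicit.
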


\begin{proof}
 It suffices to note that the linear span $\langle\varphi^1,\dots,\varphi^p\rangle$ coincides with the kernel of the Darboux operator $\mathrm{DT}(\varphi^1,\dots,\varphi^p)=(-1)^p\mathrm{DT}(\lambda^1,\dots,\lambda^p)^\dag$.
\end{proof}

\begin{proposition}\label{pro:PointTransOfODERepresentation}
Given a differential function~$L[u]$ represented in the form~\eqref{eq:ODEFormInverseCL}, 
a nonvanishing differential function~$\Lambda[u]$ and 
a point transformation~$\mathcal T$: $\tilde t=T(t,u)$, $\tilde u=U(t,u)$, 
the differential function~$\tilde L[\tilde u]$ of $\tilde u=\tilde u(\tilde t)$
that is defined by $\tilde L[\tilde u]=\Lambda[u]L[u]$
admits the counterpart of the representation~\eqref{eq:ODEFormInverseCL},
$\tilde L=\mathrm{DT}[\tilde\lambda^1,\dots,\tilde\lambda^p]^\dag\tilde H$, with
differential functions $\tilde\lambda^1[\tilde u]$, \dots, $\tilde\lambda^p[\tilde u]$, $H[\tilde u]$ 
obtained from the conditions 
\[
\tilde\lambda^s[\tilde u]=\frac{\lambda^s[u]}{\Lambda[u]\,\mathrm D_tT(t,u)},\quad
\tilde H[\tilde u]=\Lambda[u]\big(\mathrm D_tT(t,u)\big)^pH[u].
\]
\end{proposition}

\begin{proof}
If the differential function~$L[u]$ is represented in the form~\eqref{eq:ODEFormInverseCL},
then the differential functions $\lambda^1[u]$, \dots, $\lambda^p[u]$ are integrating factors of the equation $L[u]=0$.
Let $I^1[u]$, \dots, $I^p[u]$ be associated first integrals of this equation.
For each~$s$, the point transformation~$\mathcal T$ maps $I^s$ and~$\lambda^s$
to a first integral~$\tilde I^s$ and the corresponding integrating factor~$\tilde\lambda^s$
of the equation $\tilde L[\tilde u]=0$.
Therefore, we have the counterpart of the representation~\eqref{eq:ODEFormInverseCL} for the function~$\tilde L$.
Since $\tilde I^s=I^s$, $\tilde L=\Lambda L$ and
$\tilde\lambda^s\tilde L=\mathrm D_{\tilde t}\tilde I^s=(\mathrm D_tI^s)/(\mathrm D_tT)=\lambda^sL/(\mathrm D_tT)$,
the prolongation of~$\mathcal T$ to~$\lambda^s$ is $\tilde\lambda^s=\lambda^s/(\Lambda\mathrm D_tT)$.
Wronskians have the following simple properties:
$\mathrm W(gf^1,\dots,gf^k)=g^k\mathrm W(f^1,\dots,f^k)$ and
$\tilde{\mathrm W}(f^1,\dots,f^k)=(\mathrm D_tT)^{k-1}\mathrm W(f^1,\dots,f^k)$
for arbitrary differential functions~$f^1[u]$, \dots, $f^k[u]$ and $g[u]$,
where $\tilde{\mathrm W}$ denote the Wronskian in the total derivative~$\mathrm D_{\tilde t}$ 
with the independent variable $\tilde t=T(t,u)$ and the dependent variable $\tilde u=U(t,u)$.
This is why the prolongation of~$\mathcal T$ to~$\varphi^s$ is $\tilde\varphi^s=\Lambda(\mathrm D_tT)^p\varphi^s$.
Then the expressions for~$H$ and~$\tilde H$, $H=-\varphi^sI^s$ and $\tilde H=-\tilde\varphi^s\tilde I^s$,
imply the prolongation of~$\mathcal T$ to~$H$.
\end{proof}

Consider a linear $r$th order ordinary differential equation $L=0$ with leading coefficient~1. Then the Lagrange identity implies that this equation admits~$r$ totally linearly independent integrating factors~$\lambda^s$ that are functions of~$t$ only and satisfy the adjoint of the equation $L=0$. Hence, the Darboux operator $\mathrm{DT}(\lambda^1,\dots,\lambda^r)^\dag$ coincides with the operator associated with the equation $L=0$. This observation leads to the following proposition.

\begin{proposition}\label{pro:OnLinearEquations}
 Let $L=0$ be a linear $r$th order ordinary differential equation with leading coefficient~1. 
 For the representation~\eqref{eq:ODEFormInverseCL} 
 constructed with~$p=r$ totally linearly independent integrating factors depending only on~$t$, 
 the adjoint Darboux operator coincides with the differential operator associated with the equation, 
 and $H=u$ modulo adding an arbitrary solution of the equation. 
\end{proposition}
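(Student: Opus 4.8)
The plan is to identify both $\mathrm{DT}[\lambda^1,\dots,\lambda^r]^\dag$ and the differential operator~$\mathsf L$ associated with $L=0$ (i.e.\ $L=\mathsf L u$ with $\mathsf L=\mathrm D_t^r+a_{r-1}(t)\mathrm D_t^{r-1}+\dots+a_0(t)$) as the same monic linear operator in~$\mathrm D_t$, and then to read off~$H$ from the representation~\eqref{eq:ODEFormInverseCL}. First I would verify that the integrating factors of $L=0$ depending only on~$t$ are exactly the solutions of the adjoint equation $\mathsf L^\dag v=0$. If $\lambda=\lambda(t)$ satisfies $\lambda L=\mathrm D_tI$ for a first integral~$I$, then the Lagrange identity~\eqref{eq:LagrangeIdentity} with $\mathsf P=\mathsf L$, $f=\lambda$, $g=u$ gives $\lambda\,\mathsf L u-u\,\mathsf L^\dag\lambda\in\mathop{\rm im}\mathrm D_t$, hence $u\,\mathsf L^\dag\lambda\in\mathop{\rm im}\mathrm D_t=\ker\mathsf E$; since $\mathsf L^\dag\lambda$ depends on~$t$ only, $\mathsf E\big(u\,\mathsf L^\dag\lambda\big)=\mathsf L^\dag\lambda$, so $\mathsf L^\dag\lambda=0$. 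Conversely, any $t$-dependent solution of $\mathsf L^\dag v=0$ is an integrating factor, again by~\eqref{eq:LagrangeIdentity}. Thus the integrating factors depending only on~$t$ form the $r$-dimensional solution space of~$\mathsf L^\dag$, and any totally linearly independent family $\lambda^1,\dots,\lambda^r$ of them is a fundamental system for~$\mathsf L^\dag$; this recovers the assertion of the paragraph preceding the proposition.

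The next step is to evaluate the Darboux operator $\mathrm{DT}[\lambda^1,\dots,\lambda^r]$. Because none of the~$\lambda^s$ involve derivatives of~$u$, the Wronskian $\mathrm W(\lambda^1,\dots,\lambda^r)$ and the coefficients of $G,\mathrm D_tG,\dots,\mathrm D_t^rG$ in $\mathrm W(\lambda^1,\dots,\lambda^r,G)$ are functions of~$t$ only; hence $\mathrm{DT}[\lambda^1,\dots,\lambda^r]$ is an ordinary linear differential operator in~$\mathrm D_t$ of order~$r$, monic since the coefficient of $\mathrm D_t^rG$ is $\mathrm W(\lambda^1,\dots,\lambda^r)/\mathrm W(\lambda^1,\dots,\lambda^r)=1$. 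It annihilates each~$\lambda^s$ (a Wronskian with a repeated argument vanishes), so its $r$-dimensional solution space equals $\langle\lambda^1,\dots,\lambda^r\rangle$, which is also the solution space of~$\mathsf L^\dag$. Two linear ODE operators of the same order~$r$ with the same solution space are proportional---subtracting a suitable scalar multiple of one from the other cancels the leading terms, leaving an operator of order below~$r$ that kills an $r$-dimensional space, hence the zero operator. Matching leading coefficients ($1$ for $\mathrm{DT}[\lambda^1,\dots,\lambda^r]$ and $(-1)^r$ for $\mathsf L^\dag$) gives $\mathrm{DT}[\lambda^1,\dots,\lambda^r]=(-1)^r\mathsf L^\dag$.

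Finally I would take formal adjoints: since $(\mathsf L^\dag)^\dag=\mathsf L$, this yields $\mathrm{DT}[\lambda^1,\dots,\lambda^r]^\dag=(-1)^r\mathsf L$, so, up to the inessential overall sign~$(-1)^r$, the adjoint Darboux operator is precisely the operator~$\mathsf L$ associated with $L=0$. Substituting into~\eqref{eq:ODEFormInverseCL} gives $L=\mathsf L u=\mathrm{DT}[\lambda^1,\dots,\lambda^r]^\dag u$, i.e.\ the representation holds with $H=u$ (up to that sign); this is consistent with Proposition~\ref{pro:RepresentationODEUndeterminedness}, since here $\langle\varphi^1,\dots,\varphi^r\rangle=\ker\mathrm{DT}[\lambda^1,\dots,\lambda^r]^\dag=\ker\mathsf L$ is just the solution space of $L=0$, so the residual freedom in~$H$ amounts to adding solutions of the equation. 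I expect the main obstacle to be the identification in the second step---recognizing the Darboux operator as an ordinary linear ODE operator and pinning it down through the uniqueness of a monic linear ODE with prescribed solution space; the passage to adjoints and the attendant sign bookkeeping in the last step are routine by comparison.
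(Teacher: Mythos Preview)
Your argument is correct and follows the same line as the paper's brief justification preceding the proposition: the Lagrange identity identifies the $t$-dependent integrating factors with a fundamental system of~$\mathsf L^\dag$, whence the Darboux operator built from them has the same kernel as~$\mathsf L^\dag$ and, being monic of order~$r$, is forced to equal it up to a scalar; taking adjoints finishes the identification and yields~$H=u$. Your extra care with the sign~$(-1)^r$ is warranted---for odd~$r$ one indeed gets $\mathrm{DT}[\lambda^1,\dots,\lambda^r]^\dag=-\mathsf L$ and $H=-u$, a point the paper's statement suppresses (its worked examples all have even~$r$).
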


It is obvious that each linear ordinary differential equation also admits integrating factors that depend on derivatives of $u$.
The representation~\eqref{eq:ODEFormInverseCL} constructed with using such integrating factors involves, even for linear equations,
expressions nonlinear in derivatives of $u$.

In the following we will consider a few simple examples of well-known ordinary differential equations  
and derive the representation~\eqref{eq:ODEFormInverseCL} for the right-hand side of each of these equations. 
Varying the involved differential function~$H$ will result in the representation for all ordinary differential equations  
that admit the same first integrals as the original equation does
but are, in general, nonlinear even if the original equation is linear. 
These examples hint at ways of studying the inverse problem on conservation laws 
for the more complicated case of nonlinear partial differential equations,  
including the construction of ansatzes for conservative parameterizations.

\begin{example}\label{ex:ElementarySecondOrderODE}
Consider the elementary second-order ODE $L:=u''=0$. 
This equation admits the integrating factors $\lambda^1=1$ and $\lambda^2=t$, 
with the associated first integrals $I^1=u'$ and $I^2=tu'-u$. 
Using~\eqref{eq:DefinitionAdjointFunctions}, we determine 
the adjoint functions~$\varphi^1$ and~$\varphi^2$ to be $\varphi^1=-t$ and $\varphi^2=1$. 
Now evaluating the representation~\eqref{eq:ODEFormInverseCL}, we obtain
\[
 L=\mathrm{DT}[\lambda^1,\lambda^2]^\dag H=(-1)^2\mathrm{DT}[\varphi^1,\varphi^2] H=\frac{\mathrm W(\varphi^1,\varphi^2,H)}{\mathrm W(\varphi^1,\varphi^2)}
 =\mathrm D_t^2H.
\]
Choosing the particular value $H=-\varphi^1I^1-\varphi^2I^2$ for the differential function~$H$ according Remark~\ref{rem:HInFormOfODE}, 
we find $H=u$, which agrees with Proposition~\ref{pro:OnLinearEquations}. 
Varying~$H$ in the set of differential functions of~$u$ with $\ord\mathrm D_t^2H>0$, we get the following obvious claim. 
An ordinary differential equation $L[u]=0$ possesses the integrating factors~1 and~$t$ 
if and only if $L=\mathrm D_t^2H$ for some~$H[u]$. 
Similar claims hold for all tuples of~$\lambda$'s considered below in the present and the successive examples 
but we omit these claims for brevity.

At the same time, the elementary second-order ODE also admits the integrating factor~$u'$ associated  with the first integral $(u')^2/2$.
In the representation~\eqref{eq:ODEFormInverseCL} for the equation $u''=0$ we can use, e.g., the pairs $(\lambda^1,\lambda^2)=(1,u')$ or $(\lambda^1,\lambda^2)=(t,u')$.
In the first case, the adjoint functions are $\varphi^1=-u'/u''$ and $\varphi^2=1/u''$ and hence we obtain the representation
\[
 L= \mathrm D_t\left(\mathrm D_t+\frac{u'''}{u''}\right)H,\quad \textup{where}\quad H=\frac{(u')^2}{2u''}.
\]
In the second case we compute the adjoint functions $\varphi^1=-u'/(tu''-u')$ and $\varphi^2=t/(tu''-u')$. The representation~\eqref{eq:ODEFormInverseCL} then becomes
\[
 L=\left(\mathrm D_t^2+\frac{tu'''}{tu''-u'}\mathrm D_t+ \frac{2u'''+tu''''}{tu''-u'}-\frac{(tu''')^2}{(tu''-u')^2}\right)H,\quad \textup{where}\quad H=\frac{tu'^2-2uu'}{2(tu''-u')}.
\]

Moreover, one can derive the representation~\eqref{eq:ODEFormInverseCL} for $u''=0$ using three integrating factors. Setting $\lambda^1=1$, $\lambda^2=t$ and $\lambda^3=u'$, the associated adjoint functions are $\varphi^1=(tu''-u')/u'''$, $\varphi^2=-u''/u'''$ and $\varphi^3=1/u'''$. In this case, the representation~\eqref{eq:ODEFormInverseCL} reads
\begin{align*}
 L=-\mathrm D_t^2\left(\mathrm D_t+\frac{u^{\rm iv}}{u'''}\right)H=-\mathrm D_t^2\frac{\mathrm D_t(u'''H)}{u'''}
 \quad\mbox{with}\quad H=-\frac{uu''}{u'''}+\frac{(u')^2}{2u'''}.
\end{align*}

Note that the last three representations involve derivatives of higher order than the order of the original equation. 
These derivatives are canceled after the substitution of an appropriate value of the differential parameter function~$H$ 
in the representation and the subsequent expansion. 

More generally, any first integral of the equation $u''=0$ is a function~$G$ of the two most elementary integrals $u'$ and $\omega=tu'-u$. 
The associated integrating factor is $\lambda=G_{u'}+tG_\omega$. 
Thus we could construct further representations of the form~\eqref{eq:ODEFormInverseCL} 
using an arbitrary number of totally linearly independent integrating factors of this equation.

This example illustrates that even starting from a simple ordinary differential equation with simple integrating factors 
one can obtain quite cumbersome representations of the form~\eqref{eq:ODEFormInverseCL}.
\end{example}

\begin{remark}
Example~\ref{ex:ElementarySecondOrderODE} shows
that the representation~\eqref{eq:ODEFormInverseCL} may involve derivatives of~$u$ whose order is higher than~$\ord L$.
Moreover, the representation may be singular for any solution of the corresponding equation $L[u]=0$.
In general, this singularity always occurs when the corresponding first integrals are functionally dependent.
Indeed, suppose that $I^p=\Theta(I^1,\dots, I^{p-1})$.
Totally differentiating this equality with respect to~$t$, we derive
$\lambda^p=\Theta_{I^1}\lambda^1+\dots+\Theta_{I^{p-1}}\lambda^{p-1}$
and thus
$\mathrm D_t^k\lambda^p=\Theta_{I^1}\mathrm D_t^k\lambda^1+\dots+\Theta_{I^{p-1}}\mathrm D_t^k\lambda^{p-1}+R^k$, $k=1,2,\dots$,
where the differential function~$R^k=R^k[u]$ vanishes on each solution of the equation $L[u]=0$.
Therefore, the Wronskian $\mathrm W(\lambda^1,\dots,\lambda^p)$ has the same property.
\end{remark}

\begin{example}
 As a second initial equation, consider the equation for the classical harmonic oscillator $L:=u''+u=0$. The harmonic oscillator possesses the two integrating factors $\lambda^1=-\sin t$ and $\lambda^2=\cos t$, with the associated first integrals $I^1=u\cos t-u'\sin t$ and $I^2=u\sin t+u'\cos t$. The adjoint functions to the integrating factors are $\varphi^1=-\cos t$ and $\varphi^2=-\sin t$. We again evaluate the representation~\eqref{eq:ODEFormInverseCL} and find
\[
 L=\mathrm{DT}[\lambda^1,\lambda^2]^\dag H=(-1)^2\mathrm{DT}[\varphi^1,\varphi^2] H=\frac{\mathrm W(\varphi^1,\varphi^2,H)}{\mathrm W(\varphi^1,\varphi^2)}=\mathrm D_t^2H+H.
\]
In accordance with Proposition~\ref{pro:OnLinearEquations}, $H=-\varphi^1I^1-\varphi^2I^2=u$ is associated with the initial equation, 
and the other values of~$H$ with $\ord(\mathrm D_t^2H+H)>0$ 
leads to other ordinary differential equations 
admitting the same integrating factors $\lambda^1=-\sin t$ and $\lambda^2=\cos t$

For the harmonic oscillator we also have the integrating factor~$\lambda=u'$ 
and the most general first integral is a function of $I^1$ and $I^2$. 
But the consideration of the representation~\eqref{eq:ODEFormInverseCL} originated from the harmonic oscillator is not needed. 
In fact, for second-order linear ODEs it is sufficient to study only the elementary equation 
since any such equation is similar to the elementary equation 
with respect to a foliation-preserving point transformation that is linear in~$u$. 
This is why any representation of the form~\eqref{eq:ODEFormInverseCL} with integrating factors of a second-order linear ODE 
is the image, with respect to such a transformation, of a similar representation with integrating factors of the elementary equation.
\end{example}

\begin{example}\label{ex:Lorenz1963}
To give an example with a nonlinear initial equation, we now consider the famous Lorenz 1963 model. 
This dynamical system of three equations reads
\begin{equation}\label{eq:LorenzSystem}
  \frac{\mathrm dx}{\mathrm dt}=\sigma y -m\sigma x, \quad \frac{\mathrm dy}{\mathrm dt}=x(r-z)-my,\quad \frac{\mathrm dz}{\mathrm dt}=xy-mbz,
\end{equation}
where it is conventional to denote the dependent variables of the system 
by $x$, $y$ and $z$, and $\sigma$, $r$ and $b$ are non-dimensional constants. 
The control parameter $m$ governs the strength of the dissipation. 
In the original Lorenz system, $m=1$. In the case of $m=0$, system~\eqref{eq:LorenzSystem} 
is called the conservative Lorenz system, and admits two first integrals, which are
\[
 \tilde I^1=\frac12x^2-\sigma z,\quad \tilde I^2=\frac12(y^2+z^2)-rz,
\]
see e.g.~\cite{bihl11Ay}. 
To fit the example of the conservative Lorenz system in the framework developed above, 
we convert~\eqref{eq:LorenzSystem} for the case of $m=0$ into a single ordinary differential equation. 
Suppose that $x$ is not a constant. 
Then expressing~$y$ from the first equation and~$z$ from the second equation in view of the expression for~$y$ 
and substituting into the last equation leads to the following third-order equation for~$x$:
\begin{equation}\label{eq:LorenzSystemThirdOrderEquation}
 L:=\left(\frac{x''}x\right)'+xx'=0.
\end{equation}
The two first integrals for~\eqref{eq:LorenzSystemThirdOrderEquation} 
corresponding to the above first integrals for the conservative Lorenz system, up to constant multipliers, are
\[
 I^1=\frac{x^2}2+\frac{x''}x,\quad I^2=\frac{(x')^2}2+\frac12\left(\frac{x''}x\right)^2,
\]
with the associated integrating factors~$\lambda^1=1$ and $\lambda^2=x''/x$. 
The adjoint functions for $(\lambda^1,\lambda^2)$ are $\varphi^1=-(x''/x)/(x''/x)'$ and $\varphi^2=1/(x''/x)'$ 
and the representation~\eqref{eq:ODEFormInverseCL} for the equation~\eqref{eq:LorenzSystemThirdOrderEquation} reads
\[
 L=\mathrm D_t\left(\mathrm D_t+\frac{(x''/x)''}{(x''/x)'}\right)H \quad\mbox{with}\quad
 H=\frac{xx''-(x')^2+(x''/x)^2}{2(x''/x)'}.
\]
\end{example}

For practical computations and more accurate estimations of orders of involved differential functions, an alternative representation to~\eqref{eq:ODEFormInverseCL} is useful. We define $\hat\varphi^s=\mathrm W(\lambda^1,\dots,\lambda^p)\varphi^s$ and $\hat H=\mathrm W(\lambda^1,\dots,\lambda^p)H$. Substituting $\varphi^s=\hat\varphi^s/\mathrm W(\lambda^1,\dots,\lambda^p)$ and $H=\hat H/\mathrm W(\lambda^1,\dots,\lambda^p)$ in the representation~\eqref{eq:ODEFormInverseCL}, and using the Wronskian property $\mathrm W(gf^1,\dots,gf^k)=g^k\mathrm W(f^1,\dots,f^k)$, we obtain the alternative representation
\begin{equation}\label{eq:ODEFormInverseCLAlternative}
 L=(-1)^p\frac{\mathrm W(\hat\varphi^1,\dots,\hat\varphi^p,\hat H)}{(\mathrm W(\lambda^1,\dots,\lambda^p))^p}.
\end{equation}

\begin{lemma}\label{lem:OnOrderOfRepresentation}
 Let $\ord L=r$ and $q:=\max_s\ord\lambda^s$. Then, the differential function~$\hat H$ in the representation~\eqref{eq:ODEFormInverseCLAlternative} satisfies the condition $\ord\hat H\leqslant\max(r-p,q+p-2)$.
\end{lemma}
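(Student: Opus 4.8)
The plan is to bound $\ord\hat H$ by working with the unnormalized quantities $\hat\varphi^s=\mathrm W(\lambda^1,\dots,\lambda^p)\varphi^s$ and $\hat H=\mathrm W(\lambda^1,\dots,\lambda^p)H$ and the ``hatted'' counterpart of the linear system~\eqref{eq:SystemForFandL}. Two elementary order estimates come first. On the one hand, by~\eqref{eq:DefinitionAdjointFunctions} the function $\hat\varphi^s$ equals, up to sign, the Wronskian of the $p-1$ integrating factors $\lambda^1,\dots,\lambda^p$ with $\lambda^s$ omitted, which involves the $\mathrm D_t$-derivatives of these functions only up to order $p-2$; hence $\ord\hat\varphi^s\leqslant q+p-2$. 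On the other hand, $I^s$ satisfies $\mathrm D_tI^s=\lambda^sL$, so $I^s$ is the $\mathrm D_t$-antiderivative of $\lambda^sL$, which is unique up to an additive constant since null divergences in one independent variable are constants; because $\mathrm D_t$ raises the order of a differential function of nonnegative order by exactly one, this gives $\ord I^s\leqslant\ord(\lambda^sL)-1\leqslant\max(q,r)-1$.

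Next I would establish the identities
\[
-\mathrm D_t^k\hat H=\sum_{s=1}^p(\mathrm D_t^k\hat\varphi^s)I^s,\qquad k=0,1,\dots,p-1,
\]
by induction on $k$. The base case $k=0$ is $\hat H=-\hat\varphi^sI^s$, obtained from $H=-\varphi^sI^s$ upon multiplication by $\mathrm W(\lambda^1,\dots,\lambda^p)$. Passing from $k$ to $k+1$ one differentiates and uses $\mathrm D_tI^s=\lambda^sL$, which produces the extra term $\big(\sum_s(\mathrm D_t^k\hat\varphi^s)\lambda^s\big)L$; expanding $\mathrm D_t^k\hat\varphi^s$ by the Leibniz rule as a sum of $(\mathrm D_t^{k-j}\mathrm W(\lambda^1,\dots,\lambda^p))(\mathrm D_t^j\varphi^s)$ with $0\leqslant j\leqslant k$ and contracting with $\lambda^s$ reduces this term to the relations $\sum_s(\mathrm D_t^j\varphi^s)\lambda^s=0$, $j=0,\dots,p-2$, which are exactly~\eqref{eq:DerivativePropertiesAdjointFunctions}; so the extra term vanishes whenever $k\leqslant p-2$, which is precisely the range needed. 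Setting $k=p-1$ and estimating the right-hand side by the two bounds above yields $\ord(\mathrm D_t^{p-1}\hat H)\leqslant\max(q+2p-3,\ \max(q,r)-1)$; since $\ord(\mathrm D_t^{p-1}\hat H)=\ord\hat H+(p-1)$ when $\ord\hat H\geqslant0$ and the asserted bound is trivial otherwise, it follows that $\ord\hat H\leqslant\max(q+p-2,\ \max(q,r)-p)$, and the inequality $q-p\leqslant q+p-2$ collapses the right-hand side to $\max(r-p,\ q+p-2)$.

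The only non-routine point is the vanishing of the ``defect'' $\sum_s(\mathrm D_t^k\hat\varphi^s)\lambda^s$ for $k\leqslant p-2$: this is the cancellation mechanism that forces $\hat H$ to have much smaller order than the individual first integrals $I^s$ entering $\hat H=-\hat\varphi^sI^s$, and it is where~\eqref{eq:DerivativePropertiesAdjointFunctions} is used in an essential way. Everything else amounts to bookkeeping with orders, together with the facts that $\mathrm D_t$ increments order by one on functions of nonnegative order and that a Wronskian of $m$ differential functions involves their derivatives only up to order $m-1$. Finally, I would note that the bound does not depend on the choices left open in~\eqref{eq:ODEFormInverseCLAlternative}: replacing $\hat H$ by $\hat H+c_s\hat\varphi^s$ (the effect of $H\mapsto H+c_s\varphi^s$ and of changing the $I^s$ by additive constants) cannot raise the order beyond $\max(r-p,q+p-2)$, since $\ord\hat\varphi^s\leqslant q+p-2$ already lies within this bound.
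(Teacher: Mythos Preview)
Your argument is correct, but it proceeds differently from the paper's. The paper works one level higher: it reads the bound directly from the alternative representation~\eqref{eq:ODEFormInverseCLAlternative} by observing that the $(p{+}1)\times(p{+}1)$ Wronskian $\mathrm W(\hat\varphi^1,\dots,\hat\varphi^p,\hat H)$ involves $\mathrm D_t^p\hat\varphi^s$ (of order $\leqslant q+2p-2$) and $\mathrm D_t^p\hat H$, while the left side $(\mathrm W(\lambda^1,\dots,\lambda^p))^pL$ has order at most $\max(q+p-1,r)$; this forces $\ord\mathrm D_t^p\hat H\leqslant\max(r,q+2p-2)$ and hence the claim. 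No reference to the first integrals~$I^s$ or to the system~\eqref{eq:SystemForFandL} is needed.

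Your route instead establishes a ``hatted'' analogue of~\eqref{eq:SystemForFandL} and stops at $k=p-1$, paying for this with the separate bound $\ord I^s\leqslant\max(q,r)-1$. What this buys is that the cancellation identities~\eqref{eq:DerivativePropertiesAdjointFunctions} are made explicit as the mechanism driving the estimate, and the argument avoids having to argue that the $\mathrm D_t^p\hat H$ term in the Wronskian cannot be cancelled (which the paper leaves implicit, relying on $\mathrm W(\hat\varphi^1,\dots,\hat\varphi^p)\neq0$). The paper's version is shorter and uses only the final representation; yours is slightly longer but more self-contained about where each inequality comes from. Your closing remark on the ambiguity $\hat H\mapsto\hat H+c_s\hat\varphi^s$ is a nice addition not present in the paper.
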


\begin{proof}
 By definition of~$q$, we have $q\in\{-\infty\}\cup2\mathbb N_0$, $\ord\lambda^s\leqslant q$ and hence $\ord\mathrm W(\lambda^1,\dots,\lambda^p)\leqslant q+p-1$. The definition of $\hat\varphi^s$ implies that $\ord\hat\varphi^s\leqslant q+p-2$ and thus $\ord\mathrm D_t^p\hat\varphi^s\leqslant q+2p-2$. As $\ord L=r$, from the alternative representation~\eqref{eq:ODEFormInverseCLAlternative} we obtain the estimate $\ord\mathrm D_t^p\hat H\leqslant\max(r,q+2p-2)$. Therefore, $\ord\hat H\leqslant\max(r-p,q+p-2)$, proving the assertion.
\end{proof}

\begin{corollary}
 If $q\leqslant r-2p+2$ then $\ord\hat H\leqslant r-p$.
\end{corollary}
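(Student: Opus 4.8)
The plan is to obtain the claim directly from Lemma~\ref{lem:OnOrderOfRepresentation}, which already establishes the order estimate $\ord\hat H\leqslant\max(r-p,\,q+p-2)$ for the differential function~$\hat H$ in the alternative representation~\eqref{eq:ODEFormInverseCLAlternative}. All that remains is to simplify this maximum under the extra hypothesis on~$q$.

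First I would note that adding $p-2$ to both sides of the assumed inequality $q\leqslant r-2p+2$ yields the equivalent inequality $q+p-2\leqslant r-p$. Hence the second argument of the maximum appearing in Lemma~\ref{lem:OnOrderOfRepresentation} never exceeds the first, so $\max(r-p,\,q+p-2)=r-p$, and the bound $\ord\hat H\leqslant r-p$ follows at once.

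There is no real obstacle here, since the corollary is a purely arithmetic specialization of the preceding lemma; the only point deserving a moment's attention is the degenerate case $q=-\infty$, which arises when none of the~$\lambda^s$ involves derivatives of~$u$. In that case the hypothesis holds trivially and $q+p-2=-\infty\leqslant r-p$ as well, so the same conclusion is valid.
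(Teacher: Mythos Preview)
Your argument is correct and matches the paper's own proof essentially verbatim: both simply add $p-2$ to the hypothesis to obtain $q+p-2\leqslant r-p$ and then invoke Lemma~\ref{lem:OnOrderOfRepresentation} to collapse the maximum. Your additional remark on the degenerate case $q=-\infty$ is a harmless clarification not present in the paper.
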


\begin{proof}
The inequality $q\leqslant r-2p+2$ implies that $q+p-2\leqslant r-p$ and therefore the estimate for $\ord\hat H$ from Lemma~\ref{lem:OnOrderOfRepresentation} reduces to $\ord\hat H\leqslant r-p$.
\end{proof}

\begin{lemma}
Let $\ord\lambda^s\leqslant r-2p+1$ for all~$s$. Then in the representation~\eqref{eq:ODEFormInverseCLAlternative} $\ord L=r$ if and only if  $p\leqslant r$ and $\ord\hat H=r-p$.
\end{lemma}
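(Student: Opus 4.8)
Here is the plan I would follow.

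The plan is to read off $\ord L$ directly from the alternative representation~\eqref{eq:ODEFormInverseCLAlternative} by expanding its numerator Wronskian along the last column and isolating the one summand that can carry a derivative of order~$r$. Put $W:=\mathrm W(\lambda^1,\dots,\lambda^p)$ and $q:=\max_s\ord\lambda^s\leqslant r-2p+1$. Since the $\lambda^s$ are totally linearly independent, $W$ is a nonvanishing differential function, and $\ord W\leqslant q+p-1\leqslant r-p$. By~\eqref{eq:DefinitionAdjointFunctions} each $\hat\varphi^s=W\varphi^s$ equals, up to sign, the order-$(p-1)$ Wronskian of $\lambda^1,\dots,\lambda^p$ with $\lambda^s$ deleted, so $\ord\hat\varphi^s\leqslant q+p-2$. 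Expanding the numerator of~\eqref{eq:ODEFormInverseCLAlternative} along its last column then gives
\[
 L=(-1)^p\sum_{k=0}^{p}(-1)^{k+p}(\mathrm D_t^k\hat H)\,\frac{\mathrm W_k}{W^p},
\]
where $\mathrm W_k$ is the $p\times p$ minor obtained by deleting the last column and the row containing $\mathrm D_t^k\hat H$. In particular $\mathrm W_p=\mathrm W(\hat\varphi^1,\dots,\hat\varphi^p)=W^p\,\mathrm W(\varphi^1,\dots,\varphi^p)$ by homogeneity of the Wronskian, and since the adjoint functions are again totally linearly independent, $\mathrm W(\varphi^1,\dots,\varphi^p)$ is a nonvanishing differential function of order $\leqslant q+2p-2\leqslant r-1$ (indeed $\mathrm W(\varphi^1,\dots,\varphi^p)=\pm W^{-1}$); hence the leading coefficient $\mathrm W_p/W^p$ is a nonzero differential function of order $\leqslant r-1$.

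With this the order bookkeeping is routine. For $0\leqslant k\leqslant p-1$ the highest-order row in $\mathrm W_k$ is $\mathrm D_t^p\hat\varphi^s$, of order $\leqslant q+2p-2\leqslant r-1$, so $\ord\bigl((\mathrm D_t^k\hat H)\mathrm W_k/W^p\bigr)\leqslant\max(\ord\hat H+k,\,r-1)$. For $k=p$, put $d:=\ord\hat H$; if $d$ is finite then $\mathrm D_t^p\hat H$ has order $d+p$ with $u^{(d+p)}$ occurring linearly with the coefficient $\partial\hat H/\partial u^{(d)}\ne0$, while $\mathrm W_p/W^p$ has order $\leqslant r-1$, so this summand — the only one whose order can reach $r$ — has order $\leqslant\max(d+p,\,r-1)$, with $u^{(d+p)}$ surviving in it whenever $d+p>r-1$. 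In the ``if'' direction one sets $d=r-p\,(\geqslant0)$: the $k=p$ summand then has order exactly $r$ and carries $u^{(r)}$ with the nonzero coefficient $\pm(\partial\hat H/\partial u^{(r-p)})\,\mathrm W_p/W^p$, whereas every other summand has order $\leqslant r-1$; hence $\ord L=r$.

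In the ``only if'' direction, assume $\ord L=r$. The corollary to Lemma~\ref{lem:OnOrderOfRepresentation} applies (its hypothesis $q\leqslant r-2p+2$ is implied by ours) and gives $\ord\hat H\leqslant r-p$. If one had $\ord\hat H\leqslant r-p-1$ (this also covers $\ord\hat H=-\infty$), then every summand — including the one with $k=p$, since now $\ord\hat H+p\leqslant r-1$ — would have order $\leqslant r-1$, forcing $\ord L\leqslant r-1$, a contradiction. Thus $\ord\hat H$ is not $\leqslant r-p-1$; as an element of $\{-\infty\}\cup\mathbb N_0$ this forces $\ord\hat H\in\mathbb N_0$ with $\ord\hat H\geqslant r-p$, which combined with $\ord\hat H\leqslant r-p$ gives $\ord\hat H=r-p$, and $\ord\hat H\geqslant0$ then yields $p\leqslant r$.

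The one delicate point is the claim that $\mathrm W_p/W^p$ is a \emph{nonzero} differential function of order \emph{strictly} below $r$: this is exactly where the nonvanishing of $W$, the Wronskian homogeneity $\mathrm W(W\varphi^1,\dots,W\varphi^p)=W^p\mathrm W(\varphi^1,\dots,\varphi^p)$, and the total linear independence of the adjoint functions are used, and where the hypothesis $\ord\lambda^s\leqslant r-2p+1$ enters — to keep $\ord W\leqslant r-p$ and the lower minors of order $\leqslant r-1$, so that no contribution of order $r$ comes from the $\hat\varphi^s$ or from the denominator. Everything else is elementary bookkeeping with Wronskian determinants and total derivatives.
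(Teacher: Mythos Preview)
Your proof is correct and follows essentially the same route as the paper's: both arguments rest on the order bounds $\ord W\leqslant r-p$, $\ord\hat\varphi^s\leqslant r-p-1$, $\ord\mathrm D_t^p\hat\varphi^s\leqslant r-1$ and then read off $\ord L$ from the Wronskian in~\eqref{eq:ODEFormInverseCLAlternative}. You make the argument more explicit by expanding along the last column and isolating the nonzero leading coefficient $\mathrm W_p/W^p$, and you derive $p\leqslant r$ at the end from $\ord\hat H=r-p\geqslant0$ rather than treating it as a separate preliminary case, but these are presentational differences only.
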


\begin{proof}
Similarly to the above, we have $\ord\mathrm W(\lambda^1,\dots,\lambda^p)\leqslant r-p$, $\ord\hat\varphi^s\leqslant r-p-1$ and thus $\ord\mathrm D_t^p\hat\varphi^s\leqslant r-1$.

Suppose that $\ord L=r$.
If $p>r$, then $\ord\lambda^s=-\infty$ and hence necessarily $\ord\hat H\geqslant0$.
This implies that the order of the right hand side in~\eqref{eq:ODEFormInverseCLAlternative} is not less than~$p$, which is greater than the order of the left hand side~$L$.
The obtained contradiction means that $p\leqslant r$.
Then $\ord\mathrm D_t^p\hat H=r$, which is equivalent to $\ord\hat H=r-p$.

The converse assertion is obvious.
\end{proof}

\begin{corollary}
 If $\ord\lambda^s\leqslant r-2p$, then additionally in the representation~\eqref{eq:ODEFormInverseCL} $\ord H=\ord\hat H= r-p$.
\end{corollary}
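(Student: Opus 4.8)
The plan is to read this corollary off from the preceding lemma together with the defining relation $\hat H=\mathrm W(\lambda^1,\dots,\lambda^p)H$. First I would note that the hypothesis $\ord\lambda^s\leqslant r-2p$ is strictly stronger than the hypothesis $\ord\lambda^s\leqslant r-2p+1$ of the preceding lemma, so that lemma applies; under the standing assumption $\ord L=r$ it already gives $p\leqslant r$ and $\ord\hat H=r-p$. Hence the equality $\ord\hat H=r-p$ is obtained for free, and it only remains to prove $\ord H=r-p$.

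For the latter I would reuse the order estimate established in the proof of Lemma~\ref{lem:OnOrderOfRepresentation}: with $q:=\max_s\ord\lambda^s$ one has $\ord\mathrm W(\lambda^1,\dots,\lambda^p)\leqslant q+p-1$. The present hypothesis $q\leqslant r-2p$ then yields $\ord\mathrm W(\lambda^1,\dots,\lambda^p)\leqslant r-p-1<r-p=\ord\hat H$, so the Wronskian of the integrating factors involves strictly fewer derivatives of~$u$ than~$\hat H$ does.

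The conclusion then follows from the elementary observation that dividing a differential function by a nonvanishing differential function of strictly smaller order does not change its order. Concretely, since $H=\hat H/\mathrm W(\lambda^1,\dots,\lambda^p)$ and the derivative $\mathrm D_t^{r-p}u$ occurs in~$\hat H$ but, by the strict inequality above, not in $\mathrm W(\lambda^1,\dots,\lambda^p)$, the partial derivative of~$H$ with respect to $\mathrm D_t^{r-p}u$ equals the partial derivative of~$\hat H$ with respect to $\mathrm D_t^{r-p}u$ divided by the nonvanishing function $\mathrm W(\lambda^1,\dots,\lambda^p)$, and is therefore nonzero, while no derivative of~$u$ of order exceeding~$r-p$ enters~$H$. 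Thus $\ord H=r-p=\ord\hat H$.

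I do not expect a genuine obstacle here. The only point worth isolating is why the strengthened bound $q\leqslant r-2p$ is imposed rather than the weaker bound $q\leqslant r-2p+1$ of the preceding lemma: it is precisely what guarantees $\ord\mathrm W(\lambda^1,\dots,\lambda^p)<\ord\hat H$ and hence prevents the order of~$H$ from dropping below~$r-p$ upon division by the Wronskian.
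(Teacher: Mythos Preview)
Your argument is correct and is precisely the intended derivation: apply the preceding lemma (whose hypothesis is implied by $q\leqslant r-2p$) to obtain $\ord\hat H=r-p$, then use the bound $\ord\mathrm W(\lambda^1,\dots,\lambda^p)\leqslant q+p-1\leqslant r-p-1$ together with $H=\hat H/\mathrm W(\lambda^1,\dots,\lambda^p)$ to conclude $\ord H=r-p$. The paper states the corollary without proof, and your proposal simply makes explicit the straightforward order count that the paper leaves to the reader.
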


\begin{remark}
The inverse problem on conservation laws in general becomes more complicated 
if additional restrictions, e.g., on the order of equations to be constructed are posed. 
The solution of such a modified problem may still involve representations like~\eqref{eq:ODEFormInverseCL} 
but may also need specific tools depending on the tuple of integrating factors under consideration.
Let us construct, for instance, ordinary differential equations, $L=0$, of order less than or equal to three
that admit 1, $t$ and $u'$ as integrating factors.
These differential functions are integrating factors of the elementary equation $u''=0$.
Following Example~\ref{ex:ElementarySecondOrderODE}, we obtain that
$L=\mathrm D_t^2H[u]$, where $\ord H\leqslant1$ and $u'\mathrm D_t^2H\in\mathop{\rm im}\nolimits\mathrm D_t$.
``Integrating by parts'' in the latter condition gives that $u'''H\in\mathop{\rm im}\nolimits\mathrm D_t$,
i.e., $\mathsf E(u'''H)=0$, where $\mathsf E$ is the Euler operator.
After expanding and splitting the equation $\mathsf E(u'''H)=0$ with respect to derivatives of~$u$,
we derive the system of determining equations for~$H$, which reduces to $H_{u'}=H_{uu}=H_{tu}=H_{ttt}=0$.
Therefore, the general form of equations to be constructed is $L=c_1u''+c_0$,
where $c_0$ and~$c_1$ are arbitrary constants with $c_1\ne0$.
Thus, we have the interesting phenomenon that there are no third-order ordinary differential equations
admitting 1, $t$ and $u'$ as integrating factors, which indicates a certain inconsistency among these integrating factors.
At the same time, similar second-order equations exist although they reduce to the elementary equation $u''=0$
by an obvious transformation.
\end{remark}

\section[The inverse problem on conservation laws for evolution equations]
{The inverse problem on conservation laws\\ for evolution equations}\label{sec:InverseProblemCLsEvolutionEquations}

Using the tool of Darboux transformations the results from the previous section on ordinary differential equations can be extended to the case of (1+1)-dimensional evolution equations. We present the analogous results to the ones given for ordinary differential equations. In many aspects, the theory of evolution equations is very close to the theory of ordinary differential equations and~$t$ plays the role of a parameter. A more extended discussion including the proofs of the following statements will be presented in~\cite{popo2019a}.

As with ordinary differential equations, there are a few peculiarities of evolution equations that are recalled here. It is conventional to denote the two independent variables in evolution equations $(x_1,x_2)=(t,x)$. Let
\begin{equation}\label{eq:GeneralEvolutionEquation}
 \mathcal E\colon u_t=G(t,x,u_0,\dots,u_r),\quad G_{u_r}\ne0,\quad r\geqslant2,
\end{equation}
be an evolution equation, where $u_k=\p_x^ku$, $k\in\mathbb N$, and $u_0:=u$. Without loss of generality, the conserved currents for evolution equations can be assumed to be independent of derivatives involving differentiations with respect to~$t$. In this specific situation, the definition~\eqref{eq:ConsLawDefinition} for a conserved current~$(\rho,\sigma)$ of the equation~$\mathcal E$ can be reduced to the identity
\begin{equation}\label{eq:ConsLawDefinitionEvolutionEquation}
\mathrm{{\bar D}}_t\rho+\mathrm D_x\sigma=0,\quad \mathrm{i.e.,}\quad \rho_t+\rho_*G+\mathrm D_x\sigma=0,
\end{equation}
where $\mathrm{{\bar D}}_t=\partial_t+(\mathrm D_x^kG)\partial_{u_k}$ is the restriction of the operator of total derivative with respect to~$t$ on the manifold defined by equation~$\mathcal E$ and its differential consequences in the jet space~$\mathrm J^\infty(t,x|u)$, and~$\rho_*=\rho_{u_k}\mathrm D^k_x$ is the Fr\'echet derivative of~$\rho$. The differential functions~$\rho$ and~$\sigma$ are the density and the flux of the conserved current $(\rho,\sigma)$. As evolution equations give the simplest example for systems in the extended Kovalevskaya form, the characteristic~$\lambda$ associated with the conservation law containing the conserved current~$(\rho,\sigma)$ can be expressed via the density~$\rho$~as
\[
 \lambda=\frac{\delta\rho}{\delta u},
\]
where $\delta/\delta u=(-\mathrm D_x)^k\p_{u_k}$ is the variational derivative, which coincides with the restriction of the Euler operator for differential functions that do not involve differentiation with respect to~$t$.

\begin{remark}\label{rem:OnConservedCurrentsForEvolutionEquations}
Prescribing an exact form of the conserved current~$(\rho,\sigma)$ defines the right hand side~$G$ of the corresponding evolution equation~$\mathcal E$ up to some number of arbitrary smooth functions of~$t$, and this number does not exceed~$\mathop{\rm ord}\rho$. Indeed, if $\rho$ and $\sigma$ are fixed differential functions, then the equation~\eqref{eq:ConsLawDefinitionEvolutionEquation} can be considered as $(\mathop{\rm ord}\rho)$th order inhomogeneous linear ordinary differential equation for~$G$ in total derivatives with respect to~$x$, where the variable~$t$ plays the role of a parameter. The general solution of this equation, if solutions exist at all, can be represented in the form $G=G^{\rm pi}+G^{\rm gh}$, where $G^{\rm pi}$ is a particular solution of this equation and $G^{\rm gh}$ is the general solution of the corresponding homogeneous equation $\rho_*G=0$. The solution set of the equation $\rho_*G=0$ coincides with the kernel of the operator~$\rho_*$. By Lemma~6.45 of~\cite{olve09Ay}, see also the proof of Theorem~1 in~\cite{soko88a}, the dimension of the kernel of~$\rho_*$ over the ring of smooth functions of~$t$ is not greater than~$\mathop{\rm ord}\rho$.
Moreover, only some pairs of differential functions can be conserved currents for evolution equations since the equation~\eqref{eq:ConsLawDefinitionEvolutionEquation} with prescribed values of~$\rho$ and $\sigma$ may have no solutions in~$G$.
The existence of evolution equations for a few prescribed candidates for conserved currents is even less expectable due to compatibility issues between the copies of the equation~\eqref{eq:ConsLawDefinitionEvolutionEquation} for the given pairs of $\rho$ and $\sigma$,
which also further reduces the possible arbitrariness in the form of~$G$.
\end{remark}

\begin{theorem}\label{thm:OnFormOfEvolutionEquations}
 An evolution equation of the form~\eqref{eq:GeneralEvolutionEquation} admits~$p$ linearly independent conservation laws with densities~$\rho^s$ and characteristics $\lambda^s=\delta \rho^s/\delta u$ if and only if
 \begin{equation}\label{eq:EvolutionEquationFormInverseCL}
  G=\mathrm{DT}(\lambda^1,\dots,\lambda^p)^\dag H- \sum_{s=1}^p\mathrm{DT}(\lambda^1,\dots,\cancel{\lambda^s},\dots,\lambda^p)^\dag \left(\frac{W(\lambda^1,\dots,\cancel{\lambda^s},\dots,\lambda^p)}{W(\lambda^1,\dots,\lambda^p)}\rho_t^s\right),
 \end{equation}
 where~$H$ is a differential function of $u$.
\end{theorem}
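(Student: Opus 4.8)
The plan is to follow the proof of Theorem~\ref{thm:OnFormOfODE}, treating~$t$ as a parameter so that all Wronskians~$\mathrm W$ and Darboux operators~$\mathrm{DT}$ are taken with respect to~$\mathrm D_x$, and to track the extra terms generated by~$\rho^s_t$. The first step is to recast the admissibility condition: integrating $\rho^s_*G$ by parts via the Lagrange identity~\eqref{eq:LagrangeIdentity} in the single variable~$x$, $\rho^s_*G=(\delta\rho^s/\delta u)\,G+\mathrm D_x(\cdots)=\lambda^sG+\mathrm D_x(\cdots)$, the defining relation~\eqref{eq:ConsLawDefinitionEvolutionEquation} for a conserved current~$(\rho^s,\sigma^s)$ becomes equivalent to the membership $\lambda^sG+\rho^s_t\in\mathop{\rm im}\nolimits\mathrm D_x$, and conversely this membership produces a suitable flux~$\sigma^s$. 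Hence prescribing a density~$\rho^s$ together with its characteristic $\lambda^s=\delta\rho^s/\delta u$ for a conservation law of~$u_t=G$ amounts to requiring $\lambda^sG+\rho^s_t\in\mathop{\rm im}\nolimits\mathrm D_x$, so the theorem reduces to describing the differential functions~$G$ that satisfy these~$p$ conditions simultaneously.

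For the direct implication I would fix, for each~$s$, a differential function~$J^s$ with $\lambda^sG+\rho^s_t=\mathrm D_xJ^s$, introduce the adjoint functions $\varphi^1,\dots,\varphi^p$ to $\lambda^1,\dots,\lambda^p$ as in~\eqref{eq:DefinitionAdjointFunctions} (with~$\mathrm D_x$ in place of~$\mathrm D_t$), recall the orthogonality relations~\eqref{eq:DerivativePropertiesAdjointFunctions}, and set $H=-\varphi^sJ^s$. Differentiating this equality~$p$ times with respect to~$x$ and replacing $\mathrm D_xJ^s$ by $\lambda^sG+\rho^s_t$ at each step, the orthogonality relations annihilate every contribution of~$G$ except the top-order one and produce the analogue of the linear system~\eqref{eq:SystemForFandL}, now carrying additional inhomogeneous terms built from the~$\varphi^s$ and from the~$\rho^s_t$ and their~$\mathrm D_x$-derivatives. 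Solving this system of $p+1$ equations for~$G$ by Cramer's rule, the $H$-dependent part of the answer is obtained exactly as~$L$ is obtained from~$H$ in the proof of Theorem~\ref{thm:OnFormOfODE}, and equals $\mathrm{DT}(\lambda^1,\dots,\lambda^p)^\dag H$ by the relation $\mathrm{DT}(\varphi^1,\dots,\varphi^p)=(-1)^p\mathrm{DT}(\lambda^1,\dots,\lambda^p)^\dag$ used there; the part depending on a prescribed~$\rho^s_t$, after expanding the relevant bordered determinant and repeatedly using the Wronskian identities, collapses into the single summand $-\,\mathrm{DT}(\lambda^1,\dots,\cancel{\lambda^s},\dots,\lambda^p)^\dag\big((\mathrm W(\lambda^1,\dots,\cancel{\lambda^s},\dots,\lambda^p)/\mathrm W(\lambda^1,\dots,\lambda^p))\,\rho^s_t\big)$ of~\eqref{eq:EvolutionEquationFormInverseCL}.

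For the converse I would substitute~\eqref{eq:EvolutionEquationFormInverseCL} into $\lambda^sG+\rho^s_t$ and reduce modulo~$\mathop{\rm im}\nolimits\mathrm D_x$ term by term with~\eqref{eq:LagrangeIdentity}, just as in the closing argument of the proof of Theorem~\ref{thm:OnFormOfODE}. The term $\lambda^s\,\mathrm{DT}(\lambda^1,\dots,\lambda^p)^\dag H$ is congruent to $H\,\mathrm{DT}(\lambda^1,\dots,\lambda^p)\lambda^s=0$ because $\lambda^s$ lies in the kernel of~$\mathrm{DT}(\lambda^1,\dots,\lambda^p)$, and every summand of the correction with index different from~$s$ is congruent to~$0$ for the same reason, $\lambda^s$ then belonging to the defining tuple of the relevant Darboux operator; the remaining summand reduces to $(\mathrm W(\lambda^1,\dots,\cancel{\lambda^s},\dots,\lambda^p)/\mathrm W(\lambda^1,\dots,\lambda^p))\,\rho^s_t\cdot\mathrm{DT}(\lambda^1,\dots,\cancel{\lambda^s},\dots,\lambda^p)\lambda^s$, which equals~$-\rho^s_t$ once $\mathrm W(\lambda^1,\dots,\cancel{\lambda^s},\dots,\lambda^p,\lambda^s)$ is evaluated by restoring~$\lambda^s$ to its proper position. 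Thus $\lambda^sG+\rho^s_t\in\mathop{\rm im}\nolimits\mathrm D_x$ for all~$s$, each~$\rho^s$ is a density of a conservation law of~$u_t=G$ with characteristic~$\lambda^s$, and total linear independence of the~$\lambda^s$ makes these~$p$ conservation laws linearly independent. As a sanity check, for $t$-independent densities one has $\rho^s_t=0$, the correction sum vanishes, and~\eqref{eq:EvolutionEquationFormInverseCL} collapses to~\eqref{eq:ODEFormInverseCL} with~$L$ replaced by~$G$.

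The hard part will be the bookkeeping in the direct implication: verifying that the~$\rho^s_t$-dependent contributions to Cramer's formula assemble into exactly the~$p$ summands of~\eqref{eq:EvolutionEquationFormInverseCL}. This rests on the Wronskian normalization identities, the antisymmetry of~$\mathrm W$ under permutations of its arguments, and a careful choice of sign conventions relating the adjoint functions~$\varphi^s$, the ratios $\mathrm W(\lambda^1,\dots,\cancel{\lambda^s},\dots,\lambda^p)/\mathrm W(\lambda^1,\dots,\lambda^p)$, and the adjoint operators $\mathrm{DT}(\lambda^1,\dots,\cancel{\lambda^s},\dots,\lambda^p)^\dag$. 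As in the ordinary differential equation case, passing to a ``hatted'' alternative representation analogous to~\eqref{eq:ODEFormInverseCLAlternative} should keep both the order estimates and the cancellation of the spurious higher derivatives introduced by the Darboux operators under control.
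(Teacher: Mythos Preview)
The paper does not give a self-contained proof of this theorem; it defers the details to~\cite{popo16a} and only remarks that the result is the natural extension of Theorem~\ref{thm:OnFormOfODE}, with~$t$ treated as a parameter and all Wronskians and Darboux operators taken in~$\mathrm D_x$. Your plan is precisely this extension---reformulate the density condition as $\lambda^sG+\rho^s_t\in\mathop{\rm im}\nolimits\mathrm D_x$, introduce the adjoint functions~$\varphi^s$ and $H=-\varphi^sJ^s$, rebuild the linear system~\eqref{eq:SystemForFandL} with the extra inhomogeneities~$\rho^s_t$, and solve by Cramer's rule---so it matches the paper's indicated approach. One caution on your converse: when you write that the surviving summand ``equals~$-\rho^s_t$ once $\mathrm W(\lambda^1,\dots,\cancel{\lambda^s},\dots,\lambda^p,\lambda^s)$ is evaluated by restoring~$\lambda^s$ to its proper position,'' that restoration introduces a factor $(-1)^{p-s}$, and you also dropped the overall minus sign from the correction sum when passing through the Lagrange identity; make sure these two signs combine with the convention in~\eqref{eq:EvolutionEquationFormInverseCL} to give exactly~$-\rho^s_t$, since this is the one place your sketch is not yet a computation.
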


Theorem~\ref{thm:OnFormOfEvolutionEquations} is the natural extension of Theorem~\ref{thm:OnFormOfODE} to evolution equations. Note that in all Wronskians and Darboux transformations the derivatives are total derivatives with respect to~$x$.

Proposition~\ref{pro:RepresentationODESpanBasis} has a natural counterpart for evolution equations. Namely, the representation~\eqref{eq:EvolutionEquationFormInverseCL} depends on the linear span~$\langle\rho^1,\dots,\rho^p\rangle$ rather than on the fixed basis~$\{\rho^1,\dots,\rho^p\}$. Similarly, Proposition~\ref{pro:RepresentationODEUndeterminedness} can be extended to evolution equations, in that for fixed~$G$ and~$\langle\rho^1,\dots,\rho^p\rangle$, the general form of~$H$ is $H=H^0+f^s(t)\varphi^s$, where $H^0$ is a particular value of~$H$, $f^s(t)$ are arbitrary smooth functions of~$t$ and $\varphi^1,\dots,\varphi^p$ are adjoint functions to $\lambda^1,\dots,\lambda^p$ with respect to the independent variable~$x$.

Similar to~\eqref{eq:ODEFormInverseCLAlternative} we also have an alternative representation to~\eqref{eq:EvolutionEquationFormInverseCL}.
There are also order estimations for differential functions in~\eqref{eq:EvolutionEquationFormInverseCL} analogous to Lemma~\ref{lem:OnOrderOfRepresentation} and its corollaries.

\begin{example}
 To give an example for the representation~\eqref{eq:EvolutionEquationFormInverseCL}, consider the Korteweg--de Vries (KdV) equation
 \begin{equation}\label{eq:KdVEquationInverseProblem}
  u_t+uu_x+u_{xxx}=0.
 \end{equation}
It is well known that~\eqref{eq:KdVEquationInverseProblem} admits an infinite sequence of linearly independent conservation laws of growing order~\cite{miur68a}.
The two most elementary conservation laws for the KdV equation have the characteristics~$\lambda^1=1$, $\lambda^2=u$ and the conserved currents
\[
\left(u,\frac12u^2+u_{xx}\right),\quad
\left(\frac12u^2,\frac13u^3+uu_{xx}-\frac12u_x^2\right),
 \]
respectively.
For these two conserved currents and $G=-uu_x-u_{xxx}$,
the representation~\eqref{eq:EvolutionEquationFormInverseCL} becomes
\begin{equation}\label{eq:RepresentationKdVEquation}
G=\mathrm D_x\left(\mathrm D_x+\frac{u_{xx}}{u_x}\right)H,\quad H=-\frac12u_x-\frac{u^3}{6u_x}.
\end{equation}
Varying the function~$H$ in the expression~\eqref{eq:RepresentationKdVEquation} we obtain the representation for all evolution equations that admit the characteristics~$\lambda^1=1$ and $\lambda^2=u$.
\end{example}

\section[Inverse problems for infinite dimensional spaces of zeroth-order characteristics]
{Inverse problems for infinite dimensional spaces\\ of zeroth-order characteristics}\label{sec:ConservativeParameterizationWithDirectMethods}

We discuss the inverse problem on conservation laws for the case of a single partial differential equation of a single unknown function, i.e., $m=l=1$ and $n>1$,
that possesses a space of conservation-law characteristics parameterized by arbitrary smooth functions of one or several arguments
that themselves are functions of independent and dependent variables. 
Recall that such characteristics are typical for hydrodynamics models, see, e.g., Section~\ref{sec:ConservativeParameterizationVorticityEquation} below.  
Thus, in this section we consider a differential equation $\mathcal L\colon L[u]=0$ for the unknown function $u=u(x)$ of the variables $x=(x_1,\dots,x_n)$.

\begin{lemma}\label{lem:FirstLemmaCP}
A partial differential equation $\mathcal L\colon L[u]=0$ for a single unknown function~$u=u(x)$ of the variables $x=(x_1,\dots,x_n)$ admits the family of conservation-law characteristics $\{h(x_1)\}$, where $h$ runs through the set of smooth functions of~$x_1$,
if and only if the differential function~$L$ is represented in the form
\begin{equation}\label{eq:GeneralRepresentationForLemma1}
L[u]=\mathrm D_2F^2[u]+\cdots+\mathrm D_nF^n[u]
\end{equation}
for some differential functions $F^2$, \dots, $F^n$.
\end{lemma}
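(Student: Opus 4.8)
For the \emph{if} direction I would simply observe that if $L=\mathrm D_2F^2+\dots+\mathrm D_nF^n$, then for any smooth $h=h(x_1)$ one has $\mathrm D_ih=0$ for $i=2,\dots,n$, hence
\begin{equation*}
hL=\mathrm D_2(hF^2)+\dots+\mathrm D_n(hF^n)=\mathop{\rm Div}(0,hF^2,\dots,hF^n)\in\mathop{\rm im}\mathop{\rm Div}=\mathrm{ker}\,\mathsf E ,
\end{equation*}
and since $m=l=1$ the relation $\mathsf E(hL)=0$ is exactly the determining equation~\eqref{eq:DeterminingEquationsCharacteristics} for the tuple $\lambda=h$; so every $h(x_1)$ is a conservation law characteristic of~$\mathcal L$.

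For the \emph{only if} direction, assume $\mathsf E(hL)=0$ for all smooth $h=h(x_1)$. My plan is to turn this family of conditions into the vanishing of the higher Euler operators $\mathsf E^{1,j\delta_1}$ (defined in Section~\ref{sec:InverseProblemOnCLs}) on~$L$, and then to read that off as $\mathrm D_2,\dots,\mathrm D_n$-exactness of~$L$. First I would prove the product identity
\begin{equation*}
\mathsf E(hL)=\sum_{j\geqslant0}(-1)^j\,h^{(j)}(x_1)\,\mathsf E^{1,j\delta_1}L ,
\end{equation*}
a finite sum since $\ord L<\infty$, with $j=0$ term $\mathsf E L$. This comes from expanding $(-\mathrm D)^\alpha\big(h\,\partial_{u_\alpha}L\big)$ by the Leibniz rule in the $\mathrm D_1$-direction, using that each $\mathrm D_i$ with $i\geqslant2$ commutes with multiplication by $h$ and its derivatives, and then summing over~$\alpha$ and re-indexing so the inner sums become higher Euler operators. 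Because the functions $\mathsf E^{1,j\delta_1}L$ do not depend on~$h$, evaluating the identity at a fixed value $\xi$ of~$x_1$ and choosing, for each~$k$, a function~$h$ with $h^{(k)}(\xi)=1$ and $h^{(j)}(\xi)=0$ for $j\ne k$ (e.g.\ $h(x_1)=(x_1-\xi)^k/k!$; alternatively $h=\mathrm e^{\zeta x_1}$ with a free parameter~$\zeta$ handles all~$k$ at once) forces $\mathsf E^{1,j\delta_1}L=0$ for every $j\geqslant0$. In particular $\mathsf E L=0$, so $L=\mathop{\rm Div}F$ for some tuple~$F$ by the theorem $\mathop{\rm im}\mathop{\rm Div}=\mathrm{ker}\,\mathsf E$ recalled in Section~\ref{sec:ConservationLaws}.

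The remaining step, and the crux, is to show that the whole family $\mathsf E^{1,j\delta_1}L=0$, $j\geqslant0$, is equivalent to the existence of differential functions $F^2,\dots,F^n$ with $L=\mathrm D_2F^2+\dots+\mathrm D_nF^n$. This is the analogue of $\mathop{\rm im}\mathop{\rm Div}=\mathrm{ker}\,\mathsf E$ relative to the distinguished independent variable~$x_1$. I would prove it either via the corresponding partial homotopy operator, or by the following slicing: view $L$ as a differential function in the $n-1$ independent variables $x_2,\dots,x_n$ of the finitely many dependent variables $v^{(k)}:=u_{k\delta_1}$ that occur in~$L$ (with $x_1$ an inert parameter); then $\mathrm D_2,\dots,\mathrm D_n$ act as the associated total derivatives, and by the fundamental theorem in $n-1$ variables, $L$ is such a divergence iff the Euler operator with respect to every field~$v^{(k)}$ annihilates~$L$. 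A triangular relation---whose leading term is the Euler operator of~$v^{(k)}$ and whose lower terms are $\mathrm D_1$-derivatives of the higher ones---identifies the two families of conditions; as the resulting $F^2,\dots,F^n$ are differential functions of the~$v^{(k)}$, hence of~$u$, this gives the representation~\eqref{eq:GeneralRepresentationForLemma1}.

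The first two steps are routine manipulations with the Euler and higher Euler operators; I expect the last step to be the real obstacle, since establishing the equivalence of $\mathsf E^{1,j\delta_1}L=0$ ($j\geqslant0$) with $\mathrm D_2,\dots,\mathrm D_n$-exactness requires the variational-complex fundamental theorem relative to a single independent variable (equivalently, a homotopy operator in the remaining $n-1$ variables).
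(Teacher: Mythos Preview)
Your proposal is correct and follows essentially the same route as the paper: the paper carries out precisely the ``slicing'' you describe---treating $x_1$ as a parameter and $u^k:=\partial_{x_1}^k u$, $k=0,\dots,N$, as new dependent variables---so that the coefficients of $h^{(s)}$ in the expansion of $\mathsf E(hL)$ are written directly in terms of the restricted Euler operators $\hat{\mathsf E}^k$ (the Euler operators in $x_2,\dots,x_n$ with respect to~$u^k$); after the same upper-triangular inversion you anticipate, one obtains $\hat{\mathsf E}^kL=0$ for all~$k$, and~\eqref{eq:GeneralRepresentationForLemma1} then follows immediately from \cite[Theorem~4.7]{olve86Ay} applied in the sliced jet space. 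Thus the step you flag as the ``real obstacle'' is in fact routine once the slicing is in place: your conditions $\mathsf E^{1,j\delta_1}L=0$ and the sliced conditions $\hat{\mathsf E}^kL=0$ are related exactly by the triangular system $\mathsf E^{1,s\delta_1}L=\sum_{k\geqslant s}\binom{k}{s}(-\mathrm D_1)^{k-s}\hat{\mathsf E}^kL$, which the paper writes down and inverts from the top.
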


\begin{proof}
The ``if'' part is obvious.
Let us prove the ``only if'' part.
As the total divergence in the representation~\eqref{eq:GeneralRepresentationForLemma1} does not include the total derivative with respect to $x_1$, we can interpret the variable $x_1$ as a parameter and introduce $u^k:=\partial^ku/\partial x_1^k$, $k=0,\dots,N$ as dependent variables, where $N=\max\{\alpha_1\mid L_{u_\alpha}\ne0\}$.
In other words, $N$ is the highest number of differentiations with respect to $x_1$ that appears in derivatives involved in~$L[u]$.

Consider the expanded Euler operator $\hat{\mathsf E}=(\hat{\mathsf E}^0,\dots,\hat{\mathsf E}^N)$ with components
\[
\hat{\mathsf E}^k=(-\hat{\mathrm D})^{\hat \alpha}\frac{\p}{\p u^k_{(0,\hat\alpha)}}\,,\quad k=0,\dots,N.
\]
Here $(-\hat{\mathrm D})^{\hat \alpha}=(-\mathrm D_2)^{\alpha_2}\cdots (-\mathrm D_n)^{\alpha_n}$,
$\hat\alpha=(\alpha_2,\dots,\alpha_n)$ runs through the multi-index set~$\mathbb N_0^{\,\,n-1}$
and $u^k_{(0,\hat\alpha)}$ is the jet variable identified with $\partial^{k+|\hat\alpha|}u/\partial x_1^k\partial x_2^{\alpha_2}\cdots\partial x_n^{\alpha_n}$.
Thus, here and in what follows the hat over a tuple symbol denotes the exclusion of the first component.
It then suffices to prove that the action of the operator $\hat{\mathsf E}$ on $\mathcal L$ is zero,
 \[
 \hat{\mathsf E}^kL=(-\hat{\mathrm D})^{\hat \alpha}L_{u^k_{(0,\hat\alpha)}}=0,\quad k=0,\dots,N.
 \]

Let $\mathsf E$ be the standard Euler operator with respect to~$u$, $\mathsf E=(-\mathrm D)^\alpha\p_{u_\alpha}$.
If the function $h=h(x_1)$ is a conservation-law characteristic for the equation $L[u]=0$, then we have $\mathsf E(hL)=0$, which can be expanded in the following way:
\begin{align}\nonumber
 \mathsf E(hL)&=(-\mathrm D)^\alpha(hL)_{u_\alpha}=
 (-\mathrm D_1)^{\alpha_1}(-\hat{\mathrm D})^{\hat\alpha}(hL)_{u^{\alpha_1}_{\hat\alpha}}=(-\mathrm D_1)^{\alpha_1}(h\hat{\mathsf E}^{\alpha_1}L)
\\  \nonumber
 &=\sum_{\alpha_1=0}^N(-1)^{\alpha_1}\sum_{s=0}^{\alpha_1}\binom{\alpha_1}s (\partial^s_{x_1}h)\mathrm D_1^{\alpha_1-s}(\hat{\mathsf E}^{\alpha_1}L)
\\ \label{eq:GeneralRepresentationForLemma1SystemForSplitting}
 &=\sum_{s=0}^N(\partial_{x_1}^sh)\sum_{\alpha_1=s}^N(-1)^{\alpha_1}\binom{\alpha_1}s \mathrm D_1^{\alpha_1-s}(\hat{\mathsf E}^{\alpha_1}L)=0.
\end{align}
Here we used the definition of the expanded Euler operator~$\hat{\mathsf E}$ and the fact that $h$ depends on~$x_1$ only. Since the function $h$ is arbitrary, the last equality in~\eqref{eq:GeneralRepresentationForLemma1SystemForSplitting} can be split with respect to the various derivatives of $h$, which gives the system
\begin{equation}\label{eq:SplitEqForLwithChar_h(x_1)}
 \sum_{\alpha_1=s}^N(-1)^{\alpha_1}\binom{\alpha_1}s \mathrm D_1^{\alpha_1-s}(\hat{\mathsf E}^{\alpha_1}L)=0,\quad s=0,\dots,N.
\end{equation}
We start with the highest value~$s=N$ and proceed to the lower values of~$s$
using the results for the higher values.
Thus, the equation~\eqref{eq:SplitEqForLwithChar_h(x_1)} with~$s=N$ is $\hat{\mathsf E}^NL=0$.
Using this result in the equation~\eqref{eq:SplitEqForLwithChar_h(x_1)} with $s=N-1$
and continuing in a similar way up to $s=0$, we obtain the simplified system
\[
\hat{\mathsf E}^NL=0,\quad \hat{\mathsf E}^{N-1}L=0,\quad\dots,\quad\hat{\mathsf E}^0L=0.
\]
In view of~\cite[Theorem~4.7]{olve93Ay},
this system implies that $L[u]$ is of the form~\eqref{eq:GeneralRepresentationForLemma1},
which completes the proof.
\end{proof}

\begin{corollary}
 If a differential equation $\mathcal L\colon L[u]=0$ admits $N+1$ locally linearly independent conservation laws with characteristics depending only on the single variable $x_1$, $h^{s'}=h^{s'}(x_1)$, $s'=0,\dots,N$, then the representation~\eqref{eq:GeneralRepresentationForLemma1} holds and hence the equation admits conservation laws with characteristics being arbitrary smooth functions of $x_1$.
\end{corollary}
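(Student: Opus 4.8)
The plan is to reuse the computation from the proof of Lemma~\ref{lem:FirstLemmaCP}, but now feeding in all $N+1$ prescribed characteristics at once rather than a single generic one. Write, for $s=0,\dots,N$,
\[
G_s[u]:=\sum_{\alpha_1=s}^{N}(-1)^{\alpha_1}\binom{\alpha_1}{s}\mathrm D_1^{\alpha_1-s}\big(\hat{\mathsf E}^{\alpha_1}L\big)
\]
for the left-hand side of~\eqref{eq:SplitEqForLwithChar_h(x_1)}. According to the expansion~\eqref{eq:GeneralRepresentationForLemma1SystemForSplitting}, for every function $h=h(x_1)$ one has $\mathsf E(hL)=\sum_{s=0}^{N}(\partial_{x_1}^sh)\,G_s$. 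Hence the assumption that each $h^{s'}$, $s'=0,\dots,N$, is a conservation law characteristic, i.e.\ $\mathsf E(h^{s'}L)=0$ by~\eqref{eq:DeterminingEquationsCharacteristics}, gives the homogeneous linear system
\[
\sum_{s=0}^{N}(\partial_{x_1}^sh^{s'})\,G_s=0,\qquad s'=0,\dots,N,
\]
for the $N+1$ ``unknown'' differential functions $G_0,\dots,G_N$.

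The next step is to solve this system for $G_0,\dots,G_N$. Its coefficient matrix $\big(\partial_{x_1}^sh^{s'}\big)$ has determinant equal to the Wronskian $\mathrm W(h^0,\dots,h^N)$ in the total derivative $\mathrm D_1=\partial_{x_1}$. Linear independence of the $N+1$ conservation laws forces the functions $h^{s'}$ to be linearly independent over the reals — a characteristic depending on $x_1$ alone is trivial only if it vanishes — so $\mathrm W(h^0,\dots,h^N)\not\equiv0$, and Cramer's rule yields $G_s=0$ for all $s$ on the open set where this Wronskian does not vanish, hence everywhere by smoothness of the $G_s$.

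Finally I would replay the last part of the proof of Lemma~\ref{lem:FirstLemmaCP}: the system $G_N=0,\ G_{N-1}=0,\ \dots,\ G_0=0$ is triangular in the quantities $\hat{\mathsf E}^{\alpha_1}L$, since the $\alpha_1=s$ term of $G_s$ equals $\pm\hat{\mathsf E}^sL$ while the remaining terms involve only $\hat{\mathsf E}^{\alpha_1}L$ with $\alpha_1>s$; solving from $s=N$ downwards gives $\hat{\mathsf E}^0L=\dots=\hat{\mathsf E}^NL=0$. By~\cite[Theorem~4.7]{olve86Ay} this is exactly the statement that $L$ admits a representation of the form~\eqref{eq:GeneralRepresentationForLemma1}, and then the ``if'' part of Lemma~\ref{lem:FirstLemmaCP} shows at once that the equation $L[u]=0$ admits every smooth function of $x_1$ as a conservation law characteristic.

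The genuinely new point compared with Lemma~\ref{lem:FirstLemmaCP} is the passage from one relation among the $G_s$ to the full system, so the step I expect to require the most care is the nonvanishing of $\mathrm W(h^0,\dots,h^N)$. In the $C^\infty$ setting linear independence alone does not guarantee a nonvanishing Wronskian (as Peano's example shows), so in general one only concludes $G_s=0$ on the open subset where the Wronskian is nonzero, which is enough by continuity; in all situations occurring in applications — polynomial or analytic characteristics, e.g.\ the monomials $1,x_1,\dots,x_1^N$ — the Wronskian is nowhere zero and the argument is unconditional.
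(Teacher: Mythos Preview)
Your proof is correct and follows essentially the same route as the paper: substitute the $N+1$ characteristics into the expansion~\eqref{eq:GeneralRepresentationForLemma1SystemForSplitting}, use the (local) nonvanishing of the Wronskian of the $h^{s'}$ to deduce the system~\eqref{eq:SplitEqForLwithChar_h(x_1)}, and then finish exactly as in Lemma~\ref{lem:FirstLemmaCP}. Your caution about the Wronskian in the $C^\infty$ setting is well placed---the paper simply asserts that the Wronskian is ``(locally) nonvanishing'' and implicitly works on the open set where this holds, which is the same local conclusion you reach.
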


\begin{proof}
Substituting the characteristics $h^{s'}$ into~\eqref{eq:GeneralRepresentationForLemma1SystemForSplitting}, we obtain the condition
\[
 \sum_{s=0}^N(\partial_{x_1}^sh^{s'})\sum_{\alpha_1=s}^N(-1)^{\alpha_1}\binom{\alpha_1}s \mathrm D_1^{\alpha_1-s}(\hat{\mathsf E}^{\alpha_1}L)=0,\quad s'=0,\dots,N.
\]
Since the functions $h^{s'}$ are linearly independent and hence their Wronskian is (locally) nonvanishing, $|\partial^s_{x_1}h^{s'}|_{s,s'=0,\dots,N}\ne0$,
the above condition implies the system~\eqref{eq:SplitEqForLwithChar_h(x_1)}.
Thus, the further proof is the same as the proof of Lemma~\ref{lem:FirstLemmaCP}.
\end{proof}

A further direct corollary is the following:

\begin{corollary}\label{cor:DEsAdmittingArbFunctionsOfSeveralVariablesAsChars}
A differential equation $\mathcal L\colon L[u]=0$ admits an arbitrary smooth function $h=h(x_1,\dots,x_l)$ as a conservation-law characteristic, where $l<n$,
if and only if the differential function~$L$ is represented in the form
\[
  L=\mathrm D_{x_{l+1}}F^{l+1}[u]+\dots+\mathrm D_nF^n[u],
\]
for a tuple of $n-l$ differential functions $F^{l+1},\dots,F^n$.
\end{corollary}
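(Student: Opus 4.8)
The plan is to reduce Corollary~\ref{cor:DEsAdmittingArbFunctionsOfSeveralVariablesAsChars} to Lemma~\ref{lem:FirstLemmaCP} by an inductive argument on the number~$l$ of variables on which the prescribed characteristic~$h$ depends, treating the extra variables $x_2,\dots,x_l$ in exactly the same way that $x_1$ was treated in the proof of the lemma. The ``if'' part is again immediate: if $L=\mathrm D_{x_{l+1}}F^{l+1}+\dots+\mathrm D_nF^n$, then $hL=\mathrm D_{x_{l+1}}(hF^{l+1})+\dots+\mathrm D_n(hF^n)$ for any $h=h(x_1,\dots,x_l)$, since the total derivatives $\mathrm D_{x_{l+1}},\dots,\mathrm D_n$ do not act on~$h$; hence $hL\in\mathop{\rm im}\mathop{\rm Div}$, so $\mathsf E(hL)=0$ and $h$ is a conservation law characteristic.

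For the ``only if'' part I would mimic the computation~\eqref{eq:GeneralRepresentationForLemma1SystemForSplitting} but now regard $x_1,\dots,x_l$ jointly as parameters and split the jet variables according to their multi-index of differentiations with respect to $(x_1,\dots,x_l)$. Writing a multi-index as $\alpha=(\beta,\hat\alpha)$ with $\beta=(\alpha_1,\dots,\alpha_l)$ and $\hat\alpha=(\alpha_{l+1},\dots,\alpha_n)$, one introduces the partial Euler operator $\hat{\mathsf E}^\beta L=(-\mathrm D_{x_{l+1}})^{\alpha_{l+1}}\cdots(-\mathrm D_n)^{\alpha_n}L_{u_{(\beta,\hat\alpha)}}$. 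Expanding $\mathsf E(hL)=0$ exactly as before and using the multivariate Leibniz rule for $(-\mathrm D_1)^{\alpha_1}\cdots(-\mathrm D_l)^{\alpha_l}$ acting on the product $h\cdot\hat{\mathsf E}^\beta L$, one collects the result as a linear combination of the derivatives $\partial^\gamma h$ with coefficients that are differential functions of~$u$ not involving~$h$. Since $h$ is an arbitrary smooth function of $x_1,\dots,x_l$, the derivatives $\partial^\gamma h$ are functionally independent, so the identity splits into a separate equation for each multi-index~$\gamma$, the system being triangular with respect to the partial order on multi-indices: the top-order equations read $\hat{\mathsf E}^\beta L=0$ for the maximal~$\beta$, and descending through the lower~$\beta$ and using the already-established vanishing of the higher ones yields $\hat{\mathsf E}^\beta L=0$ for all~$\beta$ in the relevant finite range. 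By the characterization $\mathop{\rm im}\mathop{\rm Div}=\ker\mathsf E$ applied in the jet space over the parameters $(x_1,\dots,x_l)$ — i.e.\ \cite[Theorem~4.7]{olve86Ay} with $x_{l+1},\dots,x_n$ as the genuine independent variables — the vanishing of all $\hat{\mathsf E}^\beta L$ forces $L$ to lie in the image of the divergence $\mathrm D_{x_{l+1}}F^{l+1}+\dots+\mathrm D_nF^n$, which is the desired representation.

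Alternatively, and perhaps more cleanly, I would phrase the argument as an induction on~$l$: the case $l=1$ is precisely Lemma~\ref{lem:FirstLemmaCP}. For the inductive step, since $h(x_1,\dots,x_l)$ ranges in particular over all products $h_1(x_1)\,h_2(x_2,\dots,x_l)$ — more precisely, since admitting all $h(x_1,\dots,x_l)$ implies admitting all $h_1(x_1)$ with $h_2\equiv 1$ as well as, after fixing such a representation, all $h_2(x_2,\dots,x_l)$ for the resulting components — one first applies Lemma~\ref{lem:FirstLemmaCP} (with $x_1$ the distinguished variable) to get $L=\mathrm D_2F^2+\dots+\mathrm D_nF^n$, then observes that $hL=\mathrm D_2(hF^2)+\dots$ only if the remaining freedom in~$h$ (namely dependence on $x_2,\dots,x_l$) forces further structure, and recurses. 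The bookkeeping here is a little delicate because the $F^i$ are only determined up to null divergences, so I would prefer the direct splitting argument of the previous paragraph, which sidesteps this ambiguity by working with $L$ itself via the partial Euler operators.

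The main obstacle I anticipate is purely notational rather than conceptual: setting up the partial Euler operators $\hat{\mathsf E}^\beta$ and verifying that the expansion of $\mathsf E(hL)$ genuinely splits into a \emph{triangular} system indexed by multi-indices~$\beta\in\mathbb N_0^l$ (so that back-substitution from the top order terminates and yields $\hat{\mathsf E}^\beta L=0$ for every~$\beta$ up to the finite bound $N_i=\max\{\alpha_i\mid L_{u_\alpha}\ne0\}$ in each of the $l$ parametric directions). Once that triangular structure is in place, the conclusion is an immediate invocation of $\ker\mathsf E=\mathop{\rm im}\mathop{\rm Div}$ relative to the variables $x_{l+1},\dots,x_n$, exactly as in the last line of the proof of Lemma~\ref{lem:FirstLemmaCP}. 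No new analytic input is needed.
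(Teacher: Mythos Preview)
Your proposal is correct. The paper gives no explicit proof of this corollary, presenting it simply as ``a further direct corollary'' of Lemma~\ref{lem:FirstLemmaCP}; your first approach---treating $x_1,\dots,x_l$ jointly as parameters, introducing the partial Euler operators $\hat{\mathsf E}^\beta$ indexed by multi-indices $\beta\in\mathbb N_0^{\,l}$, expanding $\mathsf E(hL)=0$ via the multivariate Leibniz rule, splitting on $\partial^\gamma h$, and back-substituting through the resulting triangular system---is exactly the straightforward multi-parameter version of the proof of Lemma~\ref{lem:FirstLemmaCP} that the authors evidently have in mind, and your anticipated obstacle is indeed purely notational.
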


\begin{remark}
The case $l=n$ is singular for the formulation of Corollary~\ref{cor:DEsAdmittingArbFunctionsOfSeveralVariablesAsChars}.
The corresponding assertion is called the du Bois-Reymond lemma \cite[Lemma~5.67]{olve93Ay}. 
It states the following:
If for any function $h=h(x)$ there exists an $n$-tuple of differential functions~$F[u]$ such that $hL[u]=\mathop{\rm Div}\nolimits F[u]$,
then the differential function~$L[u]$ is in fact a function of~$x$ alone.
\end{remark}

\begin{lemma}\label{lem:SecondLemmaCP}
A differential equation $\mathcal L\colon L[u]=0$ for the unknown function $u=u(x)$, $x=(x_1,\dots,x_n)$
admits the characteristics $h(x_1)$ and $f(x_1)x_2$, where $h$ and~$f$ run through the set of smooth functions of~$x_1$, if and only if its right hand side~$L$ is represented in the form
\begin{equation}\label{eq:GeneralRepresentationForLemma2}
L=\mathrm D_2^2G^2[u]+\mathrm D_3G^3[u]+ \cdots+\mathrm D_nG^n[u],
\end{equation}
for some differential functions $G^2[u]$, \dots, $G^n[u]$.
\end{lemma}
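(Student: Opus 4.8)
The plan is to mirror the proof of Lemma~\ref{lem:FirstLemmaCP}, now extracting one divergence representation of $L$ from each of the two prescribed families of characteristics and then merging them by an integration by parts that trades the prefactor $x_2$ for an extra total derivative~$\mathrm D_2$. The ``if'' part I would dispatch by direct computation: if $L$ has the form~\eqref{eq:GeneralRepresentationForLemma2}, then, since $\mathrm D_jh(x_1)=0$ for $j\geqslant2$, one has $h(x_1)L=\mathrm D_2\big(h(x_1)\mathrm D_2G^2\big)+\sum_{j=3}^n\mathrm D_j\big(h(x_1)G^j\big)\in\mathop{\rm im}\mathop{\rm Div}$, so $\mathsf E\big(h(x_1)L\big)=0$; and, using $\mathrm D_jx_2=0$ for $j\geqslant3$ together with the identity $x_2\mathrm D_2^2G^2=\mathrm D_2(x_2\mathrm D_2G^2-G^2)$, one gets $f(x_1)x_2L=\mathrm D_2\big(f(x_1)(x_2\mathrm D_2G^2-G^2)\big)+\sum_{j=3}^n\mathrm D_j\big(f(x_1)x_2G^j\big)\in\mathop{\rm im}\mathop{\rm Div}$, so $f(x_1)x_2$ is a characteristic as well.

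For the ``only if'' part I would first invoke Lemma~\ref{lem:FirstLemmaCP} for the family $\{h(x_1)\}$, which gives $L=\mathrm D_2\tilde F^2+\mathrm D_3\tilde F^3+\cdots+\mathrm D_n\tilde F^n$ for some differential functions $\tilde F^2,\dots,\tilde F^n$, equivalently $\hat{\mathsf E}^kL=0$ for $k=0,\dots,N$ in the notation of that proof ($x_1$ treated as a parameter, $u^k=\partial^ku/\partial x_1^k$, $N=\max\{\alpha_1\mid L_{u_\alpha}\ne0\}$, and $\hat{\mathsf E}^k$ the restricted Euler operator with respect to $x_2,\dots,x_n$). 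Next I would use that $f(x_1)x_2$ is a characteristic, i.e.\ $\mathsf E\big(f(x_1)x_2L\big)=0$. Repeating the computation~\eqref{eq:GeneralRepresentationForLemma1SystemForSplitting}, but keeping the inert factor $x_2$ inside the derivatives (it does not introduce new $x_1$-differentiations, so $\max\{\alpha_1\mid(x_2L)_{u_\alpha}\ne0\}=N$) and pulling out only the $x_1$-dependence of $f$, yields $\sum_{\alpha_1=0}^N(-\mathrm D_1)^{\alpha_1}\big(f\,\hat{\mathsf E}^{\alpha_1}(x_2L)\big)=0$. Splitting with respect to the arbitrary derivatives of $f$ gives the triangular system~\eqref{eq:SplitEqForLwithChar_h(x_1)} with $\hat{\mathsf E}^{\alpha_1}L$ replaced by $\hat{\mathsf E}^{\alpha_1}(x_2L)$; solving it from $s=N$ downwards exactly as in Lemma~\ref{lem:FirstLemmaCP} gives $\hat{\mathsf E}^k(x_2L)=0$ for all $k$, hence, by~\cite[Theorem~4.7]{olve86Ay}, $x_2L=\mathrm D_2Q^2+\mathrm D_3Q^3+\cdots+\mathrm D_nQ^n$ for some $Q^2,\dots,Q^n$.

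It then remains to merge the two representations. From $x_2L=x_2\mathrm D_2\tilde F^2+\sum_{j=3}^nx_2\mathrm D_j\tilde F^j$, the identities $x_2\mathrm D_2\tilde F^2=\mathrm D_2(x_2\tilde F^2)-\tilde F^2$ and $x_2\mathrm D_j\tilde F^j=\mathrm D_j(x_2\tilde F^j)$ for $j\geqslant3$, combined with $x_2L=\mathrm D_2Q^2+\sum_{j\geqslant3}\mathrm D_jQ^j$, yield $\tilde F^2=\mathrm D_2(x_2\tilde F^2-Q^2)+\sum_{j=3}^n\mathrm D_j(x_2\tilde F^j-Q^j)=\mathrm D_2G^2+\sum_{j=3}^n\mathrm D_jR^j$ with $G^2:=x_2\tilde F^2-Q^2$ and $R^j:=x_2\tilde F^j-Q^j$; substituting this back into $L=\mathrm D_2\tilde F^2+\sum_{j=3}^n\mathrm D_j\tilde F^j$ and setting $G^j:=\tilde F^j+\mathrm D_2R^j$ produces the required form $L=\mathrm D_2^2G^2+\mathrm D_3G^3+\cdots+\mathrm D_nG^n$. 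The hard part will be the bookkeeping in the middle step, namely verifying that the arbitrary-function splitting of Lemma~\ref{lem:FirstLemmaCP} transfers verbatim with $x_2L$ in place of $L$ — that the prefactor $x_2$ really passes outside the restricted Euler operators so that the triangular structure of~\eqref{eq:SplitEqForLwithChar_h(x_1)} is preserved; once both divergence forms are in hand, the concluding manipulation is routine. I would also point out that the $\{h(x_1)\}$-hypothesis is genuinely used (it is not implied by the $\{f(x_1)x_2\}$-hypothesis, since $\hat{\mathsf E}^k(x_2L)=0$ does not force $\hat{\mathsf E}^kL=0$), so both representations must indeed be invoked.
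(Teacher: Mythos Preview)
Your argument is correct and follows the same three-step skeleton as the paper: (i)~Lemma~\ref{lem:FirstLemmaCP} gives $L=\widehat{\rm Div}\,\hat F$; (ii)~the $\{f(x_1)x_2\}$-family forces $\tilde F^2\in\mathop{\rm im}\widehat{\rm Div}$; (iii)~substitute back and regroup. The only difference is in step~(ii). The paper does not rerun the Euler-operator splitting; instead it integrates by parts once in the identity $f(x_1)x_2\,\widehat{\rm Div}\,\hat F=\mathop{\rm Div}H$ to obtain $f\,F^2=\widehat{\rm Div}(fx_2\hat F)-\mathop{\rm Div}H$, i.e.\ $f(x_1)$ is a conservation-law characteristic of the equation $F^2=0$, and then invokes Lemma~\ref{lem:FirstLemmaCP} as a black box on~$F^2$. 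Your route---showing $\hat{\mathsf E}^k(x_2L)=0$ and hence $x_2L\in\mathop{\rm im}\widehat{\rm Div}$, then subtracting---reaches the same intermediate conclusion $\tilde F^2\in\mathop{\rm im}\widehat{\rm Div}$ but with more bookkeeping. The paper's shortcut also generalizes more transparently to Corollary~\ref{cor:FormOfEqsWithHugeNumberOfCLs}.

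One remark on your closing sentence: the phrase ``the prefactor $x_2$ really passes outside the restricted Euler operators'' is misleading and contradicts what you (correctly) did earlier. The factor $x_2$ does \emph{not} pass outside $\hat{\mathsf E}^k$, since $\hat{\mathsf E}^k$ involves $\mathrm D_2$; rather, $x_2$ stays inside and you work with $\hat{\mathsf E}^k(x_2L)$ as a single object. What makes the triangular structure of~\eqref{eq:SplitEqForLwithChar_h(x_1)} persist is only that $f(x_1)$ commutes with the operators $\hat{\mathrm D}$, exactly as you stated when you wrote ``pulling out only the $x_1$-dependence of~$f$''. So there is no hard part here; the argument is already complete once you observe $\mathsf E(fx_2L)=\sum_{\alpha_1}(-\mathrm D_1)^{\alpha_1}\bigl(f\,\hat{\mathsf E}^{\alpha_1}(x_2L)\bigr)=0$.
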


\begin{proof}
We again prove the ``only if'' part merely.
Lemma~\ref{lem:FirstLemmaCP} implies the representation $L=\mathop{\widehat{\rm Div}}\hat F$
for some tuple $\hat F=(F^2[u],\dots,F^n[u])$,
where $\mathop{\widehat{\rm Div}}\hat F=\mathrm D_2F^2+\cdots+\mathrm D_nF^n$.
As $f(x_1)x_2$ is a conservation-law characteristic of~$\mathcal L$, we also have
 \begin{equation}\label{eq:ProofOfLemma2}
  fx_2L=fx_2\mathop{\widehat{\rm Div}}\hat F=\mathop{\widehat{\rm Div}}(fx_2\hat F)-fF^2=\mathop{\rm Div}H,
 \end{equation}
where we used `integration by parts' to arrive at the second equality and the definition of conser\-vation-law characteristic to establish the third equality. Here the tuple $H=(H^1[u],\dots,H^n[u])$ is a conserved current associated with the characteristic $f(x_1)x_2$. From the last equality in~\eqref{eq:ProofOfLemma2} we conclude that
 \[
  fF^2=\mathop{\widehat{\rm Div}}(fx_2\hat F)-\mathop{\rm Div}H,
 \]
that is, the arbitrary smooth function $f=f(x_1)$ is a conservation-law characteristic for the equation $F^2[u]=0$.
In view of Lemma~\ref{lem:FirstLemmaCP}, the differential function~$F^2$ admits the representation $F^2=\mathop{\widehat{\rm Div}}\hat K[u]$ for some tuple of differential functions $\hat K=(K^2[u],\dots,K^n[u])$.
Therefore,
\[
L=\mathrm D_2\mathop{\widehat{\rm Div}}\hat K[u]+\mathrm D_3F^3[u]+\cdots+\mathrm D_nF^n[u],
\]
which leads to the representation~\eqref{eq:GeneralRepresentationForLemma2} with $G^2=K^2$ and $G^j=F^j+\mathrm D_2K^j$, $j=3,\dots,n$.
\end{proof}

\begin{corollary}\label{cor:FormOfEqsWithHugeNumberOfCLs}
A differential equation $\mathcal L\colon L[u]=0$ for the unknown function $u=u(x)$ of the variables $x=(x_1,\dots,x_n)$
admits characteristics of the form $h(x_1)+\sum_{i=2}^nf^i(x_1)x_i$ with arbitrary smooth functions $h=h(x_1)$ and $f^i=f^i(x_1)$, $i=2,\dots,n$,
if and only if the differential function~$L$ can be represented as
\begin{equation}\label{eq:RepresentationForLWithChars_hfixi}
L=\sum_{i,j=2}^n\mathrm D_i\mathrm D_jK^{ij}[u].
\end{equation}

\end{corollary}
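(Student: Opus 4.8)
The plan is to mimic the proof of Lemma~\ref{lem:SecondLemmaCP}, now letting the variables $x_2,\dots,x_n$ play symmetric roles and iterating Lemma~\ref{lem:FirstLemmaCP} twice: once for $L$ itself, and then once more for each flux component it produces. I expect the ``if'' part to be essentially a one-line computation: if $L=\sum_{i,j=2}^n\mathrm D_i\mathrm D_jK^{ij}$ and $\lambda=h(x_1)+\sum_{i=2}^nf^i(x_1)x_i$, then, since $h$ and the $f^i$ depend on $x_1$ only, $\mathrm D_j\lambda=f^j$ and $\mathrm D_k\mathrm D_j\lambda=0$ for all $j,k\in\{2,\dots,n\}$, so two integrations by parts give $\lambda L=\mathop{\rm Div}F-\sum_{i,j}(\mathrm D_i\mathrm D_j\lambda)K^{ij}=\mathop{\rm Div}F$; hence $\lambda$ is a conservation law characteristic of $\mathcal L$.

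For the ``only if'' part I would first specialize the admitted family to $f^2=\dots=f^n=0$, so that every $h(x_1)$ is a conservation law characteristic of $\mathcal L$; Lemma~\ref{lem:FirstLemmaCP} then produces a representation $L=\mathrm D_2F^2+\dots+\mathrm D_nF^n$. Next, for each fixed $i\in\{2,\dots,n\}$, I would specialize the family to $h=0$ and $f^j=0$ for $j\ne i$, so that $f^i(x_1)x_i$ is a conservation law characteristic for every smooth $f^i=f^i(x_1)$. Integrating by parts exactly as in~\eqref{eq:ProofOfLemma2} and using $\mathrm D_j(f^ix_i)=f^i\delta_{ij}$ for $j\geqslant2$ gives $f^iF^i=\sum_{j=2}^n\mathrm D_j(f^ix_iF^j)-f^ix_iL\in\mathop{\rm im}\mathop{\rm Div}$, the membership holding because $f^ix_iL$ is itself a total divergence. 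Thus $f^iF^i\in\mathop{\rm im}\mathop{\rm Div}$ for every smooth $f^i(x_1)$, which is precisely the hypothesis of Lemma~\ref{lem:FirstLemmaCP} for the equation $F^i[u]=0$; applying that lemma once more yields $F^i=\sum_{j=2}^n\mathrm D_jK^{ij}$. Doing this for every $i$ and substituting back into the representation of $L$ produces $L=\sum_{i=2}^n\mathrm D_i\sum_{j=2}^n\mathrm D_jK^{ij}=\sum_{i,j=2}^n\mathrm D_i\mathrm D_jK^{ij}$, which is~\eqref{eq:RepresentationForLWithChars_hfixi}.

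I do not anticipate a genuine obstacle: the argument is an $(n-1)$-fold replication of the two-variable reasoning in Lemma~\ref{lem:SecondLemmaCP}. The points that need a little care are bookkeeping ones --- that the distinguished variable $x_1$ never enters any of the divergences, so the identities $\mathrm D_jx_i=\delta_{ij}$ used in the integrations by parts are legitimate; that ``$f^iF^i\in\mathop{\rm im}\mathop{\rm Div}$ for all $f^i(x_1)$'' is literally the hypothesis of Lemma~\ref{lem:FirstLemmaCP} for $F^i[u]=0$ (via $\mathop{\rm im}\mathop{\rm Div}=\ker\mathsf E$); and that the non-uniqueness of the representations delivered by Lemma~\ref{lem:FirstLemmaCP} is harmless, since any admissible choice of the $F^i$, and then of the $K^{ij}$, works. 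The only mildly subtle observation is that it suffices to use the characteristics $h(x_1)$ and $f^i(x_1)x_i$ \emph{separately}, the genuinely mixed members of the admitted family being superfluous, which is immediate from linearity of the characteristic-to-current correspondence.
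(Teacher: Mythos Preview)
Your proposal is correct and follows essentially the same route as the paper's proof, which merely says ``Following the proof of Lemma~\ref{lem:SecondLemmaCP}, we obtain that $L=\mathop{\widehat{\rm Div}}\hat F$ with $F^i=\mathop{\widehat{\rm Div}}\hat K^i$ for some $\hat K^i=(K^{i2}[u],\dots,K^{in}[u])$, $i=2,\dots,n$.'' You have in fact spelled out more detail than the paper provides, including the ``if'' direction and the explicit integration-by-parts step $f^iF^i=\sum_{j=2}^n\mathrm D_j(f^ix_iF^j)-f^ix_iL$; the paper omits these as evident from the Lemma~\ref{lem:SecondLemmaCP} template.
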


\begin{proof}
Following the proof of Lemma~\ref{lem:SecondLemmaCP},
we obtain that $L=\mathop{\widehat{\rm Div}}\hat F$
with $F^i=\mathop{\widehat{\rm Div}}\hat K^i$ for some $\hat K^i=(K^{i2}[u],\dots,K^{in}[u])$, $i=2,\dots,n$.
\end{proof}

We can rearrange the representation~\eqref{cor:FormOfEqsWithHugeNumberOfCLs}
and assume that $K^{ij}$'s are symmetric in $(i,j)$, $K^{ij}=K^{ji}$,
or that the summation range for $(i,j)$ is $2\leqslant i\leqslant j\leqslant n$.

Lemma~\ref{lem:FirstLemmaCP} can also be generalized by considering arbitrary functions of more general arguments.

\begin{lemma}\label{lem:PDEsWithCharsBeingArbitraryFunctionOfOmega}
A partial differential equation $\mathcal L\colon L[u]=0$ for a single unknown function~$u=u(x)$ of the variables $x=(x_1,\dots,x_n)$
admits the family of conservation-law characteristics $\{h(\omega)\}$,
where $h$ runs through the set of smooth functions of~$\omega=\omega(x,u)$ being a nonconstant smooth function of~$(x,u)$,
if and only if the differential function~$L$ is represented in the form
\begin{equation}\label{eq:GeneralRepresentationForPDEsWithCharsBeingArbitraryFunctionOfOmega}
L[u]=\mathrm D_i(G^{ij}\mathrm D_j\omega)=(\mathrm D_iG^{ij})\mathrm D_j\omega=\mathrm D_j(\omega\mathrm D_iG^{ij})
\end{equation}
for some differential functions $G^{ij}=G^{ij}[u]$ with $G^{ij}=-G^{ji}$.
\end{lemma}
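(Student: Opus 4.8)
Before establishing the two implications, one should record that the three expressions for $L$ in~\eqref{eq:GeneralRepresentationForPDEsWithCharsBeingArbitraryFunctionOfOmega} agree for skew-symmetric~$G^{ij}$: expanding $\mathrm D_i(G^{ij}\mathrm D_j\omega)$ and $\mathrm D_j(\omega\,\mathrm D_iG^{ij})$ by the product rule, the extra terms $G^{ij}\mathrm D_i\mathrm D_j\omega$ and $\omega\,\mathrm D_i\mathrm D_jG^{ij}$ vanish, since $\mathrm D_i\mathrm D_j$ is symmetric in $(i,j)$ while $G^{ij}=-G^{ji}$, total derivatives commuting. The ``if'' implication is then immediate: for $L=(\mathrm D_iG^{ij})\mathrm D_j\omega=\mathrm D_i(G^{ij}\mathrm D_j\omega)$ with $G^{ij}=-G^{ji}$ and any smooth~$h$, the product rule gives
\[
h(\omega)L=\mathrm D_i\bigl(h(\omega)G^{ij}\mathrm D_j\omega\bigr)-h'(\omega)(\mathrm D_i\omega)G^{ij}(\mathrm D_j\omega),
\]
and the last term is zero because $(\mathrm D_i\omega)(\mathrm D_j\omega)$ is symmetric in $(i,j)$; hence $h(\omega)L\in\mathop{\rm im}\mathop{\rm Div}$ for every~$h$, so every $h(\omega)$ is a conservation law characteristic of~$\mathcal L$.

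For the ``only if'' implication my plan is to reduce to Lemma~\ref{lem:FirstLemmaCP} via a point transformation that turns~$\omega$ into the first independent variable. Since $\omega=\omega(x,u)$ is nonconstant, such a point transformation~$\mathcal T$ exists locally: if some $\omega_{x_k}\ne0$ one replaces~$x_k$ by~$\omega$ (keeping $\tilde u=u$), while if $\omega$ depends on~$u$ alone one may take $\omega=u$ up to reparametrizing~$h$ and use the partial hodograph transformation $\tilde x_1=u$, $\tilde x_i=x_i$ $(i\geqslant2)$, $\tilde u=x_1$; in each case the Jacobian determinant is nonzero. Under~$\mathcal T$ the equation $\mathcal L\colon L[u]=0$ becomes an equation $\widetilde{\mathcal L}\colon\tilde L[\tilde u]=0$, and choosing the representative~$\tilde L$ together with the Jacobian factor that makes characteristics transform without any extra multiplier (cf.\ Proposition~\ref{pro:PointTransOfODERepresentation} for the one-dimensional case), the prescribed family $\{h(\omega)\}$ of characteristics of~$\mathcal L$ is carried exactly onto the family $\{h(\tilde x_1)\}$ of characteristics of~$\widetilde{\mathcal L}$.

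Lemma~\ref{lem:FirstLemmaCP} then gives $\tilde L=\tilde{\mathrm D}_2\tilde F^2+\dots+\tilde{\mathrm D}_n\tilde F^n$. Since $\mathrm D_{\tilde x_1}\tilde x_1=1$ and $\mathrm D_{\tilde x_j}\tilde x_1=0$ for $j\geqslant2$, this is the same as $\tilde L=\mathrm D_{\tilde x_i}\bigl(\tilde G^{ij}\mathrm D_{\tilde x_j}\tilde x_1\bigr)$ for the skew-symmetric~$\tilde G$ with $\tilde G^{i1}=\tilde F^i=-\tilde G^{1i}$ $(i\geqslant2)$ and all remaining components~$0$, equivalently $\tilde L=\mathop{\rm Div}_{\tilde x}(\tilde x_1\tilde V)$ with the null divergence $\tilde V^j:=\mathrm D_{\tilde x_i}\tilde G^{ij}$. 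The key point is that the shape ``$L=\mathop{\rm Div}(\omega V)$ with $\mathop{\rm Div}V=0$'' is invariant under point transformations: multiplying a current~$V$ by the scalar~$\omega$ commutes with the way currents transform (both being governed by the pullback of the corresponding $(n-1)$-form), point transformations carry null divergences to null divergences, and, with the above normalization of~$\tilde L$, total divergences to total divergences. Applying this with~$\mathcal T^{-1}$ and recalling that~$\omega$ is~$\tilde x_1$ in the original variables, one obtains $L=\mathop{\rm Div}(\omega V)$ for a null divergence~$V$; then the standard characterization of null divergences as total curls, $\mathop{\rm Div}V=0\iff V^j=\mathrm D_iM^{ij}$ with $M^{ij}=-M^{ji}$, gives $L=\mathrm D_j(\omega\,\mathrm D_iM^{ij})=\mathrm D_i(M^{ij}\mathrm D_j\omega)$, which is~\eqref{eq:GeneralRepresentationForPDEsWithCharsBeingArbitraryFunctionOfOmega} with $G^{ij}=M^{ij}$.

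The main obstacle is making this transformation argument fully rigorous: tracking the Jacobian factors so that the characteristic family is carried precisely onto $\{h(\tilde x_1)\}$, verifying in both directions that the ``$L=\mathop{\rm Div}(\omega V)$, $\mathop{\rm Div}V=0$'' form is preserved---for which the cleanest device is to argue with the conserved current as an $(n-1)$-form rather than componentwise---and handling the open dense domain on which~$\mathcal T$ is defined. One also uses the curl characterization of null divergences (valid for $n\geqslant2$, as here), which can be quoted from the literature or produced from a homotopy operator for~$\mathop{\rm Div}$. A transformation-free alternative would follow the pattern of the proof of Lemma~\ref{lem:FirstLemmaCP}: take $h\equiv1$ to get $L=\mathop{\rm Div}F$, integrate by parts in $h(\omega)L$ to conclude $g(\omega)(\mathrm D_i\omega)F^i\in\mathop{\rm im}\mathop{\rm Div}$ for all smooth~$g$, correct~$F$ by a null divergence so that $(\mathrm D_i\omega)F^i=0$, and then use skew-form linear algebra---$(\mathrm D_1\omega,\dots,\mathrm D_n\omega)$ having, locally, an invertible component---to write $F^j=G^{ij}\mathrm D_i\omega$ with $G$ skew; but the correction step is itself a recursive instance of the lemma and needs an induction whose delicate point is that $\ord\bigl((\mathrm D_i\omega)F^i\bigr)$ need not fall below~$\ord L$ unless the current~$F$ is chosen as economically as possible.
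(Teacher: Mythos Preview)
Your proposal is correct and follows essentially the paper's route: the ``only if'' direction is reduced to Lemma~\ref{lem:FirstLemmaCP} by a local point transformation sending $\omega$ to the first independent variable (the paper's case split is $\omega_u=0$ versus $\omega_u\ne0$, yours is $\omega_{x_k}\ne0$ versus $\omega=\omega(u)$, but both work), and the resulting representation is then carried back. The only difference worth noting is in the back-transformation step. The paper uses the explicit current-transformation rule $J\tilde F^i=F^j\mathrm D_j\tilde x_i$ to convert $\tilde F^1=0$ directly into the pointwise algebraic constraint $F^j\mathrm D_j\omega=0$, and then solves this by elementary linear algebra---for a nonzero tuple $(\mathrm D_1\omega,\dots,\mathrm D_n\omega)$ the solutions are exactly $F^i=G^{ij}\mathrm D_j\omega$ with $G$ skew---so it never needs the curl characterization of null divergences. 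Your detour through the shape ``$L=\mathop{\rm Div}(\omega V)$ with $\mathop{\rm Div}V=0$'' and its point-invariance is valid but one layer longer, and it makes the Jacobian bookkeeping you flag as an obstacle more delicate than it has to be; the paper's direct route renders that bookkeeping essentially automatic.
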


\begin{proof}
Since the ``if'' part of the lemma is obvious, we merely prove the ``only if'' part.
The above condition on $\mathcal L$ means that $hL\in\mathop{\rm im}\nolimits\mathop{\rm Div}\nolimits$
for any smooth function~$h=h(\omega)$.
Without loss of generality, we can assume that $\mathrm D_1\omega\ne0$.
Locally changing coordinates
$\tilde x_1=\omega$, $\tilde x_i=x_i$, $i\ne1$, 
and either $\tilde u=u$ if $\omega_u=0$ or $\tilde u=x_1$ if $\omega_u\ne0$.
In the new coordinates, lemma's supposition takes the form
$JhL\in\mathop{\rm im}\nolimits\mathop{\widetilde{\rm Div}}\nolimits$,
where $J=\det(\mathrm D_j\tilde x_i)_{i,j=1,\dots,n}$,
and $\mathop{\widetilde{\rm Div}}\nolimits$ is the divergence in the total derivatives~$\tilde{\mathrm D}_j$ 
with respect to the new coordinates $\tilde x_j$;
cf.\ \cite[Proposition~1]{popo2008a}.
Then Lemma~\ref{lem:FirstLemmaCP} implies that $JL=\mathop{\widetilde{\rm Div}}\nolimits\tilde F$
with the tuple $\tilde F=(\tilde F^1[\tilde u],\dots,\tilde F^n[\tilde u])$, where $\tilde F^1=0$.
Returning to the old coordinates, we obtain
$L=\mathop{\rm Div}\nolimits F$, where the tuples $F=(F^1[u],\dots,F^n[u])$ and~$\tilde F$
are related by $J\tilde F^i=F^j\mathrm D_j\tilde x_i$.
Therefore, the equality $\tilde F^1=0$ is equivalent to the equation $F^j\mathrm D_j\omega=0$.
The symmetric representation of the general solution of this equation with respect to~$F$
is $F^i=G^{ij}\mathrm D_j\omega$
for some differential functions $G^{ij}=G^{ij}[u]$ with $G^{ij}=-G^{ji}$.
\end{proof}

The symmetrization of the representation~\eqref{eq:GeneralRepresentationForPDEsWithCharsBeingArbitraryFunctionOfOmega}
leads to an additional ambiguity of the coefficients~$G^{ij}$,
and thus they can be constrained more.
In particular, if $\mathrm D_1\omega\ne0$, then we can set $G^{ij}=0$, $i,j\ne1$.

The representation~\eqref{eq:GeneralRepresentationForPDEsWithCharsBeingArbitraryFunctionOfOmega}
can also be rewritten in the form
$L[u]=F^j\mathrm D_j\omega=\mathrm D_j(\omega F^j)$,
where $F=(F^1[u],\dots,F^n[u])$ is a null divergence, $\mathop{\rm Div}F=0$.

\begin{remark}\label{rem:PDEsWithCharsBeingArbitraryDiffFunctionOfOmega}
The ``if'' part of Lemma~\ref{lem:PDEsWithCharsBeingArbitraryFunctionOfOmega}
is obviously true even for~$\omega$ being a nonconstant differential function of~$u$.
We may conjecture that the ``only if'' part is also true
although its proof needs tools more powerful than and different from those
used for proving Lemma~\ref{lem:PDEsWithCharsBeingArbitraryFunctionOfOmega}.
\end{remark}

\section[Conservative parameterization as inverse problem on conservation laws]
{Conservative parameterization as inverse problem\\ on conservation laws}\label{sec:ConservativeParameterizationSchemes}

Despite the ever increase in computational power, it is impossible to run a numerical simulation at infinite resolution.
In other words, dictated by the computational resources available and computational costs acceptable, one has to choose a \emph{maximum} resolution,
which in turn introduces a \emph{minimal} scale below that the model is not capable of resolving physical processes any more.
At the same time, nonlinear systems are characterized by the interaction of various scales
and, therefore, the effects of the unresolved \emph{subgrid-scale} processes  on the resolved processes cannot be omitted in a numerical simulation of such systems.
Modeling these effects is known as \emph{parameterization}.

\looseness=1
In developing a numerical approximation including the construction of parameterization schemes,
it is in general important to maintain the consistency with the original model.
A~major challenge in the construction of parameterization schemes for differential equations
is to ensure the preservation of important geometric structures of the original system~$\mathcal L$ of governing differential equations.
By this it is meant that a closure model for unresolved processes
should lead to a system of differential equations for the resolved quantities
that shares some of the features of the system~$\mathcal L$.
Both symmetries and conservation laws play a fundamental role for the initial formulation of physical theories and their study.
This is why they may be among such shared features that should be preserved even on the level of the resolved scales of the system~$\mathcal L$.

Research of symmetry-preserving closure models was initiated in~\cite{ober97Ay} for the Navier--Stokes equations and formalized in~\cite{popo10Cy} using the language of group analysis of differential equation. In the latter paper it was demonstrated that finding \emph{local parameterization schemes} (i.e., parameterization schemes that model the unresolved processes at a point by using only the information of the resolved processes in this point) can be re-cast as a group classification problem. This observation unlocks the use of a variety of techniques from the group classification of differential equations to be applied to the parameterization problem. See~\cite{basa01Ay,bihl11Dy,ovsi82Ay,popo10Ay} and references therein for a discussion of various group classification methods.

The ideology of conservative parameterization is similar to that of symmetry-preserving closure models.
If the system $\mathcal L$ describes, e.g., an energy preserving physical process that cannot be resolved numerically,
then a parameterization for this process should still preserve energy.
This is essentially the problem of finding conservative parameterization schemes.
First examples for conservative parameterization schemes were given in~\cite{bihl12Dy,bihl11Fy}.
Here we interpret the inverse problem on conservation laws in the light of the conservative parameterization
using, for the sake of simplicity, local first-order parameterization schemes.

Mathematically, the splitting into resolved and unresolved scales is done by decomposing the unknown functions as a mean part and a deviation part, i.e., $u=\bar u+u'$, where bars are used to denote means and primes denote the deviations from this mean. In order to derive the model for the mean part~$\bar u$, one has to average $\mathcal L$. Depending on the averaging rule invoked and the particular form of the system $\mathcal L$, the resulting averaged system~$\bar{\mathcal L}$ is typically not closed.
In other words, additionally to derivatives of~$\bar u$ the system~$\bar{\mathcal L}$ involves a tuple of subgrid-scale quantities $w=(w^1,\dots,w^k)$,
\[
\bar{\mathcal L}\colon\quad \bar L^\mu(x,\bar u_{(\bar r)},w)=0,\quad \mu=1,\dots,l,
\]
and thus it can only be used in practice once a parameterization for~$w$ is found.

\begin{definition}
A \emph{local first-order parameterization scheme} is an expression of the tuple of unknown subgrid-scale quantities $w$ as differential functions of~$\bar u$,
$
w=f[\bar u]$,
where $f=(f^1,\dots,f^k)$ are the \emph{parameterization functions}~\cite{stul88Ay}.
\end{definition}

Recall that in a higher order parameterization scheme one typically has to extend the averaged system with differential equations for the unresolved quantities. This in turn introduces new unresolved quantities of higher order, which should also be parameterized. The extension of the initial system leads to several complications in the course of study of structure-preserving parameterization schemes, which will be the subject of future investigations.

As a start for the consideration, the parameterization functions can be assumed as arbitrary differential functions of certain order, or one can use a specific ansatz that depends on further arbitrary differential functions~$\tilde f$ satisfying certain differential constraints.
Upon inserting the local first-order parameterization scheme $w=f[\bar u]$ into the unclosed model $\bar{\mathcal L}$ one obtains a class of closed systems of differential equations, denoted by~$\bar{\mathcal L}|_{w=f}$, in which $\tilde f$ acts as tuple of arbitrary elements of the class. This closed system of differential equations can now be used in practice as it is a system for $\bar u$ only. The main task is to find the arbitrary elements~$\tilde f$ such that~$\bar{\mathcal L}|_{w=f}$ adequately describes the evolution of $\bar u$. This is the parameterization problem in general. Here we are interested in solving this problem using conservation laws of differential equations.

\begin{definition}
 A local parameterization scheme is called \emph{conservative} if the parameterized system of differential equations admits nontrivial conservation laws.
\end{definition}

As formulated above, the problem of finding conservative parameterization schemes can be tackled within the context of both the direct and the inverse problems on conservation laws. Within the framework of the direct problem, we would start with a general closed class of differential equations with yet to be determined parameterization functions, and aim to classify, up to an equivalence, the forms of the parameterization functions leading to equations from the closed class that admit nontrivial conservation laws. Within the framework of the inverse problem, which is the subject of the present paper, one starts with conservation laws admitted by the original system~$\mathcal L$ and constructs the parameterization functions $f$ in such a way that the closed system~$\bar{\mathcal L}|_{w=f}$ admits the corresponding conservation laws. For more information, see~\cite{bihl12Dy,bihl11Fy}.

\section{Conservative parameterization for the vorticity equation}\label{sec:ConservativeParameterizationVorticityEquation}

We demonstrate the procedure of conservative closure introduced in Section~\ref{sec:ConservativeParameterizationWithDirectMethods}
for the system of two-dimensional incompressible Euler equations.
This example is particularly challenging as the conservation laws to be preserved correspond to characteristics parameterized by \emph{arbitrary} functions of certain arguments.
The two-dimensional incompressible Euler equations in the stream function form
reduce to the single vorticity equation
\begin{equation}\label{eq:VorticityEquationCP}
 \zeta_t + \psi_x\zeta_y-\psi_y\zeta_x=0,\qquad \zeta:=\psi_{xx}+\psi_{yy}.
\end{equation}
Here $\psi=\psi(t,x,y)$ is the stream function generating horizontal, two-dimensional incompressible flow $\mathbf v=\mathbf k\times\nabla\psi$ with $\mathbf k=(0,0,1)^{\sf T}$,  $\zeta$ is the vorticity, and we use a specific notation for the variables,  $t$, $x$, $y$ and $\psi$ instead of $x_1$, $x_2$, $x_3$ and~$u$.

Subsequently we are interested in the dynamics of the mean part~\eqref{eq:VorticityEquationCP} as this is the only part that is accessible to a numerical model, which usually operates on a fixed, finite resolution.
That is, we split the dependent variable $\psi$ as well as $\mathbf v$ and~$\zeta$ into resolved (mean) parts $\bar\psi$, $\bar{\mathbf v}$ and $\bar\zeta$ and unresolved (sub-grid scale) parts $\psi'$, $\mathbf v'$ and $\zeta'$, i.e.,
\[\psi=\bar\psi+\psi',\quad \mathbf v=\bar{\mathbf v}+\mathbf v' \quad\mbox{and}\quad \zeta=\bar\zeta +\zeta'.\]
Averaging Eq.~\eqref{eq:VorticityEquationCP} in a way satisfying the Reynolds averaging rule $\overline{ab}=\bar a\bar b+\overline{a'b'}$ (e.g.\ using the ensemble average),
we obtain the equation for the mean part, which is
\begin{equation}\label{eq:VorticityEquationAveragedCP}
 \bar\zeta_t + \bar\psi_x\bar\zeta_y-\bar\psi_y\bar\zeta_x=\nabla\cdot\overline{\zeta'\mathbf v'},\qquad \bar\zeta:=\bar\psi_{xx}+\bar\psi_{yy}.
\end{equation}
The problem with the averaged equation~\eqref{eq:VorticityEquationAveragedCP} is that its right-hand side involves the divergence of the unknown \emph{vorticity flux} $\overline{\zeta'\mathbf v'}$ for which no equation is given. In other words, system~\eqref{eq:VorticityEquationAveragedCP} is underdetermined, which is the usual closure problem of fluid mechanics.

We close the equation~\eqref{eq:VorticityEquationAveragedCP} by assuming
a functional relation between the vorticity flux, the independent variables, the averaged stream function~$\bar\psi$ and its various derivatives,
\[
 \nabla\cdot\overline{\zeta'\mathbf v'}=V[\bar\psi].
\]
Here the notation $V=V[\bar\psi]$ indicates that $V$ is a differential function of $\bar\psi$,
i.e., a smooth function of $t$, $x$, $y$, $\bar\psi$ and derivatives of~$\bar\psi$ with respect to $t$, $x$ and~$y$.
Introducing this closure ansatz in~\eqref{eq:VorticityEquationAveragedCP} leads to
\begin{equation}\label{eq:VorticityEquationAveragedClosedCP}
 \zeta_t + \psi_x\zeta_y-\psi_y\zeta_x=V[\psi],\qquad \zeta:=\psi_{xx}+\psi_{yy},
\end{equation}
where we have omitted bars over the averaged quantity for the sake of notational simplicity. As this system is now closed, i.e., it does not involve unresolved quantities any more, this notation is consistent. The problem is now to specify the functional form of $V$ in such a manner that the equation~\eqref{eq:VorticityEquationAveragedClosedCP} preserves some of the conservation laws of the original vorticity equation~\eqref{eq:VorticityEquationCP}.

More specifically, we are interested in preserving the conservation laws of~\eqref{eq:VorticityEquationCP} with zeroth-order characteristics, which are in the span
\begin{equation}\label{eq:0th-orderCharSpaceOfVorticityEq}
\langle h(t),\, f(t)x,\, g(t)y,\, \psi\rangle.
\end{equation}
Physically, these conservation-law characteristics are associated with the conservation of generalized circulation, generalized momenta in $x$- and $y$-directions and energy, respectively. All of these conservation laws are of superior importance in fluid mechanics and hence it is natural to attempt finding parameterization schemes of the general form~\eqref{eq:VorticityEquationAveragedClosedCP} that preserve these conservation laws in the averaged model.

In order to determine the functional form of $V$,
we can use the theory laid down in Section~\ref{sec:ConservativeParameterizationWithDirectMethods}.
Before constructing $V$, it is instructive to check directly
that the vorticity equation~\eqref{eq:VorticityEquationCP} itself can be brought
into the form required by Corollary~\ref{cor:FormOfEqsWithHugeNumberOfCLs}.
Thus, the repeated `integration by parts' of the Jacobian $\psi_x\zeta_y-\psi_y\zeta_x$ leads to the expression
\[
 \psi_x\zeta_y-\psi_y\zeta_x=(\mathrm D_y^2-\mathrm D_x^2)(\psi_x\psi_y)+ \mathrm D_x\mathrm D_y(\psi_x^2-\psi_y^2).
\]
Then, we can represent the left hand side of the vorticity equation~\eqref{eq:VorticityEquationCP} as
\[
 \zeta_t+\psi_x\zeta_y-\psi_y\zeta_x=\mathrm D_x^2(\psi_t-\psi_x\psi_y)+ \mathrm D_x\mathrm D_y(\psi_x^2-\psi_y^2)+\mathrm D_y^2(\psi_t+\psi_x\psi_y),
\]
which is in accordance with Corollary~\ref{cor:FormOfEqsWithHugeNumberOfCLs}.
In view of this corollary, a similar representation also holds for the differential function $V[\psi]$, i.e.,
\begin{equation}\label{eq:GenConservativeParameterizationForV}
V[\psi]=\mathrm D_x^2F^{11}+\mathrm D_x\mathrm D_yF^{12}+\mathrm D_y^2F^{22}.
\end{equation}
with some differential functions $F^{11}$, $F^{12}$ and $F^{22}$ of $\psi$,
if and only if the closed vorticity equation~\eqref{eq:VorticityEquationAveragedClosedCP}
still admits generalized circulation and momenta as conservation laws.

The remaining task is now to find the restricted form of $V$ if a parameterization for the vorticity equation should also preserve energy, i.e., the characteristic $\psi$. In this case
\[
 \psi V=\psi(\mathrm D_x^2F^{11}+\mathrm D_x\mathrm D_yF^{12}+ \mathrm D_y^2F^{22})=\mathop{\rm Div}H
\]
for some triple of differential functions~$H$.
Using `integration by parts' in the above equation, we obtain
\[
 \psi_{xx}F^{11}+\psi_{xy}F^{12}+\psi_{yy}F^{22}=\mathop{\rm Div}Q,
\]
for another triple of differential functions $Q$. This is a single inhomogeneous linear algebraic equation for the components $F^{11}$, $F^{12}$ and $F^{22}$. The solution of this equation can be represented in a symmetric way as
\begin{align*}
 &F^{11}=\psi_{yy}P^2-\psi_{xy}P^3+R^1,\\
 &F^{12}=\psi_{xx}P^3-\psi_{yy}P^1+R^2,\\
 &F^{22}=\psi_{xy}P^1-\psi_{xx}P^2+R^3,
\end{align*}
where $P^i=P^i[\psi]$ are arbitrary differential functions of $\psi$, $i=1,2,3$, and
the triple of differential functions $R^i=R^i[\psi]$ is a particular solution of the equation,
\[
 \psi_{xx}R^1+\psi_{xy}R^2+\psi_{yy}R^3=\mathop{\rm Div}Q.
\]
A nice particular solution satisfies the additional constraints $R^2=0$, $R^1=R^3$, which gives
\[R^2=0,\quad R^1=R^3=\frac1\zeta\mathop{\rm Div}Q.\]
The possible singularity in points where the vorticity vanishes can be compensated by vanishing $\mathop{\rm Div}Q$ in the same points.
For example, if $Q=\zeta^2S$ for some triple $S$ of differential functions of~$\psi$, then
$\mathop{\rm Div}Q=\zeta^2\mathop{\rm Div}S+2\zeta(S^1\zeta_t+S^2\zeta_x+S^3\zeta_y)$.

The substitution of the above solution into~\eqref{eq:GenConservativeParameterizationForV} leads to the following assertion:

\begin{proposition}\label{pro:ConservativeParameterizationOfVortisityEq}
If the unclosed vorticity flux $\nabla\cdot\overline{\zeta'\mathbf v'}$ is parameterized by $V[\psi]$ in the Reynolds averaged vorticity equation, where $V[\psi]$ is given through
\begin{gather*}
V[\psi]=\mathrm D_x^2(\psi_{yy}P^2-\psi_{xy}P^3)+ \mathrm D_x\mathrm D_y(\psi_{xx}P^3-\psi_{yy}P^1)+\mathrm D_y^2(\psi_{xy}P^1- \psi_{xx}P^2)\\
\phantom{V[\psi]=}{}+(\mathrm D_x^2+\mathrm D_y^2)(\zeta\mathop{\rm Div}S+2S^1\zeta_t+2S^2\zeta_x+2S^3\zeta_y)
\end{gather*}
for some differential functions $P^i$ and $S^i$, $i=1,2,3$, of $\psi$ the resulting closed equation~\eqref{eq:VorticityEquationAveragedClosedCP} possesses the conservation laws associated with characteristics $h(t)$, $f(t)x$, $h(t)y$ and $\psi$. That is, the closed equation will preserve generalized circulation, generalized momenta in $x$- and $y$-direction and energy.
\end{proposition}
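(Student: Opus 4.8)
The plan is to check, for each of the four prescribed zero-order characteristics $\lambda\in\{h(t),\,f(t)x,\,g(t)y,\,\psi\}$, that $\lambda L\in\mathop{\rm im}\mathop{\rm Div}$, where $L:=\zeta_t+\psi_x\zeta_y-\psi_y\zeta_x-V[\psi]$ is the left-hand side of the closed equation~\eqref{eq:VorticityEquationAveragedClosedCP}; since these characteristics are of order zero, confining them to the solution set of the equation is trivial, so this is precisely the condition for $\lambda$ to be a conservation law characteristic. The first step is to combine the identity $\zeta_t+\psi_x\zeta_y-\psi_y\zeta_x=\mathrm D_x^2(\psi_t-\psi_x\psi_y)+\mathrm D_x\mathrm D_y(\psi_x^2-\psi_y^2)+\mathrm D_y^2(\psi_t+\psi_x\psi_y)$ established above with the prescribed form~\eqref{eq:GenConservativeParameterizationForV} of~$V$, so as to write $L=\mathrm D_x^2A+\mathrm D_x\mathrm D_yB+\mathrm D_y^2C$, where $A=\psi_t-\psi_x\psi_y-F^{11}$, $B=\psi_x^2-\psi_y^2-F^{12}$, $C=\psi_t+\psi_x\psi_y-F^{22}$ and $F^{11},F^{12},F^{22}$ are read off from the statement.

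For the characteristics $h(t)$, $f(t)x$ and $g(t)y$ nothing more is needed: $L$ is now of the form $\sum_{i,j\in\{x,y\}}\mathrm D_i\mathrm D_jK^{ij}$ required in Corollary~\ref{cor:FormOfEqsWithHugeNumberOfCLs} (take, e.g., $K^{xx}=A$, $K^{xy}=K^{yx}=\tfrac12B$, $K^{yy}=C$), so that corollary yields that every $h(t)+f(t)x+g(t)y$ with arbitrary smooth $h,f,g$ is a conservation law characteristic; specializing to $f=g=0$, then $h=g=0$, then $h=f=0$ gives the three characteristics individually, that is, generalized circulation and the generalized $x$- and $y$-momenta. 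The particular choice of the $F^{ij}$ plays no role for this part.

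The substantive part is the characteristic $\psi$ (energy). Applying `integration by parts' twice to each of $\psi\mathrm D_x^2A$, $\psi\mathrm D_x\mathrm D_yB$, $\psi\mathrm D_y^2C$ reduces $\psi L$ modulo $\mathop{\rm im}\mathop{\rm Div}$ to $\psi_{xx}A+\psi_{xy}B+\psi_{yy}C$, which splits as $[\psi_{xx}(\psi_t-\psi_x\psi_y)+\psi_{xy}(\psi_x^2-\psi_y^2)+\psi_{yy}(\psi_t+\psi_x\psi_y)]-[\psi_{xx}F^{11}+\psi_{xy}F^{12}+\psi_{yy}F^{22}]$. The first bracket is the classical energy density of the vorticity equation and lies in $\mathop{\rm im}\mathop{\rm Div}$; this is checked by `integration by parts', using $(\psi_{xx}+\psi_{yy})\psi_t=\mathrm D_x(\psi_x\psi_t)+\mathrm D_y(\psi_y\psi_t)-\tfrac12\mathrm D_t(\psi_x^2+\psi_y^2)$ for the time part and the analogous reduction of the remaining cubic terms. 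For the second bracket one substitutes the explicit $F^{ij}$ from the statement: the six terms carrying $P^1,P^2,P^3$ cancel pairwise (this is exactly the point of the skew-symmetric form of the general solution of the linear algebraic equation $\psi_{xx}F^{11}+\psi_{xy}F^{12}+\psi_{yy}F^{22}=\mathop{\rm Div}Q$), and what remains is $(\psi_{xx}+\psi_{yy})R^1=\zeta R^1$ with $R^1=\zeta^{-1}\mathop{\rm Div}Q$ and $Q=\zeta^2S$, whence $\zeta R^1=\mathop{\rm Div}Q\in\mathop{\rm im}\mathop{\rm Div}$. Therefore $\psi L\in\mathop{\rm im}\mathop{\rm Div}$, so $\psi$ is a conservation law characteristic and energy is preserved.

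The only delicate point is bookkeeping: tracking the total-divergence remainders through the repeated `integration by parts' and confirming the clean pairwise cancellation of the $P^i$-terms, together with the observation that the choice $Q=\zeta^2S$ makes $R^1$ a genuine, singularity-free differential function while keeping $\zeta R^1$ a total divergence. No deeper obstacle arises, because the form of $V$ in the proposition was reverse-engineered precisely to produce these cancellations, and the constructions preceding the statement already supply all the required ingredients.
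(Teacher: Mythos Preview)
Your proof is correct and follows essentially the same approach as the paper's derivation preceding the proposition: invoke Corollary~\ref{cor:FormOfEqsWithHugeNumberOfCLs} for the characteristics $h(t)$, $f(t)x$, $g(t)y$, and for the characteristic~$\psi$ reduce $\psi L$ via `integration by parts' to $\psi_{xx}F^{11}+\psi_{xy}F^{12}+\psi_{yy}F^{22}$, after which the pairwise cancellation of the $P^i$-terms and the identity $\zeta R^1=\mathop{\rm Div}(\zeta^2S)$ finish the argument. The only cosmetic difference is that the paper works directly with~$\psi V$ (tacitly using that $\psi$ is already a characteristic of the original vorticity equation), whereas you split $\psi L$ into the vorticity-equation part and the $V$-part and verify both explicitly.
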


The expression for~$V$ given in Proposition~\ref{pro:ConservativeParameterizationOfVortisityEq}
is still too general.
It can be considered as ansatz for~$V$ that should be further specified from the physical point of view.
Various additional constraints can be imposed on~$P^i$ and $S^i$
for the parameterized equation~\eqref{eq:VorticityEquationAveragedClosedCP}
to possess other required properties such as the preservation of Lie symmetries or the consistency with the structure of the initial equation~\eqref{eq:VorticityEquationCP}.
In particular, we can set $S^1=0$ and choose the other differential functions~$P^i$ and~$S^i$
not to depend on derivatives of~$\psi$ that involve differentiation with respect to~$t$.
Then $V$ has the same property.

\section{Research perspective for inverse problem on conservation laws}\label{sec:ConclusionCP}

In this paper we have, for the first time, explicitly introduced the inverse problem on conservation laws and rigorously stated it in important particular cases. As for the analogous inverse problem on group classification, posing this problem is motivated by several practical applications.

We have solved the inverse problem on conservation laws for single ordinary differential equations, for single evolution equations and for single general partial differential equations that admit infinite dimensional spaces of zeroth-order characteristics of simple structure.
Nevertheless, the methods introduced here can be carried over to the more complicated case of systems of differential equations. A main difficulty while extending the results derived in this paper to the case of systems of differential equations is that the formulas involved will become more cumbersome, but this poses mostly a technical hurdle.

There are several additional areas of importance in this subject that warrant further investigation.

A problem of high practical relevance is to study partial differential equations with general infinite dimensional spaces of conservation-law characteristics. From the viewpoint of real-world applications it is required to study conservative parameterization schemes that preserve conservation laws with characteristics of order higher than zero. For example, it is well know that in addition to the zeroth-order characteristics~\eqref{eq:0th-orderCharSpaceOfVorticityEq} the vorticity equation~\eqref{eq:VorticityEquationCP} also admits the family of characteristics $\{h(\zeta)\}$, where $h$ is an arbitrary smooth function of the vorticity~$\zeta=\psi_{xx}+\psi_{yy}$. At least some of these characteristics, e.g.~$h=\zeta$ corresponding to preservation of enstrophy, are physically essential.
The left hand side of the vorticity equation~\eqref{eq:VorticityEquationCP}
can be represented in the form~\eqref{eq:GeneralRepresentationForPDEsWithCharsBeingArbitraryFunctionOfOmega},
where $(x_1,x_2,x_3)=(t,x,y)$, $u=\psi$, $\omega=\zeta$ and
\[
(G^{ij}[\psi])=\begin{pmatrix}0&0&y\\0&0&-\psi\\-y&\psi&0\end{pmatrix},
\]
cf.\ Remark~\ref{rem:PDEsWithCharsBeingArbitraryDiffFunctionOfOmega}.
A problem is to prove that any equation possessing the family of characteristics $\{h(\zeta)\}$
can be represented in the same form with certain functions $G^{ij}=-G^{ji}$.
Then, to describe equations admitting all the above characteristics of the vorticity equation,
one should merge this representation
with the representation given in Proposition~\ref{pro:ConservativeParameterizationOfVortisityEq},
which is not trivial.

The parameterization schemes constructed in Section~\ref{sec:ConservativeParameterizationVorticityEquation} for the eddy-vorticity flux in the incompressible Euler equations give, to the best of our knowledge, the first example for systematically finding closure schemes that lead to parameterized models preserving conservation laws of the initial model. At the same time, the obtained parameterizations are physically quite simple as they are of first order. Modern state-of-the-art parameterization schemes depend not only on the mean values of the equation variables but can include higher-order correlation terms as well. An example for a symmetry-preserving higher-order parameterization schemes was recently given in~\cite{bihl15a}. Extending the methods for finding conservative parameterization schemes to more complicated closure ansatzes will be an important problem of future investigations.

Although it is natural to consider the inverse problem on conservation laws as a problem in itself, in practical applications it might be relevant to construct systems of differential equations that possess both certain conservation laws and a prescribed symmetry group. In other words, it will be necessary to consider the \emph{joint inverse problem on conservation laws and group classification}, cf.\ \cite{ande1994b,ande1995c}. This is important in the study of parametrization schemes as not only conservation laws have to be respected when deriving a closure model. Other geometric properties as for example Lie symmetries might have to be preserved as well, as discussed e.g.\ in~\cite{bihl12Dy,bihl11Fy,ober97Ay,popo10Cy,sten07Ay,stul88Ay}.

Lastly, another relevant study will be the inverse problem on conservation laws for difference equations. This problem is intimately linked to the problem on conservative discretization, which is of immediate practical relevance. First results on applying the mathematical machinery on conservation laws to finding conservative discretization schemes were given in~\cite{wan13Ay}, and more extended investigations on this problem are currently underway.

\section*{Acknowledgements}

The authors are grateful to the anonymous referees and the editor for a number of valuable remarks and suggestions. 
The authors also thank Michael Kunzinger, Artur Sergyeyev and Elsa Maria Dos Santos Cardoso-Bihlo for helpful discussions.
This research was undertaken, in part, thanks to funding from the Canada Research Chairs program, the NSERC Discovery Grant program and the InnovateNL LeverageR{\&}D program. 
The research of ROP was supported by the Austrian Science Fund (FWF), project P25064. 

\footnotesize\setlength{\itemsep}{0ex}

\end{document}